\newcommand{\tabincell}[2]{\begin{tabular}{@{}#1@{}}#2\end{tabular}}
\def\BibTeX{{\rm B\kern-.05em{\sc i\kern-.025em b}\kern-.08em
    T\kern-.1667em\lower.7ex\hbox{E}\kern-.125emX}}
\definecolor{green}{rgb}{0.0, 0.55, 0.13}
\begin{document}
\title{Efficient Enumeration of Large Maximal $k$-Plexes}
% \titlenote{Produces the permission block, and copyright information}
% \subtitle{Extended Abstract}
% \subtitlenote{The full version of the author's guide is available as
%   \texttt{acmart.pdf} document}

\author{Qihao Cheng}
% \authornotemark[1]
\authornote{The first three authors contributed equally to this work.}
\affiliation{%
  \institution{Tsinghua University}
  \country{} 
  %\streetaddress{P.O. Box 1212}
  % \city{Beijing}
  % \country{China}
  %\postcode{43017-6221}
}
\email{cqh22@mails.tsinghua.edu.cn}

\author{Da Yan}
\authornotemark[1]
% \authornote{Equal Contribution.}
%\orcid{0000-0002-1825-0097}
\affiliation{%
  \institution{Indiana University Bloomington}
  \country{} 
  %\streetaddress{1 Th{\o}rv{\"a}ld Circle}
  % \city{Bloomington}
  % \country{The United States}
}
\email{yanda@iu.edu}

\author{Tianhao Wu}
\authornotemark[1]
% \authornote{Equal Contribution.}
%\orcid{0000-0001-5109-3700}
\affiliation{%
  \institution{Tsinghua University}
  \country{} 
}
\email{wuth20@mails.tsinghua.edu.cn}

\author{Lyuheng Yuan}
\affiliation{%
  \institution{Indiana University Bloomington}
  \country{} 
  % \city{Bloomington}
  % \country{The United States}
}
\email{lyyuan@iu.edu}

\author{Ji Cheng}
\affiliation{%
  \institution{HKUST}
  \country{} 
  % \city{Hong Kong}
  % \country{China}
}
\email{jchengac@connect.ust.hk}

\author{Zhongyi Huang}
\affiliation{%
  \institution{Tsinghua University}
  \country{} 
  % \city{City}
  % \country{country}
}
\email{zhongyih@tsinghua.edu.cn}

\author{Yang Zhou}
\affiliation{%
  \institution{Auburn University}
  \country{} 
  % \city{City}
  % \country{country}
}
\email{yangzhou@auburn.edu}

% The default list of authors is too long for headers}
% \renewcommand{\shortauthors}{B. Trovato et al.}
\renewcommand{\shortauthors}{}

\begin{abstract}
  Finding cohesive subgraphs in a large graph has many important applications, such as community detection and biological network analysis. Clique is often a too strict cohesive structure since communities or biological modules rarely form as cliques for various reasons such as data noise. Therefore, $k$-plex is introduced as a popular clique relaxation, which is a graph where every vertex is adjacent to all but at most $k$ vertices. In this paper, we propose a fast branch-and-bound algorithm as well as its task-based parallel version to enumerate all maximal $k$-plexes with at least $q$ vertices. Our algorithm adopts an effective search space partitioning approach that provides a lower time complexity, a new pivot vertex selection method that reduces candidate vertex size, an effective upper-bounding technique to prune useless branches, and three novel pruning techniques by vertex pairs. Our parallel algorithm uses a timeout mechanism to eliminate straggler tasks, and maximizes cache locality while ensuring load balancing. Extensive experiments show that compared with the state-of-the-art algorithms, our sequential and parallel algorithms enumerate large maximal $k$-plexes with up to $5 \times$ and $18.9 \times$ speedup, respectively. Ablation results also demonstrate that our pruning techniques bring up to $7 \times$ speedup compared with our basic algorithm. Our code is released at \url{https://github.com/chengqihao/Maximal-kPlex}.
\end{abstract}

%
% % The code below should be generated by the tool at
% % http://dl.acm.org/ccs.cfm
% % Please copy and paste the code instead of the example below. 
% %
% \begin{CCSXML}
% <ccs2012>
%  <concept>
%   <concept_id>10010520.10010553.10010562</concept_id>
%   <concept_desc>Computer systems organization~Embedded systems</concept_desc>
%   <concept_significance>500</concept_significance>
%  </concept>
%  <concept>
%   <concept_id>10010520.10010575.10010755</concept_id>
%   <concept_desc>Computer systems organization~Redundancy</concept_desc>
%   <concept_significance>300</concept_significance>
%  </concept>
%  <concept>
%   <concept_id>10010520.10010553.10010554</concept_id>
%   <concept_desc>Computer systems organization~Robotics</concept_desc>
%   <concept_significance>100</concept_significance>
%  </concept>
%  <concept>
%   <concept_id>10003033.10003083.10003095</concept_id>
%   <concept_desc>Networks~Network reliability</concept_desc>
%   <concept_significance>100</concept_significance>
%  </concept>
% </ccs2012>  
% \end{CCSXML}
% 
% \ccsdesc[500]{Computer systems organization~Embedded systems}
% \ccsdesc[300]{Computer systems organization~Redundancy}
% \ccsdesc{Computer systems organization~Robotics}
% \ccsdesc[100]{Networks~Network reliability}

% \keywords{ACM proceedings, \LaTeX, text tagging}

%% A "teaser" image appears between the author and affiliation
%% information and the body of the document, and typically spans the
%% page.
% \begin{teaserfigure}
%   \includegraphics[width=\textwidth]{}
%   \caption{Seattle Mariners at Spring Training, 2010.}
%   \label{fig:teaser}
% \end{teaserfigure}

\maketitle

\section{Introduction}
Finding cohesive subgraphs in a large graph is useful in various applications, such as finding protein complexes or biologically relevant functional groups~\cite{bader2003automated,bu2003topological,hu2005mining,ucar2006improving}, and social communities~\cite{li2014uncovering,hopcroft2004tracking} that can correspond to cybercriminals~\cite{weiss2015tracking}, botnets~\cite{ecrime,weiss2015tracking} and spam/phishing email sources~\cite{sac,sheng2009empirical}. One classic notion of cohesive subgraph is {\em clique} which requires every pair of distinct vertices to be connected by an edge. However, in real graphs, communities rarely appear in the form of cliques due to various reasons such as the existence of data noise~\cite{pattillo2013clique,www22maximal,d2k,kdd17}.

As a relaxed clique model, $k$-plex was first introduced in~\cite{seidman1978graph}, which is a graph where every vertex is adjacent to all but at most $k$ vertices. It has found extensive applications in the analysis of social networks~\cite{seidman1978graph}, especially in the community detection~\cite{pattillo2013clique,d2k}. However, mining $k$-plexes is NP-hard~\cite{DBLP:journals/jcss/LewisY80,DBLP:journals/ior/BalasundaramBH11}, so existing algorithms rely on branch-and-bound search which runs in exponential time in the worst case. Many recent works have studied the branch-and-bound algorithms for mining maximal $k$-plexes~\cite{d2k,aaai2020maximal,cikm22maximal,www22maximal} and finding a maximum $k$-plex~\cite{aaai17maximum,gao18ijcai,jiang21ijcai,colorBound,vldb22maximum}, with various techniques proposed to prune the search space. We will review these works in Section~\ref{sec:related}.

In this paper, we study the problem of enumerating all maximal $k$-plexes with at least $q$ vertices (and are hence large and statistically significant), and propose a more efficient branch-and-bound algorithm with new search-space pruning techniques and parallelization techniques to speed up the computation. Our search algorithm treats each set-enumeration subtree as an independent task, so that different tasks can be processed in parallel. Our main contributions are as follows:
\begin{itemize}
\item We propose a method for search space partitioning to create independent searching tasks, and show that its time complexity is $O\left(nr_1^kr_2\gamma_k^{D}\right)$, where $n$ is the number of vertices, $D$ is the graph degeneracy, and let $\Delta$ be the maximum degree, then $r_1=\min\left\{\frac{D\Delta}{q-2k+2},n\right\}$, $r_2 = \min\left\{\frac{D\Delta^2}{q-2k+2},nD\right\}$, and $\gamma_k<2$ is a constant close to 2.

\item We propose a new approach to selecting a pivot vertex to expand the current $k$-plex by maximizing the number of saturated vertices (i.e., those vertices whose degree is the minimum allowed to form a valid $k$-plex) in the $k$-plex. This approach effectively reduces the number of candidate vertices to expand the current $k$-plex.

\item We design an effective upper bound on the maximum size of any $k$-plex that can be expanded from the current $k$-plex $P$, so that if this upper bound is less than the user-specified size threshold, then the entire search branch originated from $P$ can be pruned.

\item We propose three novel effective pruning techniques by vertex pairs, and they are integrated into our algorithm to further prune the search space.

\item We propose a task-based parallel computing approach over our algorithm to achieve ideal speedup, integrated with a timeout mechanism to eliminate straggler tasks.

\item We conduct comprehensive experiments to verify the effectiveness of our techniques, and to demonstrate our superior performance over existing solutions.
\end{itemize}

\vspace{-1mm}
The rest of this paper is organized as follows. Section~\ref{sec:related} reviews the related work, and Section~\ref{sec:def} defines our problem and presents some basic properties of $k$-plexes. Then, Section~\ref{sec:algo} describes the branch-and-bound framework of our mining algorithm, Section~\ref{sec:prune} further describes the search space pruning techniques, and Section~\ref{sec:parallel} presents our task-based parallelization approach. Finally, Section~\ref{sec:results} reports our experiments, and Section~\ref{sec:conclude} concludes this paper.

% \textcolor{blue}{
% \section{Related Work}\label{sec:related}}

% \vspace{-1.5mm}

\section{Related Work}\label{sec:related}
% \vspace{-0.5mm}
\noindent{\bf Maximal $k$-Plex Finding.} The Bron-Kerbosch (BK) algorithm~\cite{bk} is a backtracking search algorithm to enumerate maximal cliques, and can be extended to enumerate maximal $k$-plexes (see Section~\ref{sec:algo} for details). Many BK-style algorithms are proposed with various effective search space pruning techniques. Specifically, D2K~\cite{d2k} proposes a simple pivoting technique to cut useless branches, which generalizes the pivoting technique for maximal clique finding. A more effective pivoting technique is found by FaPlexen~\cite{aaai2020maximal}. More recently, FP~\cite{cikm22maximal} adopts a new pivoting technique and uses upper-bound-based pruning for maximal $k$-plex enumeration, but the time complexity is still the same as previous works~\cite{d2k, aaai2020maximal}, which is improved by our current work. ListPlex~\cite{www22maximal} adopts the sub-tasking scheme that partitions the search space efficiently, but it uses the less effective pivoting and branching schemes of FaPlexen, which is avoided in our current work. None of the works has considered effective vertex-pairs pruning techniques proposed in this paper.

Besides BK, maximal $k$-plexes can also enumerated by a reverse search framework~\cite{sigmod15}. The key insight is that given a valid $k$-plex $P$, it is possible to find another valid one by excluding some existing vertices from and including some new ones to $P$. Starting from an initial solution, \cite{sigmod15} conducts DFS over the solution graph to enumerate all solutions. While the algorithm provides a polynomial delay (i.e., time of waiting for the next valid $k$-plex) so that it is guaranteed to find some solutions in bounded time, it is less efficient than BK when the goal is to enumerate all maximal $k$-plexes. Reverse search has also been adapted to work with bipartite graphs~\cite{maximal_bi}.

%Many algorithms improve the Bron-Kerbosch algorithm by proposing effective search space pruning techniques. D2K~\cite{d2k} proposes a simple pivoting technique to cut useless branches, which generalizes the pivoting technique for maximal clique finding. FaPlexen~\cite{aaai2020maximal} proposes the pivoting technique that finds $v_p\in P$ and uses Eq~(\ref{eq:branch1})--Eq~(\ref{eq:branch3}) for branching, and found that it competes favorably with D2K. FP~\cite{cikm22maximal} adopts a pivoting technique most similar to ours, but it does not prioritize $P$ over $C$ as in our Algorithm~\ref{alg::2} Lines~\ref{alg2:line9}--\ref{alg2:line10} to maximize the number of saturated vertices in $P$ (to minimize $|C|$). Moreover, while FP uses upper-bound-based pruning, it starts branch-and-bound search over each seed graph $G_i$ from $v_i$ (with $C$ including second-hop neighbors) rather than from $S\cup\{v_i\}$ for $S\subseteq N^2_{G_i}(v_i)$ (with $C$ including only direct neighbors of $v_i$), so the branch-and-bound procedure is called for $O(\gamma_k^{|C|})\leq O(\gamma_k^{n})$ times rather than $O(\gamma_k^{D})$ times as in our case. ListPlex~\cite{www22maximal} adopts the sub-tasking scheme similar to ours, but it uses less effective pivoting and branching schemes of FaPlexen~\cite{aaai2020maximal} and does not apply effective vertex-pairs pruning such as our Theorems~\ref{reduction1}, \ref{reduction3} and~\ref{reduction2}.

\vspace{1mm}
\noindent{\bf Maximum $k$-Plex Finding.} Conte et al.~\cite{kdd17} notice that any node of a $k$-plex $P$ with $|P|\geq q$ is included in a clique of size at least $\lceil q/k\rceil$, which is used to prune invalid nodes. However, it is necessary to enumerate all maximal cliques which is expensive per se. To find a maximum $k$-plex, \cite{kdd17} uses binary search to guess the maximum $k$-plex size for vertex pruning, and then mines $k$-plexes on the pruned graph to see if such a maximum $k$-plex can be found, and if not, the maximum $k$-plex size threshold is properly adjusted for another round of search. However, this approach may fail for multiple iterations before finding a maximum $k$-plex, so is less efficient than the branch-and-bound algorithms.

BS~\cite{aaai17maximum} pioneers a number of pruning techniques in the brand-and-bound framework for finding a maximum $k$-plex, including the pivoting technique of FaPlexen~\cite{aaai2020maximal}. In maximum $k$-plex finding, if the current maximum $k$-plex is $P$, then we can prune any branch that cannot generate a $k$-plex with at least $|P|+1$ vertices (i.e., upper bound $\leq |P|+1$). BnB~\cite{gao18ijcai} proposes upper bounds and pruning techniques based on deep structural analysis, KpLeX~\cite{jiang21ijcai} proposes an upper bound based on vertex partitioning, and Maplex~\cite{colorBound} proposes an upper bound based on graph coloring which is later improved by RGB~\cite{rgb}. kPlexS~\cite{vldb22maximum} proposes a CTCP technique to prune the vertices and edges using the second-order property, %as established by our Theorem~\ref{lemma::2nd_order}, 
and it shows that the reduced graph by CTCP is guaranteed to be no larger than that computed by BnB, Maplex and KpLeX. kPlexS also proposed new techniques for branching and pruning, %specific to maximum $k$-plex finding, 
and outperforms BnB, Maplex and KpLeX. In~\cite{DBLP:conf/ijcai/WangZLX23, DBLP:conf/ijcai/JiangXZWZ23}, to find maximum $k$-plexes, an algorithm for the $d$-BDD problem is applied and a refined upper bound is proposed. %Yu and Long~\cite{maximum_bi} proposed a branch-and-bound algorithm for finding a maximum $k$-biplex (one with the most edges) in a bipartite graph. 

\vspace{1mm}
\noindent{\bf Other Dense Subgraphs.} There are other definitions of dense subgraphs. Specifically, \cite{DBLP:conf/approx/Charikar00} finds subgraphs to maximize the average degree~\cite{DBLP:conf/approx/Charikar00} solvable by a flow-based algorithm,  \cite{DBLP:journals/algorithmica/FeigePK01} finds the $k$-vertex subgraph with the most edges, \cite{DBLP:journals/dam/AsahiroHI02} finds $k$-vertex subgraphs with at least $f(k)$ edges for an edge-density function $f(.)$, and \cite{DBLP:conf/kdd/TsourakakisBGGT13} proposes a density measure based on edge surplus to extract a higher-quality subgraph called optimal quasi-clique. However, those problems are very expensive and solved by approximate algorithms, while we target exact $k$-plex solutions.

Besides $k$-plex, $\gamma$-quasi-clique is the other popular type of clique relaxation whose exact algorithms gained a lot of attention. Branch-and-bound algorithms Crochet~\cite{Pei05,Pei09}, Cocain~\cite{cocain}, and Quick~\cite{quick} mine maximal $\gamma$-quasi-cliques exactly, and parallel and distributed algorithms have also been developed by our prior works~\cite{DBLP:journals/pvldb/Guo0O0K20,DBLP:journals/vldb/KhalilYGY22,DBLP:conf/icde/GuoYYKLJZ22}. Unlike $k$-plex where the restriction at each vertex is on the absolute number of missing edges allowed, $\gamma$-quasi-clique places this restriction on the ratio of missing edges (i.e., $(1-\alpha)$ fraction) at each vertex. This difference makes $\gamma$-quasi-clique not satisfying the hereditary property as in $k$-plexes and cliques~\cite{pattillo2013maximum}, making the BK algorithm not applicable, so more expensive branch-and-bound algorithms with sophisticated pruning rules to check are needed.

\renewcommand{\arraystretch}{1.3} 
\begin{table}[t]
 \scriptsize
\centering
  \caption{List of Important Notations}
  % \vspace{-2mm}
  \label{tbl-notation}
  \resizebox{\columnwidth}{!}{
      \begin{tabular}{c|c}
        \toprule[2pt]
        {\bf Notation} & {\bf Description} \\
        \hline
        $P,~C$, and $X$  & the current $k$-plex, candidate set, and exclusive set\\
        \hline
        $S$             & a subset of $N^2_{G_i}(v_i)$ \\
        \hline
        $T_{v_i\cup S}$ &a sub-task for set-enumeration search\\
        \hline
        $\eta$ & the degeneracy ordering of $G$, $\{v_1, v_2, \ldots, v_n\}$ \\
        \hline
        $V_{<\eta}(v_i)$, and $V_{\ge\eta}(v_i)$ & $\{v_1, v_2, \ldots, v_{i-1}\}$, and $\{v_i, v_2, \ldots, v_n\}$\\
        \hline
        $G_i$ & the subgraph induced by vertices in $V_{\ge\eta}(v_i)$ within 2 hops from $v_i$\\
        \hline
        $P_S$, and $C_S$ &a sub-task with $P=\{v_i\}\cup S$, and its candidate set\\
        \hline
        $P_m$ & a maximum $k$-plex containing the current $k$-plex $P$\\
        \hline
        $sup_P(v)$ & the maximum \# of $v$'s non-neighbors outside $P$ that can be added to $P$\\
        \hline
        $ub(P)$ &the upper bound of the maximum k-plex that $P$ can extend to\\
        \bottomrule[2pt]
      \end{tabular}
}
\label{table::notation}
\end{table}

% \vspace{-1mm}
\section{Problem Definition}\label{sec:def}
For ease of presentation, we first define some basic notations. More notations will be defined in Sections~\ref{sec:algo} and~\ref{sec:prune} when describing our algorithm, and Table~\ref{table::notation} lists the important notations for quick lookup.

\vspace{1mm}
\noindent {\bf Notations.} 
We consider an undirected and unweighted simple graph $G=(V, E)$, where $V$ is the set of vertices, and $E$ is the set of edges. We let $n=|V|$ and $m=|E|$ be the number of vertices and the number of edges, respectively. The diameter of $G$, denoted by $\delta(G)$, is the shortest-path distance of the farthest pair of vertices in $G$, measured by the \# of hops.

For each vertex $v\in V$, we use $N^c_G(v)$ to denote the set of vertices with distance exactly $c$ to $v$ in $G$. For example, $N^1_G(v)$ is $v$'s direct neighbors in $G$, which we may also write as $N_G(v)$; and $N^2_G(v)$ is the set of all vertices in $G$ that are 2 hops away from $v$. The degree of a vertex $v$ is denoted by $d_G(v)=|N_G(v)|$, and the maximum vertex degree in $G$ is denoted by $\Delta$.

We also define the concept of {\em non-neighbor}: a vertex $u$ is a non-neighbor of $v$ in $G$ if $(u, v)\not\in E$. Accordingly, the set of non-neighbors of $v$ is denoted by $\overline{N_G}(v)=V-N_G(v)$, and we denote its cardinality by $\overline{d_G}(v)=|\overline{N_G}(v)|$.

Given a vertex subset $S\subseteq V$, we denote by $G[S]=(S, E[S])$ the subgraph of $G$ induced by $S$, where $E[S]=\{(u,v)\in E\,|\,u\in S\wedge v\in S\}$. We simplify the notation $N_{G[S]}(v)$ to $N_S(v)$, and define other notations such as $\overline{N_S}(v)$, $d_S(v)$, $\overline{d_S}(v)$ and $\delta(S)$ in a similar manner.

The $k$-core of an undirected graph $G$ is its largest induced subgraph with minimum degree $k$. The degeneracy of $G$, denoted by $D$, is the largest value of $k$ for which a $k$-core exists in $G$. The degeneracy of a graph may be computed in linear time by a peeling algorithm that repeatedly removes the vertex with the minimum current degree at a time~\cite{bz}, which produces a degeneracy ordering of vertices denoted by $\eta=[v_1, v_2, \ldots, v_n]$. All the consecutively removed vertices with the minimum current degree being $k$ ($k=0, 1, \cdots, D$) constitute a $k$-shell, and in degeneracy ordering, vertices are listed in segments of $k$-shells with increasing $k$. We order vertices in the same $k$-shell by vertex ID (from the input dataset) to make $\eta$ unique, though our tests by shuffling within-shell vertex ordering show that it has a negligible impact on the time difference for our $k$-plex mining. In a real graph, we usually have $D\ll n$. 

\vspace{1mm}
\noindent {\bf Problem Definition.} We next define our mining problem. As a relaxed clique model, a $k$-plex is a subgraph $G[P]$ that allows every vertex $u$ to miss at most $k$ links to vertices of $P$ (including $u$ itself), i.e., $d_P(u)\geq |P|-k$ (or, $\overline{d_P}(u)\leq k$):
\begin{definition}(\emph{$k$-Plex})\label{definition}
    Given an undirected graph $G=(V, E)$ and a positive integer $k$, a set of vertices $P\subseteq V$ is a $k$-plex iff for every $u\in P$, its degree in $G[P]$ is no less than $(|P|-k)$.
\end{definition}

Note that $k$-plex satisfies the hereditary property:
\begin{theorem}\emph{(Hereditariness)}
\label{lemma::hereditary}
   Given a $k$-plex $P\subseteq V$, any subset $P'\subseteq P$ is also a $k$-plex.
\end{theorem}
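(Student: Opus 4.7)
The plan is to reduce the statement to an easy monotonicity observation about non-neighbor counts when restricting to a smaller induced subgraph. The definition says $P$ is a $k$-plex iff every $u\in P$ satisfies $d_P(u)\geq |P|-k$, equivalently $\overline{d_P}(u)\leq k$, where $\overline{N_P}(u)=P\setminus N_P(u)$. So it suffices to verify the same inequality inside $P'$ for an arbitrary $u\in P'$.

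First I would fix an arbitrary $u\in P'$ and compute its non-neighbor set inside $P'$ directly from the definition:
\begin{equation*}
\overline{N_{P'}}(u)=P'\setminus N_{P'}(u)=P'\setminus\bigl(N_P(u)\cap P'\bigr)=P'\cap\overline{N_P}(u).
\end{equation*}
The key step is then the containment $\overline{N_{P'}}(u)\subseteq\overline{N_P}(u)$, which follows immediately from the display above. Taking cardinalities gives $\overline{d_{P'}}(u)\leq\overline{d_P}(u)$.

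Next I would invoke the $k$-plex hypothesis on $P$, which yields $\overline{d_P}(u)\leq k$, and therefore $\overline{d_{P'}}(u)\leq k$. Rewriting in terms of degrees, $d_{P'}(u)=|P'|-\overline{d_{P'}}(u)\geq |P'|-k$. Since $u\in P'$ was arbitrary, every vertex of $P'$ satisfies the $k$-plex degree bound with respect to $P'$, so $G[P']$ is a $k$-plex by Definition~\ref{definition}.

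There is no real obstacle here: the argument is essentially a one-line monotonicity observation, and the only subtlety worth stating clearly is that restricting from $P$ to $P'$ can only remove non-neighbors (never introduce new ones), so the per-vertex non-neighbor count cannot increase. I would present the proof in at most a few lines, with the containment $\overline{N_{P'}}(u)\subseteq\overline{N_P}(u)$ as the single highlighted step.
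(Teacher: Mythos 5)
Your proposal is correct and follows essentially the same argument as the paper: both reduce the claim to the containment $\overline{N_{P'}}(u)\subseteq\overline{N_P}(u)$ for each $u\in P'$, from which $\overline{d_{P'}}(u)\leq\overline{d_P}(u)\leq k$ immediately gives the $k$-plex condition on $P'$. Your explicit computation $\overline{N_{P'}}(u)=P'\cap\overline{N_P}(u)$ is a slightly more detailed justification of that containment, but the route is identical.
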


This is because for any $u\in P'$, we have $u\in P$ and since $P$ is a $k$-plex, $\overline{d_P}(u)=|\overline{N_P}(u)|\leq k$. Since $\overline{N_{P'}}(u)\subseteq \overline{N_P}(u)$, we have $\overline{d_{P'}}(u)=|\overline{N_{P'}}(u)|\leq k$, so $P'$ is also a $k$-plex.

Another important property is that if a $k$-plex $P$ satisfies $|P|>2k-c$, then $G[P]$ is connected with the diameter $\delta(P)\leq c$ ($c\geq 2$)~\cite{aaai17maximum}. A common assumption by existing works~\cite{d2k,cikm22maximal} is the special case when $c=2$:
\begin{theorem}
\label{lemma::diameter}
    Given a $k$-plex $P$, if  $|P|\geq 2k-1$, then $\delta(P)\leq 2$.
\end{theorem}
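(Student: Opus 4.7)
The plan is to use a direct inclusion-exclusion argument on the neighborhoods of any two vertices of $P$, and show that whenever they fail to be adjacent they must share a common neighbor. Since $\delta(P)$ is the maximum shortest-path distance over all vertex pairs in $G[P]$, it suffices to fix an arbitrary pair $u,v\in P$ and show that their distance in $G[P]$ is at most $2$.

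First I would dispose of the easy case: if $(u,v)\in E$, then their distance in $G[P]$ is $1$. So assume $(u,v)\notin E$. Then $u\notin N_P(v)$ and $v\notin N_P(u)$, so both $N_P(u)$ and $N_P(v)$ are subsets of $P\setminus\{u,v\}$, a set of size $|P|-2$. Using the $k$-plex condition from Definition~\ref{definition}, each of $|N_P(u)|$ and $|N_P(v)|$ is at least $|P|-k$.

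Next I would apply the standard union bound inside $P\setminus\{u,v\}$:
\[
|N_P(u)\cap N_P(v)| \;\geq\; |N_P(u)| + |N_P(v)| - |P\setminus\{u,v\}| \;\geq\; 2(|P|-k) - (|P|-2) \;=\; |P|-2k+2.
\]
Plugging in the hypothesis $|P|\geq 2k-1$ gives $|N_P(u)\cap N_P(v)|\geq 1$, so some $w\in P$ is adjacent in $G[P]$ to both $u$ and $v$, yielding a $u$-$w$-$v$ path of length $2$. Hence the distance between $u$ and $v$ in $G[P]$ is at most $2$, and since the pair was arbitrary, $\delta(P)\leq 2$.

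There is no real obstacle here: the argument is pure counting and uses only the definition of $k$-plex and the fact that the two vertices, when non-adjacent, lie outside each other's neighbor sets. The only thing one must be careful about is writing the union bound in the ambient set $P\setminus\{u,v\}$ rather than in $P$, since otherwise one picks up an unnecessary loss of $2$ and the bound degenerates at the borderline $|P|=2k-1$.
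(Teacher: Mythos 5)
Your proof is correct and is essentially the argument the paper uses: it derives Theorem~\ref{lemma::diameter} as the $q=2k-1$ case of the second-order property (Theorem~\ref{lemma::2nd_order}(i)), whose proof is the same count, just phrased via non-neighbors ($|P|\le |N_P(u)\cap N_P(v)|+|\overline{N^*_P}(u)|+|\overline{N^*_P}(v)|$ with $|\overline{N^*_P}(\cdot)|\le k-1$) instead of your union bound inside $P\setminus\{u,v\}$; the two computations are identical and both hit the threshold $|N_P(u)\cap N_P(v)|\ge |P|-2k+2\ge 1$.
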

This is a reasonable assumption since natural communities that $k$-plexes aim to discover are connected, and we are usually interested in only large (hence statistically significant) $k$-plexes with size at least $q$. For $k\leq 5$, we only require $q\geq 2k-1=9$. Note that a $k$-plex with $|P|=2k-2$ may be disconnected, such as one formed by two disjoint $(k-1)$-cliques.

A $k$-plex is said to be maximal if it is not a subgraph of any larger $k$-plex. 
We next formally define our problem: 
\begin{definition}(\emph{Size-Constrained Maximal $k$-Plex Enumeration})
    \label{problem1}
   Given a graph $G=(V, E)$ and an integer size threshold $q\geq 2k-1$, find all the maximal $k$-plexes with at least $q$ vertices. 
\end{definition}

Note that instead of mining $G$ directly, we can shrink $G$ into its $(q-k)$-core for mining, which can be constructed in $O(m+n)$ time using the peeling algorithm that keeps removing those vertices with degree less than $(q-k)$:
\begin{theorem}
    \label{lemma::q-k core}
    Given a graph $G=(V, E)$, all the $k$-plexes with at least $q$ vertices must be contained in the $(q-k)$-core of $G$. 
\end{theorem}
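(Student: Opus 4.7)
The plan is to reduce the claim to a direct consequence of the $k$-plex definition combined with the peeling procedure that defines the $(q-k)$-core. Recall that the $(q-k)$-core is obtained by repeatedly removing any vertex whose current degree is strictly less than $q-k$, until no such vertex remains. I would fix an arbitrary $k$-plex $P \subseteq V$ with $|P| \geq q$ and argue that this peeling never deletes any vertex of $P$; once that is established, $G[P]$ is contained in the $(q-k)$-core and the theorem follows.

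The key observation is that by the $k$-plex definition, every $u \in P$ satisfies $d_P(u) \geq |P| - k \geq q - k$, so $G[P]$ itself already has minimum degree at least $q-k$. The main step is then a minimal-counterexample argument: suppose some vertex of $P$ is eventually removed, and let $v \in P$ be the first such vertex in the peeling order. Immediately before $v$ is removed, the current graph $G'$ still contains every vertex of $P$, so $N_{G'}(v) \supseteq N_P(v)$ and hence $d_{G'}(v) \geq d_P(v) \geq q-k$. This contradicts the peeling rule that $v$ is removed only when $d_{G'}(v) < q-k$.

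There is no real obstacle here; the only care needed is in isolating the first-deleted vertex of $P$, since that is what guarantees the full intra-$P$ neighborhood $N_P(v)$ is still present at the moment of deletion and thus yields the required degree lower bound. An equivalent packaging avoids contradiction entirely by induction on the peeling step, maintaining the invariant $P \subseteq V(G')$ throughout, with the inductive step justified by the same inequality $d_{G'}(u) \geq d_P(u) \geq q-k$ for every $u \in P$.
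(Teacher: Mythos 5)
Your proof is correct and rests on the same key inequality as the paper's one-line argument, namely $d_P(u)\geq |P|-k\geq q-k$ for every $u\in P$; the paper simply stops there and treats containment in the $(q-k)$-core as immediate from the definition of the core as the largest induced subgraph of minimum degree $q-k$. Your additional first-deleted-vertex (or peeling-invariant) step just makes that last routine implication explicit, so this is essentially the same proof.
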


This is because for any vertex $v$ in a $k$-plex $P$, $d_P(v)\geq |P|-k$, and since we require $|P|\geq q$, we have $d_P(v)\geq q-k$. 

% \vspace{-1mm}
\section{Branch-and-Bound Algorithm}\label{sec:algo}
This section describes the branch-and-bound framework of our mining algorithm. Section~\ref{sec:prune} will further describe the pruning techniques that we use to speed up our algorithm.

\begin{figure}[t]
% \vspace{-4mm}
\centering
\includegraphics[width=\columnwidth]{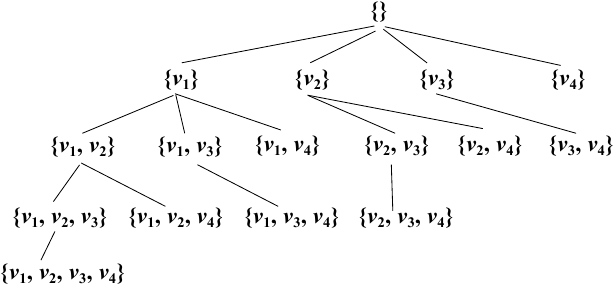}
% \vspace{-6mm}
\caption{Set-Enumeration Search Tree}\label{set_enum}
\end{figure}
% \setlength{\textfloatsep}{5pt}

\begin{comment}
\begin{algorithm}[!t]
\caption{\textit{bron\_kerbosch}$(P, C, {\color{red} X})$}\label{algo:bk}
\begin{algorithmic}[1]
\STATE {\color{red} {\bf if} $C=\emptyset$ {\bf and} $X=\emptyset$ {\bf do\ \ \ \ output} $P$, {\bf return}}\label{bk:line1}
\ForEach{vertex $v_i\in C$}\label{bk:line2}
\STATE $P'\gets P\cup\{v_i\}$, $C\gets C-\{v_i\}$\label{bk:line3}
\STATE $C'\gets \{u\,|\,u\in C\ \wedge\ P'\cup\{u\}\mbox{ is a $k$-plex}\}$\label{bk:line4}
\STATE {\color{red} $X'\gets \{u\,|\,u\in X\ \wedge\ P'\cup\{u\}\mbox{ is a $k$-plex}\}$}\label{bk:line5}
\STATE \textit{bron\_kerbosch}$(P', C', {\color{red} X'})$\label{bk:line6}
\STATE {\color{red} $X\gets X\cup\{v_i\}$}\label{bk:line7}
\ENDFOR
\end{algorithmic}
\end{algorithm}
\setlength{\textfloatsep}{5pt}
\end{comment}

\vspace{1mm}
\noindent {\bf Set-Enumeration Search.} Figure~\ref{set_enum} shows the set-enumeration tree $T$ for a graph $G$ with four vertices $V=\{v_1, v_2, v_3, v_4\}$ where we assume vertex order $v_1<v_2<v_3<v_4$. Each tree node represents a vertex set $P$, and only those vertices larger than the largest vertex in $P$ are used to extend $P$. For example, in Figure~\ref{set_enum}, node $\{v_1,v_3\}$ can be extended with $v_4$ but not $v_2$ since $v_2<v_3$; in fact, $\{v_1,v_2,v_3\}$ is obtained by extending $\{v_1,v_2\}$ with $v_3$.
Let us denote $T_P$ as the subtree of $T$ rooted at a node with set $P$. Then, $T_P$ represents a search space for all possible $k$-plexes that contain all the vertices in $P$.
We represent the task of mining $T_P$ as a pair $\langle P, C\rangle$, where $P$ is the set of vertices assumed to be already included, and $C\subseteq(V-P)$ keeps those vertices that can extend $P$ further into a valid $k$-plex. 
The task of mining $T_P$, i.e., $\langle P, C\rangle$, can be recursively decomposed into tasks that mine the subtrees rooted at the children of $P$ in $T_P$.

Algorithm~\ref{algo:bk} describes how this set-enumeration search process is generated, where we first ignore the red parts, and begin by calling \textit{bron\_kerbosch}$(P=\emptyset, C=V)$. 
Specifically, in each iteration of the for-loop in Lines~\ref{bk:line2}--\ref{bk:line7}, we consider the case where $v_i$ is included into $P$ (see $P'$ in Lines~\ref{bk:line3} and~\ref{bk:line6}). 
Here, Line~\ref{bk:line4} is due to the hereditary property: if $P'\cup\{u\}$ is not a $k$-plex, then any superset of $P'\cup\{u\}$ cannot be a $k$-plex, so $u\not\in C'$. 
Also, Line~\ref{bk:line3} removes $v_i$ from $C$ so in later iterations, $v_i$ is excluded from any subgraph grown from $P$.

Note that while the set-enumeration tree in Figure~\ref{set_enum} ensures no redundancy, i.e., every subset of $V$ will be visited at most once, it does not guarantee set maximality: even if $\{v_1, v_2, v_4\}$ is a $k$-plex, $\{v_2, v_4\}$ will still be visited but it is not maximal.

\begin{algorithm}[!t]
\DontPrintSemicolon
    {\color{red} {\bf if} $C=\emptyset$ {\bf and} $X=\emptyset$ {\bf do}\ \ \ \ \bf{output} $P$, {\bf return}}\label{bk:line1}\;
    \ForEach{vertex $v_i\in C$}{\label{bk:line2}
        $P'\gets P\cup\{v_i\}$, $C\gets C-\{v_i\}$\label{bk:line3}\;
        $C'\gets \{u\,|\,u\in C\ \wedge\ P'\cup\{u\}\mbox{ is a $k$-plex}\}$\label{bk:line4}\;
        {\color{red} $X'\gets \{u\,|\,u\in X\ \wedge\ P'\cup\{u\}\mbox{ is a $k$-plex}\}$}\label{bk:line5}\;
        \textit{bron\_kerbosch}$(P', C', {\color{red} X'})$\label{bk:line6}\;
        {\color{red} $X\gets X\cup\{v_i\}$}\label{bk:line7}\;
    }
\caption{\textit{bron\_kerbosch}$(P, C, {\color{red} X})$}\label{algo:bk}
\end{algorithm}

\vspace{1mm}
\noindent {\bf Bron-Kerbosch Algorithm.} The Bron-Kerbosch algorithm as shown in Algorithm~\ref{algo:bk} avoids outputting non-maximal $k$-plexes with the help of an exclusive set $X$. The algorithm was originally proposed to mine maximal cliques~\cite{bk}, and has been recently adapted for mining maximal $k$-plexes~\cite{aaai2020maximal,d2k}.

Specifically, after each iteration of the for-loop where since we consider the case with $v_i$ included into $P$, we add $v_i$ to $X$ in Line~\ref{bk:line7} so that in later iterations (where $v_i$ is not considered for extending $P$), $v_i$ will be used to check result maximality.

We can redefine the task of mining $T_P$ as a triple $\langle P, C, X\rangle$ with three disjoint sets, where the exclusive set $X$ keeps all those vertices that have been considered before (i.e., added by Line~\ref{bk:line7}), and can extend $P$ to obtain larger $k$-plexes (see Line~\ref{bk:line5} which refines $X$ into $X'$ based on $P'$). Those $k$-plexes should have been found before.

When there is no more candidate to grow $P$ (i.e., $C=\emptyset$ in Line~\ref{bk:line1}), if $X\neq\emptyset$, then based on Line~\ref{bk:line5}, $P\cup\{u\}$ is a $k$-plex for any $u\in X$, so $P$ is not maximal. Otherwise, $P$ is maximal (since such a $u$ does not exist) and outputted. 
For example, let $P=\{v_2, v_4\}$ and $X=\{v_1\}$, then we cannot output $P$ since $\{v_1, v_2, v_4\}\supseteq P$ is also a $k$-plex, so $P$ is not the maximal one.

\vspace{1mm}
\noindent {\bf Initial Tasks.} Referring to Figure~\ref{set_enum} again, the top-level tasks are given by $P=\{v_1\}$, $\{v_2\}$, $\{v_3\}$ and $\{v_4\}$, which are generated by \textit{bron\_kerbosch}$(P=\emptyset, C=V, X=\emptyset)$. It is common to choose the precomputed degeneracy ordering $\eta=[v_1, v_2, \ldots, v_n]$ to conduct the for-loop in Line~\ref{bk:line2}, which was found to generate more load-balanced tasks $T_{\{v_i\}}$~\cite{www22maximal,cikm22maximal,aaai2020maximal,d2k}. 
Intuitively, each vertex $v_i$ is connected to at most $D$ vertices among later candidates $\{v_{i+1}, v_{i+2}, \ldots, v_n\}$ based on the peeling process, and $D$ is usually a small value.

Note that we do not need to mine each $T_{\{v_i\}}$ over the entire $G$. 
Let us define $V_{<\eta}(v_i)=\{v_1, v_2, \ldots, v_{i-1}\}$ and $V_{\ge\eta}(v_i)=\{v_i, v_{i+1}, v_{i+2},$ $\ldots, v_n\}$, then we only need to mine $T_{\{v_i\}}$ over
\begin{equation}\label{eq:gi}
G_i=G\left[V_{\ge\eta}(v_i)\cap\left(\{v_i\}\cup N_G(v_i)\cup N^2_G(v_i)\right)\right],
\end{equation}
since candidates in $C$ must be after $v_i$ in $\eta$, and must be within two hops from $v_i$ according to Theorem~\ref{lemma::diameter}. In fact, since $G_i$ tends to be dense, it is efficient when $G_i$ is represented by an adjacency matrix~\cite{vldb22maximum}. 
We call $v_i$ as a seed vertex, and call $G_i$ as a seed subgraph.

\begin{figure}[t]
\centering
\includegraphics[width=0.64\columnwidth]{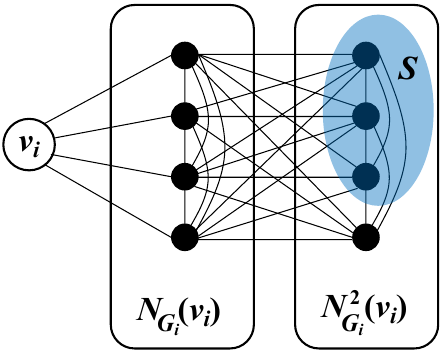}
% \vspace{-2mm}
\caption{Decomposition of Top-Level Task $T_{v_i}$}\label{N2}
%\vspace{-3mm}
\end{figure}

As a further optimization, we decompose $T_{\{v_i\}}$ into disjoint sub-tasks $T_{\{v_i\}\cup S}$ for subsets $S\subseteq N^2_{G_i}(v_i)$, where the vertices of $S$ are the only vertices in $N^2_{G_i}(v_i)$ allowed to appear in a $k$-plex found in $T_{\{v_i\}\cup S}$, and other candidates have to come from $N_{G_i}(v_i)$ (see Figure~\ref{N2}). We only need to consider $|S|<k$ since otherwise, $v_i$ has at least $k$ non-neighbors $S\subseteq N^2_{G_i}(v_i)$, plus $v_i$ itself, $v_i$ misses $(k+1)$ edges which violates the $k$-plex definition, so $\{v_i\}\cup S$ cannot be a $k$-plex, neither can its superset due to the hereditary property.

In summary, each search tree $T_{\{v_i\}}$ creates a {\em task group} sharing the same graph $G_i$ (c.f.\ Eq~(\ref{eq:gi})), where each task mines the search tree $T_{\{v_i\}\cup S}$ for a subset $S\subseteq N^2_{G_i}(v_i)$ with $|S|<k$.

Algorithm~\ref{alg::1} shows the pseudocode for creating initial task groups, where in Line~\ref{algo1:line1} we shrink $G$ into its $(q-k)$-core by Theorem~\ref{lemma::q-k core}, so $n$ is reduced. Line~\ref{algo1:line2} then orders the vertices of $G$ in degeneracy order to keep the size of all $G_i$ small to generate more load-balanced tasks by bounding the candidate size $|C|$. This ordering is also essential for our time complexity analysis in Section~\ref{sec:prune}.%In fact, most works~\cite{cikm22maximal,www22maximal,aaai2020maximal} also directly use degeneracy order as a preprocessing step.

\begin{algorithm}[!t]
\DontPrintSemicolon
    \KwIn{Graph $G=(V,E)$,\ \ $k$,\ \ $q\ge2k-1$}
    %\KwOut{All maximal $k$-plexes in $G$ with at least $q$ vertices}
    $G\gets$ the $(q-k)$-core of $G$\ \ \ \ \ \ \ \ \///\ Using Theorem~\ref{lemma::q-k core}\label{algo1:line1}\;
    $\eta=\{v_1,\dots, v_{n}\}$ is the degeneracy ordering of $V$\label{algo1:line2}\;
    \For{$i=1,2,\dots,n-q+1$}{\label{algo1:line3}
        $V_i\gets\{v_i,v_{i+1},\dots,v_{n}\}\cap\left(\{v_i\}\cup N_G(v_i)\cup N^2_G(v_i)\right)$\label{algo1:line4}\!\!\!\!\;
        $V_i'\leftarrow\{v_1,v_{2},\dots,v_{i-1}\}\cap\left(N_G(v_i)\cup N^2_G(v_i)\right)$\label{algo1:line5}\;
        $G_i\leftarrow G[V_i]$, and apply further pruning over $G_i$\label{algo1:line6}\;
        \ForEach{$S\subseteq N^2_{G_i}(v_i)$ that $|S|\le k-1$}{\label{algo1:line7}
    	   $P_S\leftarrow\{v_i\}\cup S$,\ \ \ \ $C_S\leftarrow N_{G_i}(v_i)$\label{algo1:line8}\;
           $X_S\leftarrow V_i'\cup(N^2_{G_i}(v_i)-S)$\label{algo1:line9}\;
           Call Branch$(G_i,k,q,P_S,C_S,X_S)$\label{algo1:line10}\;
        }
    }
\caption{Enumerating $k$-Plex with Initial Tasks}
\label{alg::1}
\end{algorithm}

We then generate initial task groups $T_{\{v_i\}}$ using the for-loop from Line~\ref{algo1:line3}, where we skip $i>n-q+1$ since $|V_i|<q$ in this case; here, $V_i$ is the vertex set of $G_i$ (see Lines~\ref{algo1:line4} and~\ref{algo1:line6}).
For each task $T_{\{v_i\}\cup S}=\langle P,C,X\rangle$ of the task group $T_{\{v_i\}}$, we have $P=\{v_i\}\cup S$ and $C\subseteq N_{G_i}(v_i)\triangleq C_S$ (Line~\ref{algo1:line8}). Let us define $N^{1,2}_{<\eta}(v_i)=V_{<\eta}(v_i)\cap\left(N_G(v_i)\cup N^2_G(v_i)\right)$ (i.e., $V_i'$ in Line~\ref{algo1:line5}), then we also have $X\subseteq N^{1,2}_{<\eta}(v_i)\,\cup\,(N^2_{G_i}(v_i)-S)\triangleq X_S$ (Line~\ref{algo1:line9}). This is because vertices of $(N^2_{G_i}(v_i)-S)$ may be considered by other sub-tasks $T_{\{v_i\}\cup S'}$, and vertices of $N^{1,2}_{<\eta}(v_i)$ may be considered by other task groups $T_{\{v_j\}}$ $(j<i)$ (if $v_j$ is more than 2 hops away from $v_i$, it cannot form a $k$-plex with $v_i$, so $v_j$ is excluded from $X_S$). 
Line~\ref{algo1:line7} of Algorithm~\ref{alg::1} is implemented by the set-enumeration search of $S$ over $N^2_G(v_i)$, similar to Algorithm~\ref{algo:bk} Lines~\ref{bk:line2}, \ref{bk:line3} and~\ref{bk:line6}.

Finally, to maintain the invariant of Bron-Kerbosch algorithm (c.f., Lines~\ref{bk:line4}--\ref{bk:line5} of Algorithm~\ref{algo:bk}), we set $C\gets \{u\,|\,u\in C_S\ \wedge\ P\cup\{u\}\mbox{ is a $k$-plex}\}$ and $X\gets \{u\,|\,u\in X_S\ \wedge\ P\cup\{u\}\mbox{ is a $k$-plex}\}$, and mine $T_{\{v_i\}\cup S}=\langle P,C,X\rangle$ recursively using the Bron-Kerbosch algorithm of Algorithm~\ref{algo:bk} over $G_i$. Instead of directly running Algorithm~\ref{algo:bk}, we actually run a variant to be described in Algorithm~\ref{alg::2} which applies more pruning techniques, and refines $C_S$ and $X_S$ into $C$ and $X$, respectively, at the very beginning. This branch-and-bound sub-procedure is called in Line~\ref{algo1:line10} of Algorithm~\ref{alg::1}.

\begin{algorithm}[t]
    \DontPrintSemicolon
    \SetKwBlock{Begin}{function}{end function}
    \Begin($\text{Branch}{(}G,k,q,P,C,X {)}$){
%    \Ensure Maximal $k$-plex $P$ satisfying size constraint.
    $C\leftarrow\{v\,|\,v\in C\ \wedge\ P\cup\{v\} \text{ is a }k\text{-plex}\}$\label{alg2:line2}\;
    $X\leftarrow\{v\,|\,v\in X\ \wedge\ P\cup\{v\} \text{ is a }k\text{-plex}\}$\label{alg2:line3}\;
    \If{$C = \emptyset$}{\label{alg2:line4}
        {\bf if} $X=\emptyset$ {\bf and} $|P|\ge q$ {\bf then}\ \ \ \ Output $P$\label{alg2:line5}\;
        \Return\label{alg2:line6}\;
    }
    $M_0\gets$ the subset of $P\cup C$ with minimum degree in $G[P\cup C]$\label{alg2:line7}\;
    $M\gets$ the subset of $M_0$ with maximum $\overline{d_P}(v)$\label{alg2:line8}\;
    {\bf if} $M\cap P\neq \emptyset$ {\bf then}\ \ \ \ Pick a pivot $v_p\in M\cap P$\label{alg2:line9}\;
    {\bf else}\ \ \ \ Pick a pivot $v_p\in M\cap C$\label{alg2:line10}\;
    \If{$d_{P\cup C}(v_p)\ge |P|+|C|-k$}{\label{alg2:line11}
        \If{$P\cup C$ is a maximal $k$-plex}{\label{alg2:line12} % compute X and see if it is empty
            {\bf if} $|P\cup C|\ge q$ {\bf then}\ \ \ \ Output $P\cup C$\;
        }
        \Return\label{alg2:line14}\;
    }
    \If{$v_p\in P$}{\label{alg2:line15}
        Re-pick a pivot $v_{new}$ from $\overline{N_C}(v_p)$ using the same rules as in Lines~\ref{alg2:line7}--\ref{alg2:line10};\ \ \ \ 
        $v_p\leftarrow v_{new}$\label{alg2:line16}\;
    }
    Compute the upper bound $ub$ of the size of any $k$-plex that can be expanded from $P\cup\{v_p\}$\label{alg2:line17}\;
    \If{$ub\ge q$}{\label{alg2:line18}
        $\text{Branch}(G,k,q,P\cup\{v_p\},C-\{v_p\},X)$\label{alg2:line19}\;
    }
    $\text{Branch}(G,k,q,P,C-\{v_p\},X\cup\{v_p\})$\label{alg2:line20}\;
}
\caption{Branch-and-Bound Search}    
\label{alg::2}
\end{algorithm}

\vspace{1mm}
\noindent {\bf Branch-and-Bound Search.} Algorithm~\ref{alg::2} first updates $C$ and $X$ to ensure that each vertex in $C$ or $X$ can form a $k$-plex with $P$ (Lines~\ref{alg2:line2}--\ref{alg2:line3}). If $C=\emptyset$ (Line~\ref{alg2:line2}), there is no more candidate to expand $P$ with, so Line~\ref{alg2:line6} returns. Moreover, if $X=\emptyset$ (i.e., $P$ is maximal) and $|P|\ge q$, we output $P$  (Line~\ref{alg2:line5}).

Otherwise, we pick a pivot $v_p$ (Lines~\ref{alg2:line7}--\ref{alg2:line10} and~\ref{alg2:line15}--\ref{alg2:line16}) and compute an upper bound $ub$ of the maximum size of any $k$-plex that $P\cup\{v_p\}$ may expand to (Line~\ref{alg2:line17}). The branch expanding $P\cup\{v_p\}$ is filtered if $ub<q$ (Lines~\ref{alg2:line18}--\ref{alg2:line19}), while the branch excluding $v_p$ is always executed in Lines~\ref{alg2:line20}. We explain how $ub$ is computed later.

\vspace{1mm}
\noindent {\bf Pivot Selection.} We next explain our pivot selection strategy. 
Specifically, Lines~\ref{alg2:line7}--\ref{alg2:line10} select $v_p\in P\cup C$ to be a vertex with the minimum degree in $G[P\cup C]$, so that in Line~\ref{alg2:line11}, if $d_{P\cup C}(v_p)\ge |P|+|C|-k$, then for any other $v\in P\cup C$, we have $d_{P\cup C}(v)\ge d_{P\cup C}(v_p)\ge |P|+|C|-k$, and hence $P\cup C$ is a $k$-plex that we then examine for maximality. In this case, we do not need to expand further so Line~\ref{alg2:line14} returns. 
In Line~\ref{alg2:line12}, we check if $P\cup C$ is maximal by checking if $\{v\,|\,v\in X\ \wedge\ P\cup C\cup\{v\} \text{ is a }k\text{-plex}\}$ is empty.

Note that among those vertices with the minimum degree in $G[P\cup C]$, we choose $v_p$ with the maximum $\overline{d_P}(v)$ (Line~\ref{alg2:line8}) which tends to prune more candidates in $C$. Specifically, if $\overline{d_P}(v_p)=k$ and $v_p$ is in (or added to) $P$, then $v_p$'s non-neighbors in $C$ are pruned; such a vertex $v_p$ is called {\em saturated}.

\begin{figure}[t]
\centering
\includegraphics[width=0.65\columnwidth]{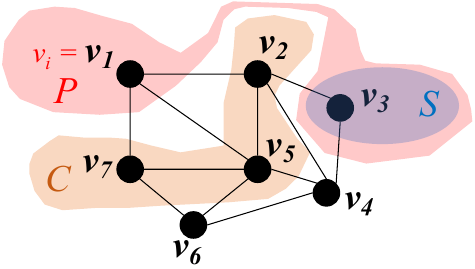}
% \vspace{-3mm}
\caption{A Toy Graph for Illustration}\label{fig::example_graph}
% \vspace{-1mm}
\end{figure}

Note that \textit{if more saturated vertices are included in $P$, more vertices in $C$ tend to be pruned.} We, therefore, pick a pivot to maximize the number of saturated vertices in $P$. Specifically, we try to find the closest-to-saturation pivot $v_p$ in $P$ (Line~\ref{alg2:line9}), and then find a non-neighbor of $v_p$ in $C$ that is closer to saturation (Lines~\ref{alg2:line16}) as the new pivot $v_{new}\in C$, which is then used to expand $P$ (Line~\ref{alg2:line19}). While if the closest-to-saturation pivot $v_p$ cannot be found in $P$, we then pick $v_p$ in $C$ (Line~\ref{alg2:line10}), which is then used to expand $P$ (Line~\ref{alg2:line19}). 
Example~\ref{example1} illustrates the process of our pivot selection strategy.
\begin{example}[Pivot Selection]\label{example1}
Consider the graph $G$ shown in Figure~\ref{fig::example_graph} and $k=2$. Assume that the current $k$-plex $P = \{v_1, v_3\}$ and the candidate set $C = \{v_2, v_5, v_7\}$. Then according to Lines~\ref{alg2:line7}--\ref{alg2:line8}, $M_0 = \{ v_3 \}$ and $M = \{ v_3 \}$. Note that $M \cap P = \{ v_3 \} \neq \emptyset$, so $v_p = v_3$. Following our re-picking strategy, the new pivot vertex is selected from $\overline{N_C}(v_3) = \{ v_5, v_7 \}$, and the selected pivot vertex is $v_7$. 
\end{example}

% \vspace{-2mm}
\section{Upper Bounding and Pruning}\label{sec:prune}
This section introduces our upper bounding and additional pruning techniques used in Algorithms~\ref{alg::1} and~\ref{alg::2}, respectively, that are critical in speeding up the enumeration process. 
Due to space limitation, we put most of the proofs in the appendix of our full version~\cite{fullpaper}.

\vspace{1mm}
\noindent{\bf Seed Subgraph Pruning.} The theorem below states the second-order property of two vertices in a $k$-plex with size constraint:
%We provide the second-order property of two vertices in a $k$-plex with size constraint, which is widely used in $k$-plex mining~\cite{aaai2020maximal,www22maximal,cikm22maximal,colorBound}. The theorem is as follows:
\begin{theorem}
\label{lemma::2nd_order}
Let $P$ be a $k$-plex with $|P|\ge q$. Then, for any two vertices $u, v \in P$, we have (i)~if $(u,v)\not\in E$, $|N_P(u)\cap N_P(v)|\ge q-2k+2$, (ii)~otherwise, $|N_P(u)\cap N_P(v)|\ge q-2k$. 
\end{theorem}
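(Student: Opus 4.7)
The plan is to prove both cases using a short inclusion-exclusion argument on $N_P(u)$ and $N_P(v)$, leveraging the $k$-plex degree bound and the fact that $u,v$ are themselves vertices of $P$ that may or may not lie in each other's neighborhoods.

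First, since $P$ is a $k$-plex, Definition~\ref{definition} gives $d_P(u) \geq |P|-k$ and $d_P(v) \geq |P|-k$. Combined with the assumption $|P| \geq q$, this yields $|N_P(u)| \geq q-k$ and $|N_P(v)| \geq q-k$. The identity
\[
|N_P(u) \cap N_P(v)| \;=\; |N_P(u)| + |N_P(v)| - |N_P(u) \cup N_P(v)|
\]
then reduces the problem to bounding $|N_P(u) \cup N_P(v)|$ from above in the two cases.

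For case (i), where $(u,v) \notin E$, I would observe that $u \notin N_P(u)$ and $v \notin N_P(v)$ by the standard convention for open neighborhoods, and furthermore $u \notin N_P(v)$ and $v \notin N_P(u)$ since $(u,v)\notin E$. Hence $N_P(u) \cup N_P(v) \subseteq P \setminus \{u,v\}$, so $|N_P(u) \cup N_P(v)| \leq |P|-2$. Plugging into inclusion-exclusion gives
\[
|N_P(u) \cap N_P(v)| \;\geq\; 2(|P|-k) - (|P|-2) \;=\; |P|-2k+2 \;\geq\; q-2k+2.
\]
For case (ii), where $(u,v) \in E$, we no longer get to exclude $u$ from $N_P(v)$ or $v$ from $N_P(u)$, so we only have $N_P(u) \cup N_P(v) \subseteq P$, hence $|N_P(u) \cup N_P(v)| \leq |P|$. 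Inclusion-exclusion then gives
\[
|N_P(u) \cap N_P(v)| \;\geq\; 2(|P|-k) - |P| \;=\; |P|-2k \;\geq\; q-2k.
\]

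There is no real obstacle here: the proof is a one-line application of inclusion-exclusion once the correct upper bound on $|N_P(u) \cup N_P(v)|$ is identified in each case. The only point requiring care is the bookkeeping around whether $u,v$ themselves belong to the union, which is exactly what produces the $+2$ improvement in case (i) relative to case (ii). I would state this bookkeeping explicitly to make the proof self-contained.
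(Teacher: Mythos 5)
Your proof is correct and is essentially the paper's argument in complementary form: the paper partitions $P$ into $N_P(u)\cap N_P(v)$ and the union of the punctured non-neighborhoods $\overline{N_P}(\cdot)-\{\cdot\}$ (each of size at most $k-1$), while you apply inclusion--exclusion to $N_P(u)$ and $N_P(v)$ and bound the union of neighborhoods from above --- the two are De Morgan duals of the same counting, and your bookkeeping of whether $u,v$ lie in the union correctly accounts for the $+2$ gap between the two cases.
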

\begin{proof}
    Please see Appendix~\ref{app:th1}~\cite{fullpaper}.
\end{proof}

Note that by setting $q=2k-1$, Case~(i) gives $|N_P(u)\cap N_P(v)|\ge (2k-1)-2k+2=1$, i.e., for any two vertices $u,v\in P$ that are not mutual neighbors, they must share a neighbor and is thus within 2 hops, which proves Theorem~\ref{lemma::diameter}.

This also gives the following corollary to help further prune the size of a seed subgraph $G_i$ in Line~\ref{algo1:line6} of Algorithm~\ref{alg::1}, which is also essential for our time complexity analysis (c.f., Lemma~\ref{th:subtasks}).
\begin{corollary}
    \label{corollary:2nd_order}
    Consider an undirected graph $G=(V,E)$ and an ordering of $V$: $\{v_1,v_2,..,v_n\}$. Let $v_i$ be the seed vertex and $G_i$ be the seed subgraph. Then, any vertex $u\in V_i$ (recall Algorithm~\ref{alg::1} Line~\ref{algo1:line4} for the definition of $V_i$) that satisfies either of the following two conditions can be pruned: 
    \begin{itemize}
    \item $u\in N_{G_i}(v_i)$ and $|N_{G_i}(u)\cap N_{G_i}(v_i)|<q-2k$;
    \item $u\in N^2_{G_i}(v_i)$ and $|N_{G_i}(u)\cap N_{G_i}(v_i)|<q-2k+2$.
    \end{itemize}
\end{corollary}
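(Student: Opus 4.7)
The plan is to argue by contradiction: suppose some $k$-plex $P$ with $|P| \ge q$, arising in the search rooted at seed vertex $v_i$, contains both $v_i$ and the vertex $u$ claimed to be prunable. I will then derive a violation of Theorem~\ref{lemma::2nd_order} applied to the pair $u, v_i$.

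The first and key step is to observe that any such $P$ must satisfy $P \subseteq V_i$, i.e., $P$ lives entirely inside the seed subgraph $G_i$. This has two parts. Structurally, since $v_i \in P$ and $|P| \ge q \ge 2k-1$, Theorem~\ref{lemma::diameter} gives $\delta(P) \le 2$, so every vertex of $P$ lies within 2 hops of $v_i$ in $G[P]$, and therefore in $G$ as well; hence $P \subseteq \{v_i\} \cup N_G(v_i) \cup N^2_G(v_i)$. Order-wise, the set-enumeration tree rooted at $v_i$ in Algorithm~\ref{alg::1} only extends $\{v_i\}$ with candidates drawn from $V_{\ge\eta}(v_i)$, so $P \subseteq V_{\ge\eta}(v_i)$ as well. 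Combining both inclusions yields $P \subseteq V_i$, the vertex set of $G_i$. An immediate consequence is that for any $x, y \in P$,
$$N_P(x) \cap N_P(y) \subseteq N_{G_i}(x) \cap N_{G_i}(y).$$

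With this containment in hand, the conclusion is immediate. If $u \in N_{G_i}(v_i)$, then $(u, v_i) \in E$, and part~(ii) of Theorem~\ref{lemma::2nd_order} applied to $u, v_i \in P$ gives $|N_P(u) \cap N_P(v_i)| \ge q - 2k$; combined with the containment above, this forces $|N_{G_i}(u) \cap N_{G_i}(v_i)| \ge q - 2k$, contradicting the first pruning hypothesis. If instead $u \in N^2_{G_i}(v_i)$, then $(u, v_i) \notin E$, and part~(i) of Theorem~\ref{lemma::2nd_order} gives $|N_P(u) \cap N_P(v_i)| \ge q - 2k + 2$, which similarly contradicts the second pruning hypothesis. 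In either case no valid $P$ can simultaneously contain $v_i$ and $u$, so $u$ may be safely removed from $V_i$.

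The main obstacle is conceptual rather than technical: one must carefully justify the two-sided containment $P \subseteq V_i$, which mixes a structural 2-hop filter (coming from the diameter bound) with the $\eta$-ordering filter that defines the sub-task for seed vertex $v_i$. Once this containment is recorded, the corollary reduces to a one-line substitution into Theorem~\ref{lemma::2nd_order}; no considerations about the dynamics of the pruning order are needed, because the argument concerns only the existence of such a $P$.
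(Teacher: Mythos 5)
Your proof is correct and follows exactly the route the paper intends: the corollary is presented as an immediate consequence of Theorem~\ref{lemma::2nd_order} applied to the pair $(u, v_i)$ inside the seed subgraph, and your argument supplies the implicit containment $P \subseteq V_i$ (via Theorem~\ref{lemma::diameter} and the $\eta$-ordering restriction) that yields $N_P(u)\cap N_P(v_i)\subseteq N_{G_i}(u)\cap N_{G_i}(v_i)$ and hence the contrapositive pruning rule. The paper gives no separate proof for this corollary, so your write-up is simply a more explicit version of the same argument.
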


\vspace{1mm}
\noindent{\bf Upper Bound Computation.} We next consider how to obtain the upper bound of the maximum size of a $k$-plex containing $P$, which is called in Algorithm~\ref{alg::2} Line~\ref{alg2:line17}.

\begin{theorem}
\label{lemma::bound2}
Given a $k$-plex $P$ in a seed subgraph $G_i=(V_i, E_i)$, the upper bound of the maximum size of a $k$-plex containing $P$ is $\min_{u\in P} \{d_{G_i}(u)\} + k$.
\end{theorem}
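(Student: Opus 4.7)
The plan is to bound $|P'|$ for an arbitrary $k$-plex $P'$ with $P \subseteq P' \subseteq V_i$, and then take the infimum of these bounds over the vertices of $P$. The argument will combine two elementary facts: the defining $k$-plex inequality applied to vertices of $P$ viewed inside the larger $k$-plex $P'$, and the obvious monotonicity $N_{P'}(u) \subseteq N_{G_i}(u)$ that follows because $P' \subseteq V_i$.

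Concretely, I would proceed as follows. First, fix an arbitrary $k$-plex $P'$ in $G_i$ that contains $P$; the goal is to prove $|P'|\le d_{G_i}(u)+k$ for every $u\in P$. Second, pick any $u\in P$. Since $P\subseteq P'$, we have $u\in P'$, and the $k$-plex property of $P'$ yields
\[
d_{P'}(u)\;\ge\;|P'|-k.
\]
Third, because $P'\subseteq V_i$, every neighbor of $u$ in $P'$ is a neighbor of $u$ in $G_i$, so $N_{P'}(u)\subseteq N_{G_i}(u)$ and hence $d_{P'}(u)\le d_{G_i}(u)$. Chaining the two inequalities gives $|P'|-k\le d_{G_i}(u)$, i.e.\ $|P'|\le d_{G_i}(u)+k$. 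Finally, since $u\in P$ was arbitrary, taking the minimum over $P$ yields $|P'|\le \min_{u\in P}\{d_{G_i}(u)\}+k$, which is the claimed upper bound.

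There is no real obstacle here; the argument is a one-line application of the $k$-plex inequality combined with the containment $P'\subseteq V_i$. The only subtlety worth spelling out in the writeup is that the minimum is taken only over $u\in P$ (rather than over all of $P'$), which is exactly what is needed so that the bound depends solely on data available at the moment Algorithm~\ref{alg::2} invokes Line~\ref{alg2:line17}, namely the current partial $k$-plex $P$ and the fixed seed subgraph $G_i$. I would also remark that the bound is tight in the sense that equality is attainable (e.g.\ when $P$ consists of a single vertex $u$ whose closed neighborhood in $G_i$ together with $u$ already forms a $k$-plex of size $d_{G_i}(u)+k$), which justifies calling it simply ``the upper bound'' rather than a loose estimate.
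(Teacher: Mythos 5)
Your proof is correct and essentially identical to the paper's: the paper partitions the maximum $k$-plex $P_m$ into $N_{G_i}(u)\cap P_m$ and $\overline{N_{G_i}}(u)\cap P_m$ and bounds the two parts by $d_{G_i}(u)$ and $k$ respectively, which is exactly your chain $|P'|-k\le d_{P'}(u)\le d_{G_i}(u)$ rearranged. (Only your closing tightness aside is imprecise: the closed neighborhood of a single vertex $u$ has size $d_{G_i}(u)+1$, so attaining equality requires $k-1$ additional non-neighbors of $u$ inside the $k$-plex, not just $N_{G_i}(u)\cup\{u\}$.)
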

\setlength{\textfloatsep}{5pt}
\begin{proof}
We illustrate the proof process using Figure~\ref{new} (where irrelevant edges are omitted). Let $P_{m}\subseteq P\cup C$ be a maximum $k$-plex containing $P$. Given any $u\in P$, we can partition $P_m$ into two sets: (1)~$N_{G_i}(u)\cap P_m$ (see the red vertices in Figure~\ref{new}), and (2)~$\overline{N_{G_i}}(u)\cap P_m$ (see the green vertices in Figure~\ref{new} including $u$ itself). The first set $N_{G_i}(u)\cap P_m$ has size at most $|N_{G_i}(u)|=d_{G_i}(u)$ (i.e., at most all the 7 neighbors in $N_{G_i}(u)$ in Figure~\ref{new} are included into $P_{m}$). For the second set, if more than $k$ vertices are included into $P_m$, then $u\in P_m$ is a non-neighbor of $k$ vertices in $P_m$, so $P_m$ violates the $k$-plex definition (Definition~\ref{definition}) which leads to a contradiction; as a result, at most $k$ vertices in $\overline{N_{G_i}}(u)\cap P_m$ (including $u$ itself already in $P_m$) can be added to $P_m$. Putting things together, $|P_m|=|N_{G_i}(u)\cap P_m|+|\overline{N_{G_i}}(u)\cap P_m|\leq d_{G_i}(u) + k$.  
Since $u$ can be an arbitrary vertex in $P$, we have $|P_m|\leq \min_{u\in P} \{d_{G_i}(u)\} + k $.
\end{proof}

We use our running example to illustrate how to use Theorem~\ref{lemma::bound2}.
\begin{example}\label{example2}
Consider the graph $G$ shown in Figure~\ref{fig::example_graph} and $k=2$, and assume that $P = \{ v_1, v_3 \}$. For $v_1$, we can add at most all its 3 neighbors $\{v_2, v_5, v_7\}$ and $k=2$ non-neighbors into a $k$-plex containing $P$. Thus, the upper bound of its size is $3+2=5$. Similarly, for $v_3$, the upper bound is $2+2=4$. Therefore, the size of the $k$-plex expanded from $P$ is at most $\min\{5, 4\} = 4$. 
\end{example}

\begin{figure}[t]
\centering
\includegraphics[width=0.75\columnwidth]{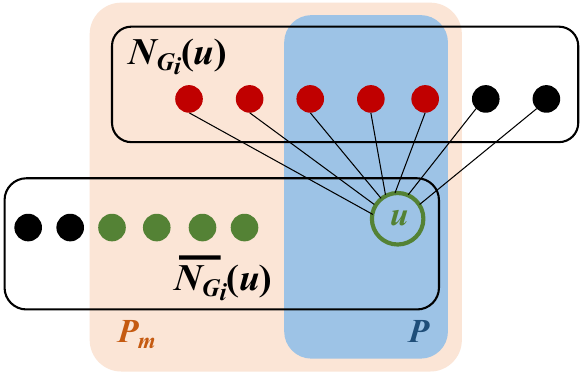}
% \vspace{-2mm}
\caption{Upper Bound Illustration for Theorem~\ref{lemma::bound2}}\label{new}
\end{figure}

In Line~\ref{alg2:line17} of Algorithm~\ref{alg::2}, we compute $\min_{u\in P\cup\{v_p\}} \{d_{G_i}(u)\} + k$ as an upper bound by Theorem~\ref{lemma::bound2}. 
Recall that we already select $v_p$ as the vertex in $G_i$ with the minimum degree in Line~\ref{alg2:line7} of Algorithm~\ref{alg::2}, so the upper bound can be simplified as $d_{G_i}(v_p)+k$. Note that $v_p$ here is the one obtained in Lines~\ref{alg2:line7}--\ref{alg2:line10}, not the $v_{new}$ that replaces the old $v_p$ in Line~\ref{alg2:line16} in the case when $v_p\in P$. 

We next derive another upper bound of the maximum size of a $k$-plex containing $P$. First, we define the concept of ``support number of non-neighbors''. Given a $k$-plex $P$ and candidate set $C$, for a vertex $v\in P\cup C$, its support number of non-neighbors is defined as $\text{sup}_P(v)= k-\overline{d_P}(v)$, which is the maximum number of non-neighbors of $v$ {\bf outside} $P$ that can be included in any $k$-plex containing $P$.

\begin{theorem}
\label{lemma::bound4}
Let $v_i$ and $G_i=(V_i, E_i)$ be the seed vertex and corresponding seed subgraph, respectively, and consider a sub-task $P_S=S\cup\{v_i\}$ where $S\subseteq N^2_{G_i}(v_i)$.

For a $k$-plex $P$ satisfying $P_S\subseteq P\subseteq V_i$ and for a pivot vertex $v_p\in C\subseteq C_S=N_{G_i}(v_i)$, the upper bound of the maximum size of a $k$-plex containing $P\cup \{v_p\}$ is
    \begin{equation}\label{eq:bound4}
        |P|+\text{sup}_P(v_p)+|K|,
    \end{equation}
where the set $K$ is computed as follows:

Initially, $K=N_{C}(v_p)$. For each $w\in K$, we find $u_m\in \overline{N_P}(w)$ such that $\text{sup}_P(u_m)$ is the minimum; if $\text{sup}_P(u_m)>0$, we decrease it by 1. Otherwise, we remove $w$ from $K$.
\end{theorem}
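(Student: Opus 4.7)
The plan is to fix an arbitrary maximum $k$-plex $P_m$ satisfying $P \cup \{v_p\} \subseteq P_m \subseteq P \cup C$, and to bound $|P_m|$ by partitioning the vertices of $P_m \setminus P$ according to their adjacency to $v_p$. Let $A = (P_m \setminus P) \cap N_{G_i}(v_p)$ be the added neighbors of $v_p$ and $B = (P_m \setminus P) \setminus N_{G_i}(v_p)$ the added non-neighbors; note that $v_p \in B$ since $v_p \notin P$ and $v_p \notin N_{G_i}(v_p)$. Then $|P_m| = |P| + |A| + |B|$, so it suffices to prove $|B| \leq \text{sup}_P(v_p)$ together with $|A| \leq |K|$.

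The first inequality follows directly from the $k$-plex condition at $v_p$ inside $P_m$. Since $B$ is disjoint from $P$ and every vertex of $B$ is a non-neighbor of $v_p$, the non-neighbors of $v_p$ in $P_m$ decompose as a disjoint union $\overline{N_{P_m}}(v_p) = \overline{N_P}(v_p) \cup B$, so $\overline{d_P}(v_p) + |B| = \overline{d_{P_m}}(v_p) \leq k$, giving $|B| \leq k - \overline{d_P}(v_p) = \text{sup}_P(v_p)$. The second inequality rests on the $k$-plex condition applied at each $u \in P$ inside $P_m$, which yields the feasibility condition $|\{w \in A : u \in \overline{N_P}(w)\}| \leq \text{sup}_P(u)$. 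Recall that the greedy procedure in the statement charges each kept $w$ to a single $u_m \in \overline{N_P}(w)$ of minimum current residual capacity, and removes $w$ only when every $u \in \overline{N_P}(w)$ is already saturated (residual $0$). I plan to prove $|A| \leq |K|$ by constructing an injection from $A \setminus K$ into $K \setminus A$: for any $w \in A \setminus K$, every $u \in \overline{N_P}(w)$ has been saturated by exactly $\text{sup}_P(u)$ charges to kept vertices, but at most $\text{sup}_P(u) - 1$ of those charges can land in $A$ (since $w$ itself already accounts for one unit of $u$'s $A$-count by feasibility), so at least one charge goes to a vertex of $K \setminus A$ that can serve as a witness.

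The main obstacle is lifting this pointwise witness construction to a single global injection, since naively chosen witnesses can collide across different removed vertices. A Hall-style argument is required: for every subset $W \subseteq A \setminus K$, one must show that the union of candidate witnesses across $W$ has cardinality at least $|W|$. This is where the min-residual pivoting rule is critical --- by always charging the tightest $u \in \overline{N_P}(w)$ first, the greedy concentrates charges on heavily constrained vertices and prevents wasteful consumption of slack capacity, which in turn guarantees that enough room remains on the $K \setminus A$ side to satisfy Hall's condition. Once $|A| \leq |K|$ is established, combining with the bound on $|B|$ yields $|P_m| = |P| + |A| + |B| \leq |P| + |K| + \text{sup}_P(v_p)$, which is exactly the desired upper bound.
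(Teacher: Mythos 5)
Your decomposition of $P_m$ into $P$, the added non-neighbors $B$ of $v_p$, and the added neighbors $A$ of $v_p$, together with the bound $|B|\le \text{sup}_P(v_p)$, is exactly the paper's argument. The remaining issue is the step you yourself flag as "the main obstacle": you reduce everything to $|A|\le |K|$ via an injection from $A\setminus K$ into $K\setminus A$, exhibit a witness for a single removed $w$, but do not actually verify Hall's condition. Two smaller inaccuracies also need fixing. First, the procedure removes $w$ when its \emph{minimum}-residual non-neighbor in $P$ is saturated, i.e.\ when \emph{some} $u\in\overline{N_P}(w)$ has residual $0$ --- not when \emph{every} such $u$ is saturated, as you wrote. (Your witness argument only needs one saturated $u$, so this is harmless, but state it correctly.) Second, your diagnosis that "the min-residual pivoting rule is critical" for Hall's condition is not right: the rule is a tightening heuristic that tends to make $K$ smaller, but the validity of the bound does not depend on it.

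Here is how the obstacle actually closes, which is also how the paper's proof is organized. The key structural fact is that the procedure charges each \emph{kept} vertex $w'\in K$ to exactly one $u\in\overline{N_P}(w')$, so the sets of kept vertices charged to distinct $u$'s are pairwise disjoint, and a vertex $u$ that becomes saturated receives exactly $\text{sup}_P(u)$ such charges. Now take any $W\subseteq A\setminus K$ and assign to each $w\in W$ one saturated blame vertex $u(w)\in\overline{N_P}(w)$. For each $u$ in the image, let $t_u=|\{w\in W: u(w)=u\}|$. Feasibility of $P_m$ at $u$ gives $|\{w''\in A: u\in\overline{N_P}(w'')\}|\le \text{sup}_P(u)$, and the $t_u$ vertices of $W$ already account for $t_u$ of these; hence among the $\text{sup}_P(u)$ kept vertices charged to $u$, at most $\text{sup}_P(u)-t_u$ lie in $A$, so at least $t_u$ lie in $K\setminus A$. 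Summing over the distinct $u$'s (whose charge sets are disjoint) shows that the candidate witnesses for $W$ number at least $\sum_u t_u = |W|$, so Hall's condition holds and the injection exists, giving $|A\setminus K|\le |K\setminus A|$ and thus $|A|\le|K|$. The paper phrases this identically but by contradiction: it partitions $N_C(v_p)$ into "$u$-groups" according to the charged vertex $u_m$ and argues within each group that every element of $K'-K$ forces a distinct element of $K-K'$, concluding $|K-K'|\ge|K'-K|$. So your approach is the paper's approach; you just need to replace the appeal to the min-residual rule with the disjoint-charging count above.
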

\setlength{\textfloatsep}{5pt}

Figure~\ref{ub2} shows the rationale of the upper bound in Eq~(\ref{eq:bound4}), where $v_p\in C_S$ is to be added to $P$ (shown inside the dashed contour). Let $P_{m}\subseteq P\cup C$ be a maximum $k$-plex containing $P\cup \{v_p\}$, then the three terms in Eq~(\ref{eq:bound4}) correspond to the upper bounds of the three sets that $P_m$ can take its vertices from: (1)~$P$ whose size is exactly $|P|$, (2) $v_p$'s non-neighbors in $C$, i.e., $P_{m}\cap\overline{N_C}(v_p)$ (including $v_p$ itself) whose size is upper-bounded by $\text{sup}_P(v_p)= k-\overline{d_P}(v_p)$, and (3)~$v_p$'s neighbors in $C$, i.e., $P_{m}\cap N_C(v_p)$ whose size is at most $|K|$.

Here, $K$ computes the largest set of candidates in $N_C(v_p)$ that can expand $P\cup \{v_p\}$. Specifically, for each $w\in N_C(v_p)$, if there exists a non-neighbor in $P$,  denoted by $u_m$, that has $\text{sup}_P(u_m)=0$, then $w$ is pruned (from $K$) since if we move $w$ to $P$, $u_m$ would violate the $k$-plex definition. Otherwise, we decrement $\text{sup}_P(u_m)$ to reflect that $w$ (which is a non-neighbor of $u_m$) has been added to $K$ (i.e., removed from $C$) to expand $P\cup \{v_p\}$. Algorithm~\ref{alg::3} shows the above approach to compute the upper bound, which is called in Line~\ref{alg2:line17} of Algorithm~\ref{alg::2}.

% \vspace{-3mm}

\begin{figure}[t]
\centering
\includegraphics[width=0.65\columnwidth]{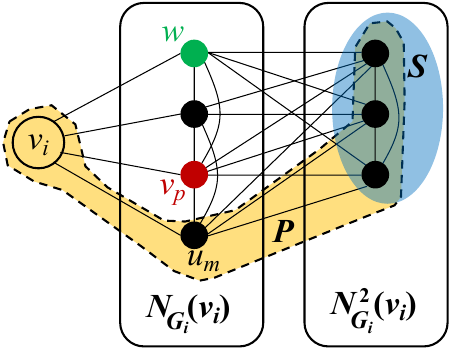}
% \vspace{-2mm}
\caption{Upper Bound Illustration for Theorem~\ref{lemma::bound4}}\label{ub2}
\end{figure}

\begin{algorithm}
    \DontPrintSemicolon
    \SetKwBlock{Begin}{function}{end function}
        $\text{sup}_P(v_p)\leftarrow k-\overline{d_{P}}(v_p)$\label{alg3:line1}\;
        \lForEach{$u\in P$}{\ \ \ \ $\text{sup}_P(u)\leftarrow k-\overline{d_P}(u)$}\label{alg3:line2}
        $ub\leftarrow |P|+\text{sup}_P(v_p)$\label{alg3:line3}\;
        \ForEach{$w\in N_{C}(v_p)$}{\label{alg3:line4}
            Find $u_m\in \overline{N_P}(w)$ s.t.\ $\text{sup}_P(u_m)$ is the minimum\label{alg3:line5}\;
            \If{$\text{sup}_P(u_m)>0$}{\label{alg3:line6}
                $\text{sup}_P(u_m)\leftarrow \text{sup}_P(u_m)-1$\label{alg3:line7}\;
                $ub\leftarrow ub+1$\label{alg3:line8}
            }
        }
        \Return $ub$\;
\caption{Computing Upper Bound by Theorem~\ref{lemma::bound4}}    
\label{alg::3}
\end{algorithm}
\setlength{\textfloatsep}{5pt}

% \vspace{-3mm}

We next prove that $K$ is truly the largest set of candidates in $N_C(v_p)$ that can expand $P\cup \{v_p\}$. 
\begin{proof}
Please see Appendix~\ref{app:th3}~\cite{fullpaper}.
\end{proof}
Putting Theorems~\ref{lemma::bound2} and~\ref{lemma::bound4} together, the upper bound in Line~\ref{alg2:line17} of Algorithm~\ref{alg::2} is given by
\begin{equation}\label{eq:our_ub}
\min\{|P|+\text{sup}_P(v_p)+|K|, d_{G_i}(v_p)+k\}.
\end{equation}

We use our running example to illustrate how to use Theorem~\ref{lemma::bound4}.
\begin{example}
    We use the same graph and settings as above two examples, i.e., $k=2$, $P= \{v_1, v_3 \}$, and $C = \{v_2, v_5, v_7 \}$. According to Example~\ref{example1}, the pivot vertex is $v_7$. We can calculate $\sup_P(v_7) = k - \overline{d_P}(v_7) = 1$ since $v_7$ only has one non-neighbor $v_3$ in $P$ (i.e., $\overline{d_P}(v_7) = 1$). Also, $K$ is initialized as $N_C(v_7) = \{v_5\}$. For $w=v_5$, since $\overline{N_P}(v_5) = \{v_3\}$, we have $u_m=v_3$. Since $\sup_P(v_3) = k - \overline{d_P}(v_3) = 0$ (as $v_3$ has two non-neighbors $\{v_1, v_3\}$ in $P$), $w=v_5$ is removed from $K$, so $K=\emptyset$. Thus, the upper bound of the size of the $k$-plex expanded from $P$ is $|P| + \sup_P(v_7) + |K|= 2 + 1 + 0 = 3$. 
\end{example}

For an initial sub-task $\langle P_S, C_S, X_S\rangle$, we can further improve Theorem~\ref{lemma::bound4} as follows.

\begin{theorem}
% Let $v_i$ and $G_i=(V_i, E_i)$ be the seed vertex and corresponding seed subgraph, respectively, and consider a sub-task $P_S=S\cup\{v_i\}$ where $S\subseteq N^2_{G_i}(v_i)$ and $S\neq\emptyset$.

% Let $K$ be $C_S=N_{v_i}(G_i)$ initially. For each $v\in K$, we find $u_m\in\overline{N_S}(v)$ that $\text{sup}(P_S,u_m)$ is minimum. If $\text{sup}(P_S,u_m)>0$, decrease it by 1. Otherwise remove $v$ from $K$.

% Then, the upper bound of the maximum size of a $k$-plex containing $P_S$ is $|P_S|+|K|$. 
Let $v_i$ and $G_i=(V_i, E_i)$ be the seed vertex and corresponding seed subgraph, respectively, and consider a sub-task $P_S=S\cup\{v_i\}$ where $S\subseteq N^2_{G_i}(v_i)$ and $S\neq\emptyset$.
We caculate $|K|$ with a modified version of Algorithm~\ref{alg::3} with $v_p=v_i$, and $\sup_P(v_i)=0$ in Line~1. The upper bound is $|P_S|+\sup_P(v_i)+|K|=|P_S|+|K|$.
\label{reduction4}
\end{theorem}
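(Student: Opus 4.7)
The plan is to specialize the proof of Theorem~\ref{lemma::bound4} to the initial sub-task by promoting the seed vertex $v_i$ to play the role of the pivot $v_p$, even though $v_i\in P_S$ rather than in $C_S$. The observation driving the refinement is that every candidate in $C_S=N_{G_i}(v_i)$ is already a neighbor of $v_i$, so no non-neighbor of $v_i$ outside $P_S$ can ever be added when extending $P_S$ into a $k$-plex inside $P_S\cup C_S$. In the bound of Theorem~\ref{lemma::bound4}, the middle term $\sup_P(v_p)$ counts precisely the number of non-neighbors of $v_p$ (including $v_p$ itself when $v_p\notin P$) that an extending $k$-plex can accommodate outside $P$; for $v_p=v_i$ in an initial sub-task this count is zero, so overriding $\sup_{P_S}(v_i)\leftarrow 0$ in Line~\ref{alg3:line1} is sound and yields the refined bound $|P_S|+|K|$.

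To formalize, I would fix an arbitrary $k$-plex $P_m$ with $P_S\subseteq P_m\subseteq P_S\cup C_S$ and decompose $P_m=P_S\cup(P_m\setminus P_S)$, where $P_m\setminus P_S\subseteq N_{G_i}(v_i)$. The claim then reduces to $|P_m\setminus P_S|\leq |K|$. For every $w\in P_m\setminus P_S$, $w$ is a neighbor of $v_i$, so $v_i\notin\overline{N_{P_S}}(w)$; hence the minimum-slack non-neighbor $u_m$ chosen for $w$ in Line~\ref{alg3:line5} is never $v_i$, and the override on $\sup_{P_S}(v_i)$ influences neither the selection of $u_m$ nor the subsequent slack decrement. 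The greedy charge-and-consume argument used in the appendix proof of Theorem~\ref{lemma::bound4} then transports verbatim: whenever $w$ enters the extension, at least one non-neighbor $u\in\overline{N_{P_S}}(w)\cap P_S$ must still have positive remaining slack (else adding $w$ would push $u$ past its $k$ missing-edge allowance in $P_m$), and the standard exchange argument shows that charging greedily to the current minimum-slack non-neighbor is optimal. This yields $|P_m\setminus P_S|\leq |K|$ and hence $|P_m|\leq |P_S|+|K|$.

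The principal obstacle is confirming that the greedy reasoning remains sound when the pivot $v_i$ already lies inside $P$ rather than being a fresh candidate from $C$. Once the observation $u_m\neq v_i$ is established, the argument is a mechanical transport of the original proof with the middle term collapsed to $0$. The hypothesis $S\neq\emptyset$ plays a supporting role: it guarantees that the default $\sup_{P_S}(v_i)=k-\overline{d_{P_S}}(v_i)=k-1-|S|$ is strictly smaller than $k-1$, so the override is a genuine refinement of Theorem~\ref{lemma::bound4} rather than a formal restatement; it also ensures that $\overline{N_{P_S}}(w)$ typically contains at least those members of $S$ non-adjacent to $w$, keeping Line~\ref{alg3:line5} well-defined on the relevant candidates and making the transferred argument applicable in the same shape as in Theorem~\ref{lemma::bound4}.
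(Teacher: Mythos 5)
Your proposal is correct and follows essentially the same route as the paper: decompose $P_m$ into $P_S$ and its remainder inside $C_S=N_{G_i}(v_i)$, observe that the "non-neighbors of the pivot" term vanishes because every candidate is a neighbor of $v_i$, and reuse the charging argument of Theorem~\ref{lemma::bound4} to bound the remainder by $|K|$. Your added observation that $v_i$ is never selected as $u_m$ in Line~\ref{alg3:line5} (so the override $\sup_P(v_i)\leftarrow 0$ cannot perturb the computation of $K$) is a point the paper's proof leaves implicit, and is a welcome clarification.
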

\begin{proof}
Please see Appendix~\ref{app:th4}~\cite{fullpaper}.
\end{proof}

Intuitively, this is a special case of Theorem~\ref{lemma::bound4} with $P = S$ and $v_p = v_i$, and $\sup_P(v_i)=0$ since there does not exist any non-neighbor of $v_i$ in $C$ to be added to $P$ (recall from Line~\ref{algo1:line8} of Algorithm~\ref{alg::1} that $C$ only contains $v_i$'s neighbors).

Recall that Theorem~\ref{lemma::bound2} also gives another upper bound of the maximum size of a $k$-plex containing $P_S$, which is $\min_{v\in P_S} \{d_{G_i}(v)\}+k$. 
Combining with Theorem~\ref{reduction4}, the new upper bound is given by $ub(P_S) = \min\left\{|P_S|+|K|, \min_{v\in P_S} \{d_{G_i}(v)\}+k\right\}$.

Right before Line~\ref{algo1:line10} of Algorithm~\ref{alg::1}, we check if $ub(P_S)<q$; if so, we prune this sub-task without calling Branch(.).

\vspace{1mm}
\noindent {\bf Time Complexity Analysis.} We now analyze the time complexity of our algorithm, i.e., Algorithm~\ref{alg::1}. Recall that $D$ is the degeneracy of $G$, that $\Delta$ is the maximum degree of $G$, and that seed vertices are in the degeneracy ordering of $V$, $\eta=\{v_1,\dots, v_{n}\}$. Therefore, given a seed subgraph $G_i=(V_i, E_i)$ where $V_i\subseteq\{v_i,v_{i+1},\dots,v_{n}\}$, for each $v\in G_i$, we have $d_{G_i}(v)=|N_{G_i}(v)|\leq D$.

Let us first consider the time complexity of Algorithm~\ref{alg::3}.
\begin{lemma}\label{th:complexity_alg3}
The time complexity of Algorithm~\ref{alg::3} is given by $$O(k+(k+1)D)\approx O(D).$$
\end{lemma}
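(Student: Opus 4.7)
The plan is to account for the cost of Algorithm~\ref{alg::3} in two phases — the initialization of the $\text{sup}_P$ values in Lines~\ref{alg3:line1}--\ref{alg3:line3}, and the main loop over $N_C(v_p)$ in Lines~\ref{alg3:line4}--\ref{alg3:line8} — using two structural facts about $k$-plexes in the seed subgraph $G_i$.

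First, I would record that $|N_C(v_p)| \le d_{G_i}(v_p) \le D$, which holds because $v_p \in V_i$ and the seed subgraphs are built on the suffix of the degeneracy ordering, so every vertex in $V_i$ has at most $D$ neighbors within $V_i$. Second, every $v \in P \cup C$ relevant to the algorithm satisfies $|\overline{N_P}(v)| \le k$: this follows from $P$ being a $k$-plex when $v \in P$, and from the filtering in Algorithm~\ref{alg::2} Line~\ref{alg2:line2} (which guarantees $P \cup \{w\}$ is still a $k$-plex for every $w \in C$) when $v \in C$. With these two bounds, the main loop is immediate: the outer for-each at Line~\ref{alg3:line4} runs at most $D$ times; Line~\ref{alg3:line5} scans $\overline{N_P}(w)$ of size at most $k$ to locate $u_m$ in $O(k)$; and Lines~\ref{alg3:line6}--\ref{alg3:line8} do $O(1)$ work per iteration. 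Hence the loop contributes $O((k+1)D)$.

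For the initialization, Line~\ref{alg3:line1} computes $\overline{d_P}(v_p)$ by enumerating the at most $k$ non-neighbors of $v_p$ lying in $P$, costing $O(k)$; Line~\ref{alg3:line3} is $O(1)$; and Line~\ref{alg3:line2} contributes $O(1)$ per call under the standard invariant that the $\text{sup}_P$ values are maintained across the recursive calls of Algorithm~\ref{alg::2} and updated incrementally when a vertex enters or leaves $P$, so that entering Algorithm~\ref{alg::3} amounts to reading, not recomputing, them. Summing the two phases yields $O(k) + O((k+1)D) = O(k + (k+1)D)$, which collapses to $O(D)$ whenever $k$ is a small constant, as claimed.

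The main obstacle is Line~\ref{alg3:line2}: a literal reading iterates over all of $P$ — whose size can be as large as $D+k$ by Theorem~\ref{lemma::bound2} — and a naive recomputation of each $\overline{d_P}(u)$ from scratch costs $O(k)$, yielding $O(kD + k^2)$ which strictly exceeds the stated bound whenever $k^2$ dominates $D$. Making the incremental-maintenance convention explicit (i.e., treating $\text{sup}_P$ as a persistent array adjusted by the caller whenever $P$ changes) is the key step needed to charge Line~\ref{alg3:line2} only $O(1)$ and match the claimed $O(k + (k+1)D)$ bound.
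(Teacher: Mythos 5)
Your overall skeleton matches the paper's: bound the loop at Line~\ref{alg3:line4} by $O(D)$ iterations, charge $O(k)$ per iteration for scanning $\overline{N_P}(w)$ in Line~\ref{alg3:line5}, and account for the initialization separately. The final bound is correct, but two of your supporting claims do not hold as written and need repair.

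First, you derive $|N_C(v_p)|\le D$ from the assertion that every vertex of $V_i$ has at most $D$ neighbors within $V_i$. The degeneracy ordering only guarantees that $v_j$ has at most $D$ neighbors among $\{v_j,\dots,v_n\}$; a vertex $v_j\in V_i$ with $j>i$ may additionally have many neighbors in $\{v_i,\dots,v_{j-1}\}\cap V_i$, so $d_{G_i}(v_j)$ is not bounded by $D$ in general. The correct (and simpler) route is $N_C(v_p)\subseteq C\subseteq C_S=N_{G_i}(v_i)$ together with $|N_{G_i}(v_i)|\le D$, which does follow from the degeneracy ordering because $v_i$ is the earliest vertex of $V_i$ in $\eta$.

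Second, your handling of Line~\ref{alg3:line2} via a persistent $\sup_P$ array is both unnecessary and unsafe. Unnecessary, because with $d_P(\cdot)$ maintained incrementally each $\sup_P(u)=k-\bigl(|P|-d_P(u)\bigr)$ is an $O(1)$ lookup, so Line~\ref{alg3:line2} costs $O(|P|)=O(D+k)$ by Theorem~\ref{lemma::bound2}, and $D+k\le k+(k+1)D$ already fits the stated bound; the $O(kD+k^2)$ ``obstacle'' you describe only arises if one recomputes $\overline{d_P}(u)$ by scanning non-neighbors, which the maintained-degree convention avoids. Unsafe, because Line~\ref{alg3:line7} destructively decrements $\sup_P(u_m)$ during the upper-bound computation, so an array persisted across recursive calls of Algorithm~\ref{alg::2} would be corrupted after one invocation of Algorithm~\ref{alg::3} unless you also undo those decrements; re-materializing the values at Line~\ref{alg3:line2} on each call, as the paper does, is precisely what sidesteps that bookkeeping while staying within $O(k+(k+1)D)$.
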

\begin{proof}
Please see Appendix~\ref{app:th5}~\cite{fullpaper}.
\end{proof}

We next bound the number of sub-tasks in each $G_i$ created by Lines~\ref{algo1:line7} and~\ref{algo1:line10} of Algorithm~\ref{alg::1}.

\begin{lemma}\label{th:subtasks}
Let $v_i$ and $G_i=(V_i, E_i)$ be the seed vertex and corresponding seed subgraph, respectively. Also, let us abuse the notation $|N_{G_i}^2(v_i)|$ to mean the one pruned by Corollary~\ref{corollary:2nd_order} in Line~\ref{algo1:line6} of Algorithm~\ref{alg::1}. Then, we have $|N_{G_i}(v_i)|\le D$ and $|N_{G_i}^2(v_i)|=O\left(r_1\right)$ where $r_1=\min\left\{\frac{D\Delta}{q-2k+2},n\right\}$.

Also, the number of subsets $S\subseteq N_{G_i}^2(v_i)$ ($|S|\leq k-1$) is bounded by $O\left(|N_{G_i}^2(v_i)|^k\right)=O\left(r_1^k\right)$. 
\end{lemma}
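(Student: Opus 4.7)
The plan is to prove the three claims in sequence, leaning on the degeneracy ordering used in Algorithm~\ref{alg::1} together with the pruning guaranteed by Corollary~\ref{corollary:2nd_order}. First, for $|N_{G_i}(v_i)|\le D$, I would argue directly from the peeling process that generates $\eta$: when $v_i$ is removed, the remaining graph is exactly $G[\{v_i,v_{i+1},\ldots,v_n\}]$, in which $v_i$ has degree at most $D$ by definition of degeneracy. Since $G_i$ is an induced subgraph of this remaining graph (further restricted to vertices within two hops of $v_i$), $N_{G_i}(v_i)\subseteq\{v_{i+1},\ldots,v_n\}$ and the bound follows.

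Next, for $|N_{G_i}^2(v_i)|=O(r_1)$ under the abuse of notation that $N_{G_i}^2(v_i)$ refers to the set after the Corollary~\ref{corollary:2nd_order} pruning in Line~\ref{algo1:line6}, I would use a double-counting argument on the bipartite edge set crossing between $N_{G_i}(v_i)$ and $N_{G_i}^2(v_i)$. By Corollary~\ref{corollary:2nd_order}, every surviving $u\in N_{G_i}^2(v_i)$ has at least $q-2k+2$ neighbors in $N_{G_i}(v_i)$, so the number of crossing edges is at least $(q-2k+2)\cdot |N_{G_i}^2(v_i)|$. On the other side, each $w\in N_{G_i}(v_i)$ contributes at most $d_G(w)\le\Delta$ crossing edges, and $|N_{G_i}(v_i)|\le D$ from the first claim, so the number of crossing edges is at most $D\Delta$. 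Rearranging yields $|N_{G_i}^2(v_i)|\le \frac{D\Delta}{q-2k+2}$, and combined with the trivial $|N_{G_i}^2(v_i)|\le n$ we obtain $|N_{G_i}^2(v_i)|=O(r_1)$.

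Finally, for the third claim, the number of subsets $S\subseteq N_{G_i}^2(v_i)$ with $|S|\le k-1$ is $\sum_{j=0}^{k-1}\binom{|N_{G_i}^2(v_i)|}{j}$, which for fixed $k$ is $O(|N_{G_i}^2(v_i)|^{k-1})$ and hence $O(r_1^k)$ as stated. The main obstacle is confined to the second claim, namely making the double-counting bookkeeping precise and matching the hypothesis of Corollary~\ref{corollary:2nd_order} to exactly what survives in $N_{G_i}^2(v_i)$ after Line~\ref{algo1:line6}; the first and third claims are essentially immediate from the degeneracy ordering and from a binomial-sum estimate, respectively.
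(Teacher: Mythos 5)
Your proposal is correct and follows essentially the same route as the paper's proof: the degeneracy-ordering argument for $|N_{G_i}(v_i)|\le D$, the double count of edges crossing between $N_{G_i}(v_i)$ and the pruned $N_{G_i}^2(v_i)$ (at least $q-2k+2$ per surviving two-hop vertex by Corollary~\ref{corollary:2nd_order}, at most $D\Delta$ in total), and the binomial-sum estimate for the number of subsets $S$. The only cosmetic difference is that your $O\bigl(|N_{G_i}^2(v_i)|^{k-1}\bigr)$ bound on the subset count is slightly tighter than the $O(r_1^k)$ the lemma states, which of course still implies the claim.
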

\begin{proof}
Please see Appendix~\ref{app:th6}~\cite{fullpaper}.
\end{proof}

We can also bound the time complexity of Algorithm~\ref{alg::2}:
\begin{lemma}\label{th:times}
Let $v_i$ and $G_i=(V_i, E_i)$ be the seed vertex and corresponding seed subgraph, respectively. Then, Branch$(G_i,k,q,P_S,C_S,$ $X_S)$ (see Line~\ref{algo1:line10} in Algorithm~\ref{alg::1}) recursively calls the body of Algorithm~\ref{alg::3} for $O(\gamma_k^{D})$ times, where $\gamma_k<2$ is the maximum positive real root of
$x^{k+2}-2x^{k+1}+1=0$ (e.g., $\gamma_1= 1.618$, $\gamma_2 = 1.839$, and $\gamma_3 = 1.928$).
\end{lemma}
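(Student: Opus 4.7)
The plan is to bound the number of nodes in the recursion tree of Branch$(G_i,k,q,P_S,C_S,X_S)$, since the body of Algorithm~\ref{alg::3} is invoked at most once per node (at Line~\ref{alg2:line17} of Algorithm~\ref{alg::2}). Every node either returns without recursing (Lines~\ref{alg2:line6} or~\ref{alg2:line14}) or spawns exactly two children in Lines~\ref{alg2:line19} and~\ref{alg2:line20}, so the tree is binary and its total size is linear in its number of leaves. The natural measure to drive the recurrence is $|C|$, which only shrinks along any root-to-leaf path and satisfies $|C_S|=|N_{G_i}(v_i)|\le D$ by Lemma~\ref{th:subtasks}; it therefore suffices to prove a recurrence of the form $T(s)\le\sum_{j=1}^{k+1}T(s-j)$ with $s=|C|\le D$.

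To justify this recurrence, I would first track how $|C|$ decreases along each branch. The exclude branch (Line~\ref{alg2:line20}) drops $|C|$ by exactly one. The include branch (Line~\ref{alg2:line19}) also drops $|C|$ by one via the removal of $v_p$, but the pivot-selection of Lines~\ref{alg2:line7}-\ref{alg2:line16} forces a cascade of further prunings: either the early-exit test of Line~\ref{alg2:line11} fires (so no branching happens), or $v_p\in\overline{N_C}(u)$ for some $u\in P$ with maximum $\overline{d_P}(u)$. Including $v_p$ increments $\overline{d_{P\cup\{v_p\}}}(u)$ by one, so after at most $k-\overline{d_P}(u)+1$ consecutive include-branches on non-neighbors of $u$, the vertex $u$ becomes saturated; the refinement of Lines~\ref{alg2:line2}-\ref{alg2:line3} in the next recursive call then prunes every remaining non-neighbor of $u$ from $C$ in one shot. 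A standard measure-and-conquer argument amortizes this cascade against the preceding include-branches, converting the binary branching into an effective $(k+1)$-way branching with subproblem sizes $s-1,s-2,\ldots,s-(k+1)$.

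Plugging $T(s)=\gamma^s$ into $T(s)=\sum_{j=1}^{k+1}T(s-j)$ yields
\begin{equation*}
\gamma^{k+1}=\gamma^k+\gamma^{k-1}+\cdots+\gamma+1=\frac{\gamma^{k+1}-1}{\gamma-1},
\end{equation*}
which rearranges to the characteristic equation $x^{k+2}-2x^{k+1}+1=0$ stated in the lemma. A direct check recovers $\gamma_1=(1+\sqrt{5})/2\approx 1.618$, $\gamma_2\approx 1.839$, and $\gamma_3\approx 1.928$, and since $s\le D$ at the root of the recursion tree, unrolling the recurrence gives $T(D)=O(\gamma_k^D)$ as claimed.

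The main obstacle will be rigorously formalizing the $(k+1)$-way amortization. Concretely, I need to argue that no root-to-leaf path of the recursion tree can sustain more than $k+1$ consecutive include-branches without triggering a saturation event that wipes out all remaining non-neighbors of the newly saturated vertex, and that the exclude-branches interleaved in such a chain contribute exactly the terms $T(s-1),\ldots,T(s-(k+1))$ rather than something weaker. This requires a careful case analysis of the pivot-selection rule, especially the re-pick step of Lines~\ref{alg2:line15}-\ref{alg2:line16} which forces the branching pivot to lie in $\overline{N_C}(u)$ for the closest-to-saturation $u\in P$, together with a precise account of the early-exit condition of Line~\ref{alg2:line11} ensuring that branching is only performed while some vertex in $P\cup C$ still has strictly fewer than $|P|+|C|-k$ neighbors in $G[P\cup C]$.
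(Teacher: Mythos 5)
Your skeleton is the right one---measuring the recursion by $|C|$, noting $|C_S|=|N_{G_i}(v_i)|\le D$, targeting the recurrence $T(s)\le\sum_{j=1}^{k+1}T(s-j)$, and correctly deriving the characteristic polynomial $x^{k+2}-2x^{k+1}+1=0$---but the step you yourself flag as ``the main obstacle'' is a genuine gap, not a routine formalization. Your amortization rests on the claim that no root-to-leaf path can sustain more than $k+1$ consecutive include-branches before a saturation event wipes out all remaining non-neighbors of some $u\in P$. The pivot rule of Algorithm~\ref{alg::2} does not support this as stated: the pivot is re-selected at every level from the minimum-degree vertices of $G[P\cup C]$ with ties broken by maximum $\overline{d_P}(\cdot)$ (Lines~\ref{alg2:line7}--\ref{alg2:line10}), and after an inclusion both the minimum-degree set and the tie-breaking winner can change, so consecutive include-branches may target non-neighbors of \emph{different} vertices $u_1,u_2,\ldots\in P$, each of whose deficiency counters advances by only one. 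A potential tied to a single fixed $u$ therefore does not bound the chain length, and you would need either a global potential argument or a restructuring of the analysis around the grouped $(s+1)$-way branching of Eq~(\ref{eq:branch1})--Eq~(\ref{eq:branch3}), where all of $\overline{N_C}(v_p)=\{w_1,\ldots,w_\ell\}$ for one fixed $v_p\in P$ with $s=\text{sup}_P(v_p)\le k$ and $\ell\ge s$ is processed in a single step, so that the subproblem sizes $s'-1,s'-2,\ldots$ fall out of the branching rule directly rather than from an invariant about the binary tree.

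It is worth knowing that the paper does not carry out this amortization either: its proof of the lemma consists of invoking Theorem~1 of~\cite{cikm22maximal}, which already establishes that this style of branch-and-bound is invoked $O(\gamma_k^{|C|})$ times, and then adding the single observation that is genuinely new here---namely that the candidate set is $C_S=N_{G_i}(v_i)$ with $v_i$ taken in degeneracy order, so $|C|\le D$ and the exponent improves from $n$ to $D$. You captured that observation correctly (via Lemma~\ref{th:subtasks}), and your count of one call to Algorithm~\ref{alg::3} per non-terminal recursion node is also right. So your proposal is more ambitious than the paper's proof, but as written it leaves unproven exactly the combinatorial core that the paper's citation is doing the work for.
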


To see this bound, note that Theorem~1 of~\cite{cikm22maximal} has proved that the branch-and-bound procedure is called for $O(\gamma_k^{|C|})$ times. In~\cite{cikm22maximal}, the candidate set $C$ is taken from vertices within two hops away from each seed vertex $v_i$, so the branch-and-bound procedure is called for $O(\gamma_k^{|C|})\leq O(\gamma_k^{n})$ times. In our case, $C=C_S=N_{G_i}(v_i)$ which is much tighter since $|C|\leq D$, hence the branch-and-bound procedure is called for $O(\gamma_k^{|C|})\leq O(\gamma_k^{D})$ times.

Finally, consider the cost of the recursion body of Algorithm~\ref{alg::2}. Note that besides $d_{P}(.)$, we also maintain $d_{G_i}(.)$ for all vertices in $G_i$, so that Line~\ref{alg2:line7} of Algorithm~\ref{alg::2} (the same applies to Line~\ref{alg2:line16}) can obtain the vertices with minimum $d_{G_i}(.)$ in $O(|P|+|C|)\approx O(D)$ time. This is because $P\cup C\subseteq P_S\cup C_S$, so $O(|P|+|C|)=O(|P_S|+|C_S|)=O(k+D)\approx O(D)$, as $|P_S|\leq k$ and $|C_S|\leq D$.

As for the tightening of $C$ and $X$ in Lines~\ref{alg2:line2}--\ref{alg2:line3} of Algorithm~\ref{alg::2} (the same applies to Line~\ref{alg2:line12}), the time complexity is $O(|P|(|C|+|X|))$. Specifically, we first compute the set of saturated vertices in $P$, denoted by $P^*$. Since we maintain $d_P(.)$, we can find $P^*$ in $O(|P|)$ time by examining if each vertex $u$ has $d_P(u)=|P|-k$. Then, for each vertex $v\in C\cup X$, we do not prune it iff (1)~$v$ is adjacent to all vertices in $P^*$, and meanwhile, (2)~$d_P(v)\geq |P\cup\{v\}|-k=|P|+1-k$. This takes $O(|P^*|(|C|+|X|))=O(|P|(|C|+|X|))$ time.

The recursive body takes time $O(|P|(|C|+|X|))$ which is dominated by the above operation. Note that by Lemma~\ref{th:complexity_alg3}, Algorithm~\ref{alg::2} Line~\ref{alg2:line17} takes only $O(D)\approx O(|P|+|C|)$ time, and the time to select pivot (cost dominated by Line~\ref{alg2:line7}) also takes only $O(|P|+|C|)$ time.

Now we are ready to present the time complexity of Algorithm~\ref{alg::1}.
\begin{theorem}\label{th:complexity}
Given an undirected graph $G=(V,E)$ with degeneracy $D$ and maximum degree $\Delta$, Algorithm~\ref{alg::1} lists all the $k$-plexes with size at least $q$ within time $O\left(nr_1^kr_2\gamma_k^{D}\right)$, where $r_1=\min\left\{\frac{D\Delta}{q-2k+2},n\right\}$ and $r_2 = \min\left\{\frac{D\Delta^2}{q-2k+2},nD\right\}$.
\end{theorem}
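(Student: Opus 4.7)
The plan is to decompose the total running time of Algorithm~\ref{alg::1} into a product of four factors corresponding to its nested loop structure: (a) $O(n)$ iterations of the outer for-loop over seed vertices $v_i$, (b) $O(r_1^k)$ sub-tasks generated for each $v_i$ (by Lemma~\ref{th:subtasks}), (c) $O(\gamma_k^D)$ recursive Branch invocations per sub-task (by Lemma~\ref{th:times}, which exploits the tight bound $|C_S|\le|N_{G_i}(v_i)|\le D$), and (d) $O(r_2)$ work per Branch invocation, which I will establish below. The preprocessing in Lines~\ref{algo1:line1}--\ref{algo1:line2}---$(q-k)$-core peeling and degeneracy ordering---runs in $O(n+m)$ time and is absorbed into the main bound, as is the per-seed construction of $G_i$, $V_i'$, and the initial $C_S$, $X_S$.

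For step (d), I would enumerate the dominant operations inside one Branch call. The filterings of $C$ and $X$ in Lines~\ref{alg2:line2}--\ref{alg2:line3} and the maximality test in Line~\ref{alg2:line12} each cost $O(|P|(|C|+|X|))=O(k(D+|X|))$, since for every vertex in $C\cup X$ we must test adjacency against the saturated vertices of $P$. Pivot selection in Lines~\ref{alg2:line7}--\ref{alg2:line16} costs $O(|P|+|C|)=O(D)$ via the maintained $d_{G_i}(\cdot)$ and $d_P(\cdot)$ arrays, and the upper-bound computation in Line~\ref{alg2:line17} costs $O(D)$ by Lemma~\ref{th:complexity_alg3}. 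Hence the per-call cost is dominated by $O(D+|X|)$ up to constant-in-$k$ factors. To bound $|X|$, observe from Line~\ref{algo1:line9} that $X\subseteq V_i'\cup(N^2_{G_i}(v_i)-S)$: the second summand is $O(r_1)$ by Lemma~\ref{th:subtasks}, and for $V_i'$ I would reuse the second-order common-neighbor counting from Corollary~\ref{corollary:2nd_order}---every vertex in $V_i'\cap N^2_G(v_i)$ must share at least $q-2k+2$ neighbors with $v_i$, while the total supply of such witness pairs is at most $\Delta\cdot\Delta$, giving $|V_i'|=O(\min\{\Delta^2/(q-2k+2),\,n\})$. Absorbing an additional factor of $D$ for the neighbor-list scans of $G_i$ performed across the at-most-$D$ active candidates lifts the per-call bound to $O(D\cdot\min\{\Delta^2/(q-2k+2),\,n\})=O(r_2)$.

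Multiplying the four factors together yields the claimed bound $O(n\,r_1^k\,r_2\,\gamma_k^D)$. The main obstacle I anticipate is making the per-call cost analysis tight: one has to track carefully which operations inside Algorithm~\ref{alg::2} require only adjacency-matrix lookups (and hence are $O(1)$) versus which require full neighbor-list scans, and one has to verify that applying the second-order counting argument to $V_i'$---which is defined over the original $G$ rather than over $G_i$---remains valid after the $(q-k)$-core reduction of Line~\ref{algo1:line1}. Once the $|V_i'|$ and $|X|$ bounds are pinned down, the remaining combination of factors (a)--(d) is routine.
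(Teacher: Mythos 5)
Your proposal follows essentially the same route as the paper's proof: the same four-factor decomposition $O(n)\times O(r_1^k)\times O(\gamma_k^D)\times O(r_2)$ using Lemmas~\ref{th:subtasks} and~\ref{th:times}, and the same second-order common-neighbor counting (supply of at most $\Delta^2$ edges, demand of at least $q-2k+2$ per vertex) to bound the two-hop portion of $X$ by $O(\min\{\Delta^2/(q-2k+2),n\})$. The one place your accounting diverges is the per-call cost: you write $O(|P|(|C|+|X|))=O(k(D+|X|))$, but $|P|$ is not $O(k)$ --- it grows during recursion and is only bounded by $O(D+k)$ via Theorem~\ref{lemma::bound2}, which is exactly where the paper gets the factor of $D$ in $r_2$; your compensating ``extra factor of $D$ for neighbor-list scans'' lands on the correct bound but attributes that factor to the wrong source.
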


\begin{proof}
Please see Appendix~\ref{app:th7}~\cite{fullpaper}.
\end{proof}

\vspace{1mm}
\noindent {\bf Additional Pruning by Vertex Pairs.} We next present how to utilize the further property between vertex pairs in $G_i$ to enable three further pruning opportunities, all based on Lemma~\ref{lemma::bound1} below.

\begin{lemma}
\label{lemma::bound1}
    Given a $k$-plex $P$ and candidate set $C$, the upper bound of the maximum size of a $k$-plex containing $P$ is
    \begin{equation*}
        \underset{u,v\in P}{\text{min}}\Big\{|P|+\text{sup}_P(u)+\text{sup}_P(v)+|N_{u}(C)\cap N_{v}(C)|\Big\}
    \end{equation*}
\end{lemma}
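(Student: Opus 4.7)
The plan is to bound any maximum $k$-plex $P_m$ containing $P$ by partitioning it with respect to a fixed but arbitrary pair $u,v\in P$, and then minimizing over the choice of $(u,v)$. Since all new vertices added to $P$ to form $P_m$ must come from the candidate set, we have $P_m\setminus P\subseteq C$.

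First, I would fix $u,v\in P$ and partition the ``new'' vertices $P_m\setminus P$ into three disjoint pieces based on adjacency to $u$ and $v$:
$$A=(P_m\setminus P)\cap\overline{N(u)},\quad B=(P_m\setminus P)\cap N(u)\cap\overline{N(v)},\quad D=(P_m\setminus P)\cap N(u)\cap N(v).$$
Note I place every non-neighbor of $u$ into $A$ regardless of its relation to $v$; this is the key choice that makes the three classes disjoint and covers $P_m\setminus P$ entirely. Then $|P_m|=|P|+|A|+|B|+|D|$.

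Next I would bound each piece using the definition of $\text{sup}_P$ and the $k$-plex property of $P_m$:
\begin{itemize}
\item For $A$: because $P_m$ is a $k$-plex, $u$ has at most $k$ non-neighbors in $P_m$; exactly $\overline{d_P}(u)$ of these already lie in $P$, so at most $k-\overline{d_P}(u)=\text{sup}_P(u)$ lie in $P_m\setminus P$, i.e.\ $|A|\leq \text{sup}_P(u)$.
\item For $B$: by the same argument applied to $v$, the number of non-neighbors of $v$ in $P_m\setminus P$ is at most $\text{sup}_P(v)$, and $B$ is a subset of those non-neighbors, so $|B|\leq \text{sup}_P(v)$.
\item For $D$: every vertex in $D$ lies in $C$ and is simultaneously adjacent to both $u$ and $v$, hence $D\subseteq N_u(C)\cap N_v(C)$, giving $|D|\leq |N_u(C)\cap N_v(C)|$.
\end{itemize}
Summing these three inequalities yields $|P_m|\leq |P|+\text{sup}_P(u)+\text{sup}_P(v)+|N_u(C)\cap N_v(C)|$, and taking the minimum over all pairs $u,v\in P$ produces the claimed upper bound.

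The only subtle point — and where I would be careful in the write-up — is the asymmetric definition of $A$ and $B$: a vertex that is a non-neighbor of both $u$ and $v$ must be assigned to exactly one of the two ``non-neighbor quotas'' to avoid double counting, and the partition $(A,B,D)$ achieves exactly this (such a vertex lives in $A$ alone). Once this partition is set up correctly, all three bounds are immediate consequences of the $k$-plex condition at $u$ and $v$ and the containment $P_m\setminus P\subseteq C$; no induction or branching argument is required, so I do not expect a genuine obstacle beyond stating the partition cleanly.
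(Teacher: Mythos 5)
Your proof is correct and is essentially the paper's argument: both proofs bound a maximum $k$-plex $P_m\subseteq P\cup C$ by classifying the vertices of $P_m\setminus P$ according to adjacency to a fixed pair $u,v\in P$, charging the non-neighbors against the quotas $\text{sup}_P(u)$ and $\text{sup}_P(v)$ via the $k$-plex condition at $u$ and $v$, and bounding the common neighbors by $|N_C(u)\cap N_C(v)|$. The only cosmetic difference is that you use a three-way disjoint partition (assigning doubly-non-adjacent vertices to $u$'s quota alone), whereas the paper uses the four-way partition and lets the doubly-non-adjacent class be charged to both quotas, which still yields the same upper bound.
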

\begin{proof} 
Please see Appendix~\ref{app:th8}~\cite{fullpaper}.
\end{proof}

Recall that Algorithm~\ref{alg::1} Line~\ref{algo1:line7} enumerates set $S$ from the vertices of $N_{G_i}^2(v_i)$. 
The first pruning rule below checks if two vertices $u_1$, $u_2\in N_{G_i}^2(v_i)$ have sufficient common neighbors in $C_S$, and if not, then $u_1$ and $u_2$ cannot occur together in $S$.

\begin{theorem}
    Let $v_i$ be a seed vertex and $G_i=(V_i, E_i)$ be the corresponding seed subgraph. For any two vertices $u_1$, $u_2\in N_{G_i}^2(v_i)$, if either of the following conditions are met
    \begin{itemize}
        \item $(u_1,u_2)\in E_i$ and $|N_{C_S}(u_1)\cap N_{C_S}(u_2)|<q-k-2 \cdot \max\{k-2,0\}$, 
        \item $(u_1,u_2)\notin E_i$ and $|N_{C_S}(u_1)\cap N_{C_S}(u_2)|<q-k-2 \cdot \max\{k-3,0\}$, 
    \end{itemize}
then $u_1$ and $u_2$ cannot co-occur in a $k$-plex $P$ with $|P|\geq q$. 
\label{reduction1}
\end{theorem}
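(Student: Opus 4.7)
The plan is to prove the contrapositive: assume $u_1$ and $u_2$ co-occur in a $k$-plex $P$ with $|P|\geq q$ produced by the sub-task rooted at $v_i$ (so $v_i\in P$ as well, since every $k$-plex found by $T_{\{v_i\}\cup S}$ contains $v_i$), and show that $|N_{C_S}(u_1)\cap N_{C_S}(u_2)|$ must meet the stated lower bound. Because $P\subseteq V_i$ and both $u_1,u_2\in N_{G_i}^2(v_i)$ are non-neighbors of $v_i$, every vertex of $P$ that is simultaneously adjacent to all of $v_i,u_1,u_2$ lies in $N_{G_i}(v_i)\cap N_{G_i}(u_1)\cap N_{G_i}(u_2)=N_{C_S}(u_1)\cap N_{C_S}(u_2)$, so it suffices to lower-bound the number of such common neighbors inside $P$.

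Setting $A=\overline{N_P}(v_i)$, $B=\overline{N_P}(u_1)$, $C=\overline{N_P}(u_2)$, the $k$-plex definition gives $|A|+|B|+|C|\leq 3k$, and the desired common-neighbor set equals $P\setminus(A\cup B\cup C)$. The crucial observation is that $\{v_i,u_1,u_2\}$ already occupies a known portion of this non-neighbor budget. Counting how many of $A,B,C$ each of them lies in: $v_i$ is its own non-neighbor and is a non-neighbor of both $u_1$ and $u_2$, hence contributes $3$ to $|A|+|B|+|C|$; each of $u_1,u_2$ is its own non-neighbor and a non-neighbor of $v_i$, contributing at least $2$, plus $1$ more exactly when the other is also a non-neighbor. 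Thus in the adjacent case $(u_1,u_2)\in E_i$ the fixed contribution from $\{v_i,u_1,u_2\}$ is $3+2+2=7$, and in the non-adjacent case it is $3+3+3=9$. Since every other vertex of $A\cup B\cup C$ contributes at least $1$, the number of such extra vertices is at most $3k-7$ in the adjacent case and $3k-9$ in the non-adjacent case, giving $|A\cup B\cup C|\leq 3+(3k-7)=3k-4$ when $k\geq 3$, and $|A\cup B\cup C|\leq 3+(3k-9)=3k-6$ when $k\geq 4$. Using the identities $3k-4=k+2(k-2)$ and $3k-6=k+2(k-3)$ and covering the remaining small-$k$ range by $\max\{\cdot,0\}$ (those cases are vacuous, because $u_1,u_2\in S$ requires $|S|\geq 2$ while Algorithm~\ref{alg::1} enforces $|S|\leq k-1$), we obtain
\begin{equation*}
|N_{C_S}(u_1)\cap N_{C_S}(u_2)|\ \geq\ |P|-|A\cup B\cup C|\ \geq\ q-k-2\max\{k-2,0\}
\end{equation*}
in the adjacent case, and the analogue with $\max\{k-3,0\}$ in the non-adjacent case. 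Taking contrapositives yields the two pruning conditions stated in the theorem.

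The main obstacle I anticipate is bookkeeping around the $\max$ truncations: the clean bounds $3k-4$ and $3k-6$ hold only for $k\geq 3$ and $k\geq 4$ respectively, and the theorem's compact formula hides the fact that smaller $k$ cases are vacuous since the non-neighbor budget $3k$ is already exhausted by $\{v_i,u_1,u_2\}$ alone. The argument also crucially relies on $v_i\in P$: a plain application of the second-order bound (Theorem~\ref{lemma::2nd_order}) to the pair $(u_1,u_2)$ alone would lose the $\Theta(k)$ improvement, because it is the extra non-neighbor slots forced by $v_i$ that shave the gap between $q-2k$ and $q-3k+4$.
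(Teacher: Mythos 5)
Your proof is correct and arrives at exactly the paper's bounds, but by a genuinely different (and arguably cleaner) route. The paper proves this theorem by invoking its two-pivot upper bound (Lemma~\ref{lemma::bound1}) with $P=P_S=\{v_i\}\cup S$ and pivots $u_1,u_2$, i.e., $q\le|P_S|+\text{sup}_{P_S}(u_1)+\text{sup}_{P_S}(u_2)+|N_{C_S}(u_1)\cap N_{C_S}(u_2)|$, and then bounds $|P_S|\le k$ (from $|S|\le k-1$) and $\text{sup}_{P_S}(u_j)\le k-2$ or $k-3$ according to whether only $v_i$, or both $v_i$ and the other $u_{j'}$, are non-neighbors of $u_j$ inside $P_S$. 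You instead run a single symmetric count over the three non-neighborhoods $A=\overline{N_P}(v_i)$, $B=\overline{N_P}(u_1)$, $C=\overline{N_P}(u_2)$, using $|A|+|B|+|C|\le 3k$ and charging the known multiplicities ($7$ in the adjacent case, $9$ in the non-adjacent case) of $\{v_i,u_1,u_2\}$ against that budget. The two computations are numerically identical --- indeed $A=P_S$ here, because every candidate in $C_S=N_{G_i}(v_i)$ is adjacent to $v_i$, so your $|A|\le k$ plays exactly the role of the paper's $|P_S|\le k$ --- but your version is self-contained (it does not route through Lemma~\ref{lemma::bound1}) and makes explicit why the presence of $v_i$ buys the improvement over a naive application of Theorem~\ref{lemma::2nd_order} to the pair $(u_1,u_2)$.

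One small slip in your boundary bookkeeping: you assert the clean bound $|A\cup B\cup C|\le 3k-6$ for the non-adjacent case only when $k\ge4$ and consign $k=3$ to the ``vacuous'' bucket, but $k=3$ is not vacuous there, since $|S|\le k-1=2$ still admits $S=\{u_1,u_2\}$. Fortunately your own counting already covers it: at $k=3$ the fixed contribution $9$ exhausts the entire budget $3k$, forcing $A\cup B\cup C=\{v_i,u_1,u_2\}$ and hence $|N_{C_S}(u_1)\cap N_{C_S}(u_2)|\ge q-3=q-k-2\max\{k-3,0\}$, as required. The genuinely vacuous range is only $k\le2$, where $|S|\le k-1<2$ rules out both $u_1$ and $u_2$ belonging to $S$.
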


\begin{proof} 
Please see Appendix~\ref{app:th9}~\cite{fullpaper}.
\end{proof}
   
Similar analysis can be adapted for the other two cases: (1)~$u_1\in N_{v_i}(G_i)$ and $u_2\in N^2_{v_i}(G_i)$, and (2)~$u_1$, $u_2\in N_{v_i}(G_i)$, which we present in the next two theorems.

\begin{theorem}
Let $v_i$ be a seed vertex and $G_i=(V_i, E_i)$ be the corresponding seed subgraph. For any two vertices $u_1\in N^2_{G_i}(v_i)$ and $u_2\in N_{G_i}(v_i)$, let us define $C_S^-=C_S-\{u_2\}$, then if either of the following two conditions are met 
    \begin{itemize}
        \item $(u_1,u_2)\in E_i$ and $|N_{C_S^-}(u_1)\cap N_{C_S^-}(u_2)|<q-2k-2\cdot\max\{k-2,0\}$,
        \item $(u_1,u_2)\notin E_i$ and $|N_{C_S^-}(u_1)\cap N_{C_S^-}(u_2)|<q-k-\max\{k-2,0\}-\max\{k-2,1\}$,
    \end{itemize}
then $u_1$ and $u_2$ cannot co-occur in a $k$-plex $P$ with $|P|\geq q$.
\label{reduction3}
\end{theorem}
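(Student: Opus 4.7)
The plan is to follow the same template as Theorem~\ref{reduction1} and argue by contradiction: assume a $k$-plex $P^*$ with $|P^*|\ge q$ contains $\{v_i,u_1,u_2\}$, then partition $P^*$ according to its adjacency with $v_i$. Specifically, write $P^*=\{v_i,u_1,u_2\}\cup T_1\cup T_2$ where $T_1=P^*\cap(N_{G_i}(v_i)-\{u_2\})\subseteq C_S^-$ collects the non-$u_2$ neighbors of $v_i$ inside $P^*$, and $T_2=P^*\cap(N^2_{G_i}(v_i)-\{u_1\})$ collects the remaining 2-hop non-neighbors of $v_i$. Because the non-neighbors of $v_i$ inside $P^*$ are exactly $\{v_i,u_1\}\cup T_2$, the $k$-plex constraint $\overline{d_{P^*}}(v_i)\le k$ forces $|T_2|\le k-2$ and hence $|T_1|\ge q-k-1$. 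The key difference from Theorem~\ref{reduction1} is that $u_2$ is now a neighbor of $v_i$ rather than a 2-hop vertex, which shifts $v_i$'s ``fixed'' non-neighbor count from three to two and correspondingly raises the lower bound on $|T_1|$ by one.

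I would then bound, in each edge case, the vertices of $T_1$ failing to neighbor $u_1$ or $u_2$, by applying the $k$-plex constraints $\overline{d_{P^*}}(u_1)\le k$ and $\overline{d_{P^*}}(u_2)\le k$. When $(u_1,u_2)\in E_i$, the only fixed non-neighbors of $u_1$ in $P^*$ are $\{u_1,v_i\}$ and of $u_2$ is $\{u_2\}$, yielding $|T_1-N_G(u_1)|\le k-2$ and $|T_1-N_G(u_2)|\le k-1$. When $(u_1,u_2)\notin E_i$, $u_2$ joins $u_1$'s fixed non-neighbor list and vice versa, tightening the bounds to $|T_1-N_G(u_1)|\le k-3$ and $|T_1-N_G(u_2)|\le k-2$. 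Since $T_1\cap N_G(u_1)\cap N_G(u_2)\subseteq N_{C_S^-}(u_1)\cap N_{C_S^-}(u_2)$, inclusion-exclusion inside $T_1$ gives
\begin{equation*}
|N_{C_S^-}(u_1)\cap N_{C_S^-}(u_2)|\ge |T_1|-|T_1-N_G(u_1)|-|T_1-N_G(u_2)|,
\end{equation*}
and substituting the bounds derived above contradicts the hypothesized upper bound in each edge case.

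The main obstacle is cleanly matching the $\max\{k-2,0\}$ and $\max\{k-2,1\}$ terms in the theorem's formula, which are tailored to absorb the small-$k$ degenerate regime. For example, when $k=2$ and $(u_1,u_2)\notin E_i$, the inequality $|T_1-N_G(u_1)|\le k-3=-1$ is already infeasible, so no such $P^*$ can exist and the conclusion holds vacuously; the use of $\max\{k-2,1\}$ in place of $k-2$ encodes exactly this boundary. I would therefore split off $k\le 2$ (where the $v_i$- or $u_1$-constraint alone eliminates $P^*$) and treat $k\ge 3$ with the linear arithmetic above; a quick check shows that the derived lower bounds $q-3k+2$ and $q-3k+4$ for the two edge cases are at least the values $q-4k+4$ and $q-3k+4$ appearing in the theorem statement, so the desired contradiction indeed goes through.
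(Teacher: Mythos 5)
Your proposal is correct and is essentially the paper's own argument: the paper applies Lemma~\ref{lemma::bound1} to $P^+=P_S\cup\{u_2\}$ with the pair $(u_1,u_2)$ and bounds $|P^+|\le k+1$ and the two support values from the fixed non-neighbors, which is exactly the inclusion--exclusion count you carry out inline on $T_1\subseteq C_S^-$ (your bound $|T_2|\le k-2$ from $v_i$'s degree slack plays the role of $|S|\le k-1$, and your $|T_1-N(u_i)|$ bounds are the $\text{sup}_{P^+}(u_i)$ bounds). Your arithmetic, including the vacuous small-$k$ cases and the observation that the derived bounds dominate the stated thresholds, matches the paper's derivation.
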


\begin{proof}
Please see Appendix~\ref{app:th10}~\cite{fullpaper}.
\end{proof}

\begin{theorem}
Let $v_i$ be a seed vertex and $G_i=(V_i, E_i)$ be the corresponding seed subgraph. For any two vertices $u_1$, $u_2\in N_{G_i}(v_i)=C_S$, let us define $C_S^-=C_S-\{u_1, u_2\}$, then if either of the following two conditions are met 
    \begin{itemize}
        \item $(u_1,u_2)\in E_i$ and $|N_{C_S^-}(u_1)\cap N_{C_S^-}(u_2)|<q-3k$,
        \item $(u_1,u_2)\notin E_i$ and $|N_{C_S^-}(u_1)\cap N_{C_S^-}(u_2)|<q-k-2\cdot\max\{k-1,1\}$,
    \end{itemize}
then $u_1$ and $u_2$ cannot co-occur in a $k$-plex $P$ with $|P|\geq q$.
\label{reduction2}
\end{theorem}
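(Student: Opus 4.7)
The plan is a proof by contradiction. Suppose there exists a $k$-plex $P$ with $|P| \geq q$ containing both $u_1$ and $u_2$. Since such a $P$ is enumerable only in the task group of seed $v_i$, we also have $v_i \in P$ and $P \subseteq V_i$. I decompose $P \setminus \{v_i, u_1, u_2\} = A \cup B$, where $A = P \cap C_S^-$ collects the first-hop candidates of $v_i$ (other than $u_1, u_2$) and $B = P \cap N^2_{G_i}(v_i)$ collects the second-hop vertices. The $k$-plex constraint $\overline{d_P}(v_i) \leq k$, one of whose slots is already consumed by $v_i$ itself, immediately gives $|B| \leq k - 1$.

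The bulk of the argument is to bound $|A|$ via the $k$-plex saturation at $u_1$ and $u_2$. Every $w \in A$ is either a common neighbor of $u_1$ and $u_2$ (hence an element of $N_{C_S^-}(u_1) \cap N_{C_S^-}(u_2)$) or a non-neighbor of at least one of them; the latter population is capped by the slack remaining in $u_1$'s and $u_2$'s non-neighbor quotas after accounting for the non-neighbors they already have inside $P_0 = \{v_i, u_1, u_2\}$. A union bound then yields $|A| \leq |N_{C_S^-}(u_1) \cap N_{C_S^-}(u_2)| + \mathrm{slack}(u_1) + \mathrm{slack}(u_2)$, which combined with $|P_0| = 3$ and the estimate $|B| \leq k - 1$ furnishes the upper bound on $|P|$ needed for the contradiction.

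The two cases of the theorem differ precisely in how much of $u_1$'s and $u_2$'s non-neighbor budgets $P_0$ already consumes. In Case 1, when $(u_1, u_2) \in E_i$, neither $v_i$ nor $u_2$ is a non-neighbor of $u_1$ inside $P_0$, so $\mathrm{slack}(u_1) = k - 1$ and symmetrically $\mathrm{slack}(u_2) = k - 1$; the accumulated bound becomes $|P| \leq 3k + |N_{C_S^-}(u_1) \cap N_{C_S^-}(u_2)|$, which under the hypothesis $|\cdots| < q - 3k$ contradicts $|P| \geq q$. In Case 2, when $(u_1, u_2) \notin E_i$ and $k \geq 2$, $u_2$ additionally absorbs one of $u_1$'s slots (and vice versa), so the slacks tighten to $k - 2$, yielding $|P| \leq (3k - 2) + |N_{C_S^-}(u_1) \cap N_{C_S^-}(u_2)|$ and a contradiction under $|\cdots| < q - k - 2(k-1)$. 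The $\max\{k-1, 1\}$ in the theorem merely accommodates the degenerate regime $k = 1$, where a $1$-plex is a clique and non-adjacent $u_1, u_2$ cannot co-occur at all, so the stated (looser) threshold is still correct. The principal hurdle will be the careful bookkeeping of which non-neighbors of $u_1$ (resp.\ $u_2$) lie inside $P_0$ versus $A$ versus $B$, so as to avoid double counting and recover exactly the constants $3k$ and $q - k - 2\max\{k-1, 1\}$ appearing in the theorem.
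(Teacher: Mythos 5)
Your proof is correct and takes essentially the same route as the paper's: the paper applies Lemma~\ref{lemma::bound1} to $P^+=P_S\cup\{u_1,u_2\}$ with candidate set $C_S^-$, bounding $|P^+|\le k+2$ and $\sup_{P^+}(u_j)\le k-1$ (adjacent case) or $\le\max\{k-2,0\}$ (non-adjacent case), which is exactly your decomposition into $P_0\cup B$ (of size at most $k+2$ via $v_i$'s budget) plus the slack accounting for $|A|$. The only cosmetic difference is that you re-derive the two-vertex upper bound inline rather than invoking the paper's lemma as a black box.
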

\begin{proof}
Please see Appendix~\ref{app:th11}~\cite{fullpaper}.
\end{proof}

We next explain how Theorems~\ref{reduction1}, \ref{reduction3} and~\ref{reduction2} are used in our algorithm to prune the search space. Recall that $G_i$ is dense, so we use adjacency matrix to maintain the information of $G_i$. Here, we also maintain a boolean matrix $T$ so that for any $u_1,u_2\in V_i$, $T[u_1][u_2]=$ {\em false} if they are pruned by Theorem~\ref{reduction1} or~\ref{reduction3} or~\ref{reduction2} due to the number of common neighbors in the candidate set being below the required threshold; otherwise, $T[u_1][u_2]=$ {\em true}. Note that given $T$, we can obtain $T[u_1][u_2]$ in $O(1)$ time to determine if $u_1$ and $u_2$ can co-occur.

Recall from Figure~\ref{set_enum} that we enumerate $S$ via a set-enumeration tree. When we enumerate $S$ in Algorithm~\ref{alg::1} Line~\ref{algo1:line7}, assume that the current $S$ is expanded from $S'$ by adding $u$, and let $ext(S')$ be those candidate vertices that can still expand $S'$, then by Theorem~\ref{reduction1}, we can incrementally prune those candidate vertices $u'\in ext(S')$ with $T[u][u']=$ {\em false} to obtain $ext(S)$ that can expand $S$ further.

We also utilize Theorem~\ref{reduction3} to further shrink $C_S$ in Algorithm~\ref{alg::1} Line~\ref{algo1:line8}. Assume that the current $S$ is expanded from $S'$ by adding $u$, then we can incrementally prune those candidate vertices $u'\in C_{S'}$ with $T[u][u']=$ {\em false} to obtain $C_S$.

Finally, we utilize Theorem~\ref{reduction2} to further shrink $C$ and $X$ in Algorithm~\ref{alg::2} Lines~\ref{alg2:line2} and~\ref{alg2:line3}. Specifically, assume that $v_p$ is newly added to $P$, then Line~\ref{alg2:line2} now becomes
$$C\leftarrow\{v\in C\ \wedge\ P\cup\{v\} \text{ is a }k\text{-plex}\ \wedge\ T[v_p][v]= \mbox{\em true}\}.$$
Recall from Algorithm~\ref{alg::1} Line~\ref{algo1:line9} that vertices in $X$ may come from $V'_i$ or $V_i$. So for each $v\in X$, if $v\in V_i$, then we prune $v$ if $T[v_p][v]= \mbox{\em false}$. This is also applied in Line~\ref{alg2:line12} when we compute the new exclusive set to check maximality.

\vspace{1mm}
\noindent {\bf Variant of the Proposed Algorithm.} Recall from Algorithm~\ref{alg::2} Lines~\ref{alg2:line15}-\ref{alg2:line16} that we always select the pivot $v_p$ to be from $C$, so that our upper-bound-based pruning in Lines~\ref{alg2:line17}-\ref{alg2:line18} can be applied to try to prune the branch in Line~\ref{alg2:line19}. In fact, if $v_p\in P$, FaPlexen~\cite{aaai2020maximal} proposed another branching method to reduce the search space, also adopted by ListPlex~\cite{www22maximal}. Specifically, let us define $s=\text{sup}_P(v_p)$, and $\overline{N_{C}}(v_p)=\{w_1, w_2, \ldots, w_\ell\}$, then we can move at most $s$ vertices from $\overline{N_{C}}(v_p)$ to $P$ to produce $k$-plexes. Note that $s\leq k$ since $\text{sup}_P(v_p)= k-\overline{d_P}(v_p)$ so $s$ is small, and that $s\leq\ell$ since otherwise, $P\cup C$ is a $k$-plex (because $s=\text{sup}_P(v_p)>\ell=|\overline{N_{C}}(v_p)|=\overline{d_{C}}(v_p)$ means $k-(|P|-d_P(v_p))>|C|-d_C(v_p)$, i.e., $d_{P\cup C}>|P|+|C|-k$) and this branch of search terminates (see Algorithm~\ref{alg::2} Lines~\ref{alg2:line11}-\ref{alg2:line14}). Therefore, let the current task be $\langle P,C,X\rangle$, then it only needs to produce $s+1$ branches without missing $k$-plexes:
\begin{equation}\label{eq:branch1}
\langle P,\ C-\{w_1\},\ X\cup\{w_1\}\rangle,
\end{equation}
For $i=2, \cdots, s$,
\begin{equation}\label{eq:branch2}
\langle P\cup\{w_1,\ldots,w_{i-1}\},\ C-\{w_1,\ldots,w_i\},\ X\cup\{w_i\}\rangle,
\end{equation}
\begin{equation}\label{eq:branch3}
\langle P\cup\{w_1,\ldots,w_\ell\},\ C-\{w_1,\ldots,w_\ell\},\ X\rangle.
\end{equation}

In summary, if $v_p\in P$, we can apply Eq~(\ref{eq:branch1})--Eq~(\ref{eq:branch3}) for branching, while if $v_p\in C$, we can apply the upper bound defined by Eq~(\ref{eq:our_ub}) to allow the pruning of $\langle P\cup\{v_p\}, C-\{v_p\}, X\rangle$ in Algorithm~\ref{alg::2} Line~\ref{alg2:line19}.

Therefore, besides the original Algorithm~\ref{alg::2} denoted by  {\bf Ours}, we also consider a variant of Algorithm~\ref{alg::2} which, when $v_p\in P$ is selected by Lines~\ref{alg2:line7}--\ref{alg2:line10}, uses Eq~(\ref{eq:branch1})--Eq~(\ref{eq:branch3}) for branching rather than re-picking a pivot $v_p\in C$ as in Lines~\ref{alg2:line15}--\ref{alg2:line16}. We denote this variant by {\bf Ours\_P}. As we shall see in Section~\ref{sec:results}, Ours\_P is generally not as time-efficient as Ours, showing that upper-bound based pruning is more effective than the branch reduction scheme of Eq~(\ref{eq:branch1})--Eq~(\ref{eq:branch3}), so Ours is selected as our default algorithm in our experiments.

\section{Parallelization}\label{sec:parallel}

Recall from Algorithm~\ref{alg::1} that we generate initial task groups $T_{\{v_i\}}$ each creating and maintaining $G_i$. The sub-tasks of $T_{\{v_i\}}$ are $T_{\{v_i\}\cup S}$ that are generated by enumerating $S\subseteq N_{G_i}^2(v_i)$, and each such task runs the recursive $Branch(.)$ procedure of Algorithm~\ref{alg::2} (recall Line~\ref{algo1:line10} of Algorithm~\ref{alg::1}).

\begin{figure}[t]
\centering
\includegraphics[width=0.8\columnwidth]{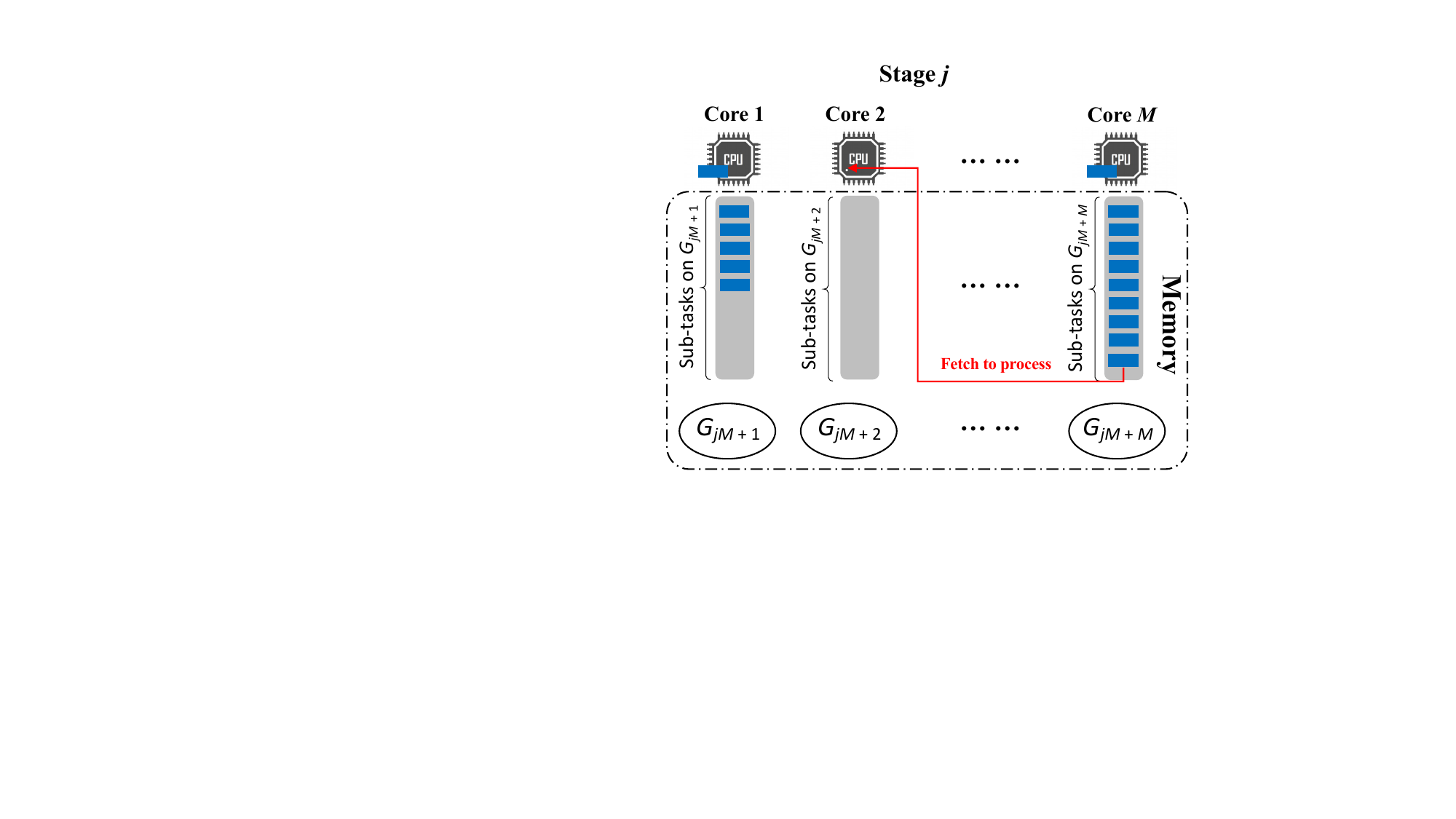}
% \vspace{-3mm}
\caption{Illustration of Parallel Processing}\label{parallel}
\end{figure}

We parallelize Algorithm~\ref{alg::1} on a multi-core machine with $M$ CPU cores (and hence $M$ working threads to process tasks) in stages. In each stage~$j$ ($j=0, 1, \cdots$), the $M$ working threads obtain $M$ new task groups generated by the next $M$ seed vertices in $\eta$ for parallel processing, as illustrated in Figure~\ref{parallel}.

Specifically, at the beginning of Stage~$j$, the $i\textsuperscript{th}$ thread creates and processes the task group with seed vertex $v_{jM+i}$ by creating the subgraph $G_{jM+i}$, and enumerating $S$ to create the sub-tasks with $P_S=\{v_{jM+i}\}\cup S$ and adding them into a task queue $Q_i$ local to Thread~$i$. Then, each thread~$i$ processes the tasks in its local queue $Q_i$ to maximize data locality and hence cache hit rate (since the processing is on a shared graph $G_i$). If Thread~$i$ finishes all tasks in $Q_i$ but other threads are still processing their tasks in Stage~$j$, then Thread~$i$ will obtain tasks from another non-empty queue $Q_{i'}$ for processing to take over some works of Thread~$i'$. This approach achieves load balancing while maximizing the CPU cache hit rate.

Stage~$j$ ends when the tasks in all queues are exhausted, after which we release the memory used by these task groups (e.g., seed subgraphs $G_i$) and move forward to Stage~$(j+1)$ to process the next $M$ seed vertices. The stages are repeated until all seed vertices in $\eta$ are exhausted.

So far, we treat each sub-task $T_{P_S}$ as an independent task run by a thread in its entirety. However, some sub-tasks $T_{P_S}$ can become stragglers that take much longer time to complete than other tasks (e.g., due to a much larger set-enumeration subtree under $P_S$). We propose to use a timeout mechanism to further decompose each straggler task into many smaller tasks to allow parallel processing. Specifically, let $t_0$ be the time when the current task is created, and let $t_{cur}$ be the current time. Then in Algorithm~\ref{alg::2} Line~\ref{alg2:line19} (resp.\ Line~\ref{alg2:line20}), we only recursively call $Branch(.)$ over $\langle P\cup\{v_p\}, C-\{v_p\}, X\rangle$ (resp.\ $\langle P, C-\{v_p\}, X\cup\{v_p\}\rangle$) if $t_{cur}-t_0\leq\tau_{time}$, where $\tau_{time}$ is a user-defined task timeout threshold. Otherwise, let the thread processing the current task be Thread~$i$, then we create a new task and add it to $Q_i$. The new tasks can reuse the seed subgraph of its task group, but need to materialize new status variables such as containers for keeping $P$, $C$ and $X$, and the boolean matrix $T[u_1][u_2]$ for pruning by vertex pairs. % to Lyuheng, is this correct?

In this way, a straggler task will call $Branch(.)$ for recursive backtracking search as usual until $t_{cur}-t_0\leq\tau_{time}$, after which it backtracks and creates new tasks to be added to $Q_i$. If a new task also times out, it will be further decomposed in a similar manner, so stragglers are effectively eliminated at the small cost of status variable materialization.
 
% \vspace{-2mm}
\section{Experiment}\label{sec:results}
In this section, we conduct comprehensive experiments to evaluate our method for large maximal $k$-plex enumeration, and compare it with the other existing methods. We also conduct an ablation study to show the effectiveness of our optimization techniques. 
% Our code is released at \url{https://github.com/chengqihao/Maximal-kPlex}.

\vspace{1mm}

\noindent \textbf{Datasets and Experiment Setting. }
Following~\cite{d2k,www22maximal,cikm22maximal}, we use 18 real-world datasets in our experiments as summarized in Table~\ref{table::dataset}, where $n$ and $m$ are the numbers of vertices and edges, respectively; $\Delta$ indicates the maximum degree and $D$ is the degeneracy. These public graph datasets are obtained from Stanford Large Network Dataset Collection (SNAP)~\cite{snapnets} and the Laboratory for Web Algorithmics (LAW)~\cite{BoVWFI,law}. Similar to the previous works~\cite{d2k,www22maximal,cikm22maximal}, we roughly categorize these graphs into three types: small, medium, and large. The ranges of the number of vertices for these three types of graphs are $[1, 10^4)$, $[10^4, 5 \times 10^6)$, and $[5 \times 10^6, +\infty)$. 

Our code is written in C++14 and compiled by g++-7.2.0 with the -O3 flag. All the experiments are conducted on a platform with 24 cores (Intel Xeon Gold 6248R CPU 3.00GHz) and 128GB RAM.

\renewcommand{\arraystretch}{1.1} % default is 1.0
\begin{table}[t]
 % \scriptsize
\centering
  \caption{Datasets}
  % \vspace{-2mm}
  \label{tbl-data}
  \resizebox{0.84\columnwidth}{!}{
      \begin{tabular}{c|c|c|c|c}
        \toprule[2pt]
        Network & $n$ & $m$ & $\Delta$ & D \\
        \hline
        jazz & 198 & 2742 & 100 & 29 \\
        wiki-vote & 7115 & 100,762 & 1065 & 53\\
        lastfm & 7624 & 27,806 & 216 & 20\\
        \hline
        as-caida & 26,475 & 53,381 & 2628 & 22\\
        soc-epinions & 75,879 & 405,740 & 3044 & 67\\
        soc-slashdot & 82,168 & 504,230 & 2552 & 55\\
        email-euall & 265,009 & 364,481 & 7636 & 37\\
        com-dblp    & 317,080 & 1,049,866 & 343   & 113\\
        amazon0505 & 410,236 & 2,439,437 & 2760 & 10\\
        soc-pokec & 1,632,803 & 22,301,964 & 14,854 & 47\\
        as-skitter & 1,696,415 & 11,095,298 & 35,455 & 111\\
        \hline
        enwiki-2021 & 6,253,897 & 136,494,843 &232,410 & 178\\
        arabic-2005 & 22,743,881 & 553,903,073 & 575,628 & 3247\\
        uk-2005 & 39,454,463 & 783,027,125 & 1,776,858 & 588\\
        it-2004 & 41,290,648 & 1,027,474,947 & 1,326,744 & 3224\\
        webbase-2001 & 115,554,441 & 854,809,761 & 816,127 & 1506\\
		\bottomrule[2pt]
\end{tabular}
}
\label{table::dataset}
\end{table}
 \setlength{\textfloatsep}{10.5pt}

\renewcommand{\arraystretch}{1.1} % default is 1.0

\begin{table*}[t]
    \centering
	% \caption{Running Time (sec) of Listing Large Maximal $k$-Plexes \protect\\ from Small and Medium Graphs by FP, ListPlex, Ours\_P ,Ours}
    % \captionsetup{font={small}}
    \caption{Running Time (sec) of Listing Large Maximal $k$-Plexes from Small and Medium Graphs by Various Algorithms}
 % \vspace{-2mm}
    \resizebox{2.1\columnwidth}{!}{
    \setlength{\tabcolsep}{2mm}{
    \begin{tabular}{c|c|c|c|c|c|c|c|c|c|c|c|c|c|c|c}
	\toprule[2pt]
	\multirow{2}{*}{\tabincell{c}{ Network\\ $(n,m)$}} & \multirow{2}{*}{$k$}    & \multirow{2}{*}{$q$} & \multirow{2}{*}{\#$k$-plexes}  &  \multicolumn{4}{c|}{Running time (sec)}   & \multirow{2}{*}{\tabincell{c}{Network\\ $(n,m)$}} & \multirow{2}{*}{$k$}    & \multirow{2}{*}{$q$} & \multirow{2}{*}{\#$k$-plexes}  &  \multicolumn{4}{c}{Running time (sec)}\\
	\cline{5-8} \cline{13-16}
	& & &  & FP & ListPlex & Ours\_P & Ours & & & &  & FP & ListPlex & Ours\_P & Ours\\
	\hline
    \tabincell{c}{jazz (198, 2742)} & 4 & 12 & 2,745,953  & 3.68 & 4.12 & 3.92 & \bf 2.87 &\multirow{5}{*}{\tabincell{c}{soc-epinions\\ (75,879, 405,740)}} & \multirow{2}{*}{2} & 12 & 49,823,056  & 278.56  & 153.64 & 157.98 &\bf 130.14 \\
	\cline{1-8}
    \tabincell{c}{lastfm (7624, 27,806)} & 4 & 12 & 1,827,337  & 2.39 & 2.58 & 2.52 & \bf 2.04 &  &  & 20 & 3,322,167  & 16.65 & 17.00 &16.65 & \bf 14.01 \\
    \cline{1-8} \cline{10-16}
    \multirow{3}{*}{\tabincell{c}{as-caida\\(26,475, 53,381)}} & 2 & 12 & 5336  & \bf0.03 & \bf0.03 & \bf0.03 & \bf0.03 & &\multirow{2}{*}{3}  & 20 & 548,634,119  & 2240.68 & 2837.49 & 2442.10 &\bf1540.87 \\
    \cline{2-8}  
     & 3 & 12 & 281,251 & 0.94 & 0.78 & 0.67 &\bf0.53 & &  & 30 & 16,066  & 3.29 & 4.66 & 2.52 &\bf2.11\\
    \cline{2-8} \cline{10-16}
    & 4 & 12 & 15,939,891  & 51.39 & 45.31 &37.52 &\bf26.08 & & 4 & 30 & 13,172,906 &139.59 &545.82 &198.88 &\bf93.47\\
    \cline{1-16}
    \multirow{3}{*}{\tabincell{c}{amazon0505\\(410,236, 2,439,437)}} & 2 & 12 & 376  & 0.37 & 0.11 &0.08 &\bf0.07 &\multirow{6}{*}{\tabincell{c}{wiki-vote\\(7115, 100,762)}} &\multirow{2}{*}{2} & 12 & 2,919,931  & 19.21 & 14.77 &13.62 &\bf12.58\\
    \cline{2-8}
    & 3 & 12 & 6347  & 0.57 & 0.26 &\bf0.20 &0.22 & &  & 20 & 52 & \bf0.37 & 0.46 &0.44 &0.42 \\
    \cline{2-8} \cline{10-16}
    & 4 & 12 & 105,649  & 1.47 & 0.99 &0.91 &\bf0.78 & & \multirow{2}{*}{3} & 12 & 458,153,397  & 2680.68 & 2037.51 &1746.63 &\bf1239.83 \\
    \cline{1-8}   
    \multirow{4}{*}{\tabincell{c}{as-skitter\\(1,696,415, 11,095,298)}} & \multirow{2}{*}{2} & 60 & 87,767  & \bf4.37 & 5.14 &4.41 & 4.68 & & & 20 & 156,727  & 11.91 & 8.13 &4.67 & \bf4.15 \\
    \cline{10-16}
     & & 100 & 0  & 1.49 & 0.14 & 0.14 & \bf0.11 & &\multirow{2}{*}{4} & 20 & 46,729,532  & 483.97 & 1025.54 &455.37 &\bf252.40 \\
     \cline{2-8}
    &\multirow{2}{*}{3} & 60 & 9,898,234  & 283.52 & 1010.48 & 302.53 & \bf234.17 & & & 30 & 0  & \bf0.03 & 0.07 & 0.08 &0.06 \\
    \cline{9-16}
     &  & 100 & 0  & 1.49 & 0.14 &0.14 & \bf0.11 &\multirow{6}{*}{\tabincell{c}{soc-slashdot\\(82,168, 504,230)}} &\multirow{2}{*}{2} & 12 & 30,281,571  & 91.10 & 65.57 &66.16 &\bf51.41 \\
    \cline{1-8}
    \multirow{6}{*}{\tabincell{c}{email-euall\\(265,009, 364,481)}}& \multirow{2}{*}{2} & 12 & 412,779  & 1.61 & 1.34 &1.27 & \bf1.11  &  &  & 20 & 13,570,746  & 38.44 & 36.85 &38.68 &\bf28.89 \\
    \cline{10-16}
    & & 20 & 0  & 0.08 & \bf0.05 & \bf0.05 & \bf0.05 & &\multirow{3}{*}{3} & 12 & 3,306,582,222  & 10368.47 & 8910.28 &8894.69 &\bf5995.67 \\
    \cline{2-8}
     &\multirow{2}{*}{3} & 12 & 32,639,016  & 101.16 & 91.59 & 82.07 &\bf56.22 & & & 20 & 1,610,097,574  & 3950.16 & 5244.41 & 5325.80 &\bf3377.68 \\
    &  & 20 & 2637  & 0.34 & 0.30 & 0.21 &\bf0.19 &  &  & 30 & 4,626,307  & 32.36 &  76.44 & 49.43 & \bf30.10 \\
    \cline{2-8} \cline{10-16}
    &\multirow{2}{*}{4} & 12 & 1,940,182,978  & 7085.39 & 6041.73 & 5385.09 &\bf3535.63 & & 4 & 30 & 1,047,289,095  & 4167.85 & 10239.25 & 7556.32 &\bf4016.08 \\
    \cline{9-16} 
    & & 20 & 1,707,177  & 11.99 & 21.40 &13.30 & \bf7.70 & \multirow{7}{*}{\tabincell{c}{soc-pokec\\(1,632,803, 22,301,964)}} & \multirow{3}{*}{2} & 12 & 7,679,906  & 50.77 & 39.16 &36.78 & \bf35.53 \\
    \cline{1-8}
    \multirow{6}{*}{\tabincell{c}{com-dblp\\(317,080, 1,049,866)}} & \multirow{2}{*}{2} & 12 & 12,544  & 0.34 & \bf0.11 & \bf0.11 & \bf0.11 & & & 20 & 94,184  & \bf9.02 & 10.67 &10.24 &10.00 \\
     &  & 20 & 5049  & 0.27 & 0.06 &0.06 & \bf0.05 & & & 30 & 3  & 5.98 & 5.83 &5.72 & \bf5.46 \\
    \cline{2-8} \cline{10-16}
    &\multirow{2}{*}{3}  & 12 & 3,003,588  & 6.13 & 3.75 &3.68 & \bf3.51 & & \multirow{3}{*}{3} & 12 & 520,888,893  & 1719.85 & 1528.07 &1347.97 &\bf996.43 \\
    &  & 20 & 2,141,932  & 4.28 & 2.83 &2.77 & \bf2.57 & & & 20 & 5,911,456  & 30.52 & 39.38 &33.38 & \bf26.94 \\
    \cline{2-8}
    & \multirow{2}{*}{4} & 12 & 610,150,817  & 914.84 & 729.16 &720.13 & \bf666.98 & &  & 30 & 5  & 6.10 & 6.35 &6.47 & \bf5.92 \\
    \cline{10-16}
    &  &20  & 492,253,045  & 726.93 & 621.17 &612.59 & \bf546.30 &  & 4 & 20 & 318,035,938  & 1148.87 & 1722.87 &1292.65 & \bf780.34 \\
     \bottomrule[2pt]
    \end{tabular}
    }
    }
    \label{table::seq-comparison}
    % \vspace{-2mm}
\end{table*}

\begin{figure*}[t]
% \vspace{-3mm}
    \centering
    \subfigure[wiki-vote $(k = 3)$]{\includegraphics[width=0.245\textwidth]{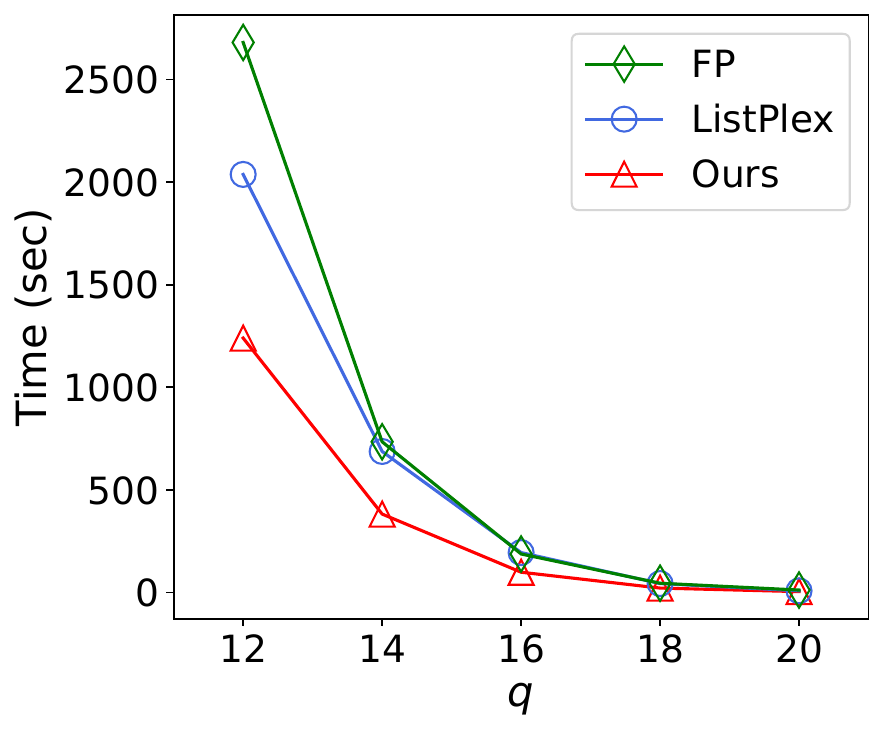}}
    \subfigure[wiki-vote $(k = 4)$]{\includegraphics[width=0.245\textwidth]{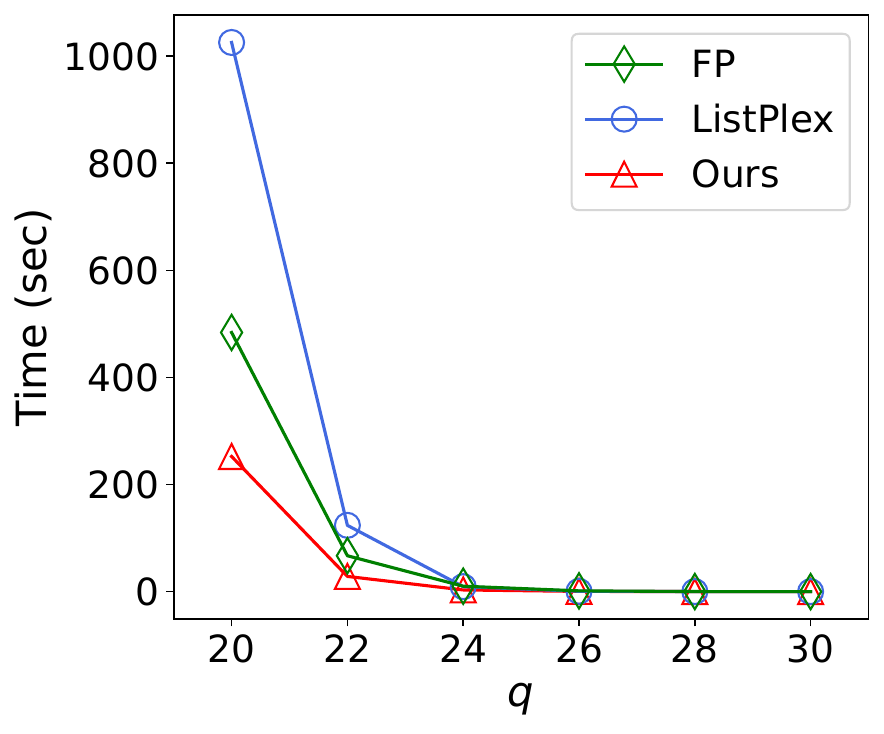}}
    \subfigure[soc-pokec $(k = 3)$]{\includegraphics[width=0.245\textwidth]{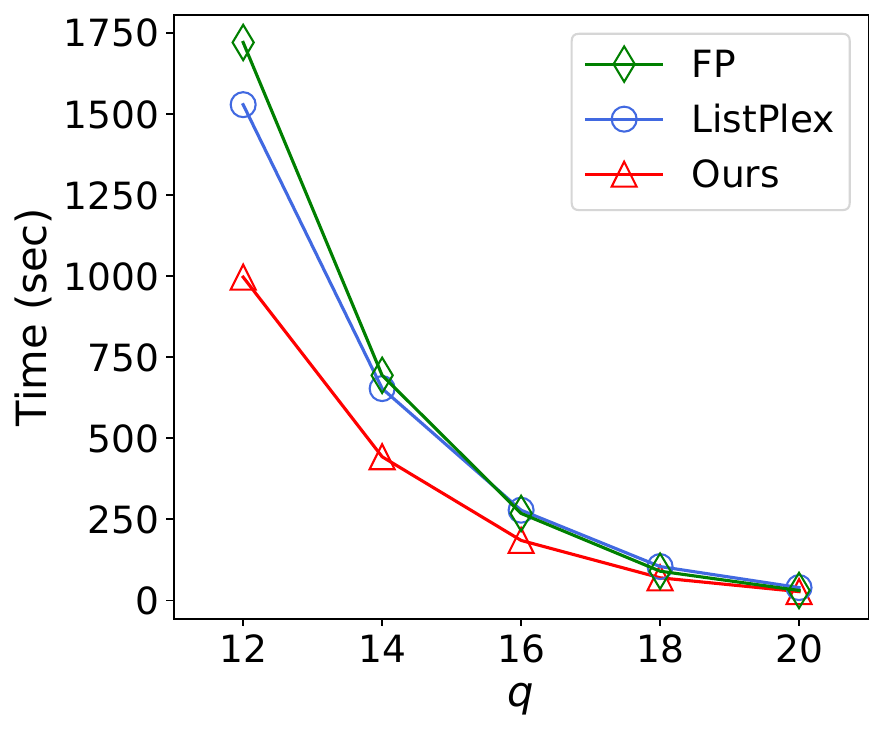}}
    \subfigure[soc-pokec $(k = 4)$]{\includegraphics[width=0.245\textwidth]{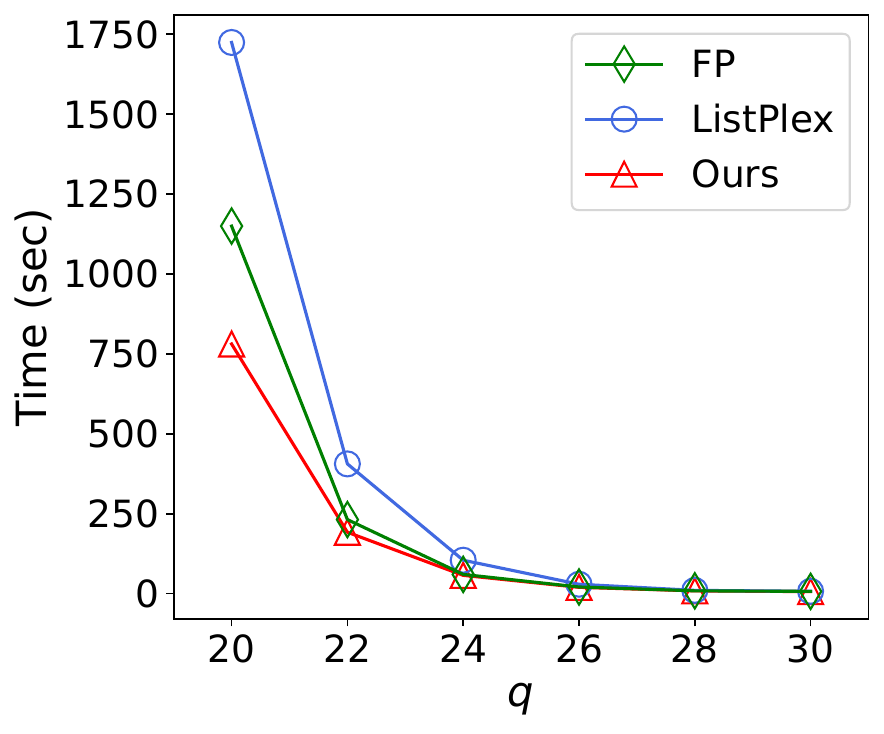}}
    % \vspace{-2mm}
    \caption{Running Time (sec) of the Three Algorithms on Various Datasets and Parameters}
    \label{fig:different-q}
\end{figure*}

\vspace{1mm}
\noindent \textbf{Existing Methods for Comparison. } A few methods have been proposed for large maximal $k$-plex enumeration, including D2K~\cite{d2k}, CommuPlex~\cite{aaai2020maximal}, ListPlex~\cite{www22maximal}, and FP~\cite{cikm22maximal}. Among them, ListPlex~\footnote{\url{https://github.com/joey001/ListPlex}} and FP~\footnote{\url{https://github.com/qq-dai/kplexEnum}} outperform all the earlier works in terms of running time according to their experiments~\cite{cikm22maximal, www22maximal}, so they are chosen as our baselines for comparison. 
Note that ListPlex and FP are concurrent works, so there is no existing comparison between them. Thus, we choose these two state-of-the-art algorithms as baselines and compare our algorithm with them. 
Please refer to Section~\ref{sec:related} for a more detailed review of ListPlex and FP. 

In our description hereafter, efficiency means the running time unless otherwise specified. For all the tested algorithms, the running time includes the time for core decomposition, the time for subgraph construction, and the time for $k$-plex enumeration, but the graph loading time is excluded since it is a fixed constant.

Following the tradition of previous works, we set the $k$-plex size lower bound $q$ to be at least $(2k-1)$ which guarantees the connectivity of the $k$-plexes outputted. 

\vspace{1mm}
\noindent \textbf{Parameter Setting.} 
For experiments on sequential execution, we set parameters $k = 2, 3, 4$ and $q = 12, 20, 30$ following the parameter settings in~\cite{www22maximal, cikm22maximal}. Note that some parameter settings return no valid maximal $k$-plexes, while others lead to existing algorithms running for prohibitive amount of time, so we avoid reporting those settings. 
For as-skitter, using a small $q$ (e.g., 12, 20, 30) is too expensive so we use a larger value for $q$ following~\cite{www22maximal}. 

For experiments on parallel execution, using experiments that can finish quickly cannot justify the need for parallel execution. We, therefore, pick the settings of $q$ so that the job needs to run for some time to obtain quite some relatively large $k$-plexes. 

\vspace{1mm}
\noindent \textbf{Performance of Sequential Execution. } We first compare the sequential versions of our algorithm and the two baselines ListPlex and FP. 
Since sequential algorithms can be slow (esp.\ for the baselines), we use small and medium graphs. 

We have extensively tested our algorithm by comparing its outputs with those of ListPlex and FP on various datasets, and have verified that all three algorithms return the same result set for each dataset and parameters $(k, q)$.
Table~\ref{table::seq-comparison} shows the results where we can see that all three algorithms output the same number of $k$-plexes for each dataset and parameter $(k, q)$. In terms of time-efficiency, our algorithm outperforms ListPlex and FP except for rare cases that are easy (i.e., where all algorithms finish very quickly). 
For our two algorithm variants, Ours is consistently faster than Ours\_P, except for rare cases where both finish quickly in a similar amount of time. Moreover, our algorithm is consistently faster than ListPlex and FP. Specifically, Ours is up to $5 \times$ faster than ListPlex (e.g., soc-epinions, $k = 4$, $q = 30$), and up to $2 \times$ (e.g., email-euall, $k = 4$, $q = 12$) faster than FP, respectively. Also, there is no clear winner between ListPlex and FP: for example, ListPlex can be 3.56$\times$ slower than FP (e.g., as-skitter, $k = 3$, $q = 60$), but over 10$\times$ faster in other cases (e.g., as-skitter, $k = 3$, $q = 100$).

\renewcommand{\arraystretch}{1.2}
\begin{table*}[t]
    \centering
	\caption{Running Time (sec) of Listing Large Maximal $k$-Plexes on Large Graphs by Parallel Algorithms (16 Threads)}
	% \vspace{-2mm}
    \label{tbl-community1}
    \resizebox{2.1\columnwidth}{!}{
    \begin{tabular}{c|c|c|c|c|c|c|c|c|c|c|c|c|c|c|c|c}
	\toprule[2pt]
    \multirow{2}{*}{\tabincell{c}{Network}} &\multirow{2}{*}{$n$} & \multirow{2}{*}{$m$}  &\multicolumn{7}{c|}{$k=2$} &\multicolumn{7}{c}{$k=3$}\\
    \cline{4-17}
    & & &$q$ &$\tau_{best}$ (ms) &\#$k$-plexes & FP & ListPlex & Ours & Ours ($\tau_{best}$) & $q$ &$\tau_{best}$(ms) &\#$k$-plexes & FP & ListPlex & Ours & Ours ($\tau_{best}$)\\
    \hline
    enwiki-2021 & 6,253,897 & 136,494,843 & 40 &0.01 & 1,443,280 &241.18 &291.22 &\bf154.99 & 151.01 & 50 &0.001 & 40,997 &19056.73 &3860.17 &\bf1008.26 & 1006.43\\
    \hline
    arabic-2005 & 2,2743,881 & 553,903,073 & 900 & 1 & 224,870,898 & 1873.71 & 417.91 & \bf388.42 & 385.52 & 1000 & 0.1 & 34,155,502 & 708.36 & 70.10 & \bf67.98 & 67.98\\
    \hline 
    uk-2005 & 39,454,463 & 783,027,125 & 250 & 0.01 & 159,199,947 & FAIL & 194.68 & \bf165.54 & 164.20 & 500 & 0.1 & 116,684,615 & 553.03 & 56.68 & \bf52.06 & 52.06 \\
    \hline
    it-2004 & 41,290,648 & 1,027,474,947 & 1000 &20 & 66,067,542 & 1958.70 & 2053.83 & \bf934.80 & 907.36 & 2000 &0.1 & 197,679,229 & 17785.82 & 458.83 & \bf401.13 & 401.13\\
    \hline
    webbase-2001 & 115,554,441 & 854,809,761 & 400 &0.1 & 59674227 & 222.81 & 67.45 & \bf60.93 &60.93 & 800 &0.1 & 1,785,341,050 & 15446.46 & 3312.95 & \bf3014.44 &3014.442\\ 
     \bottomrule[2pt]
    \end{tabular}
    }
\end{table*}

\begin{figure*}[ht]
    \centering
    \subfigure[$k=2$]
    {\includegraphics[width=0.48\textwidth]{./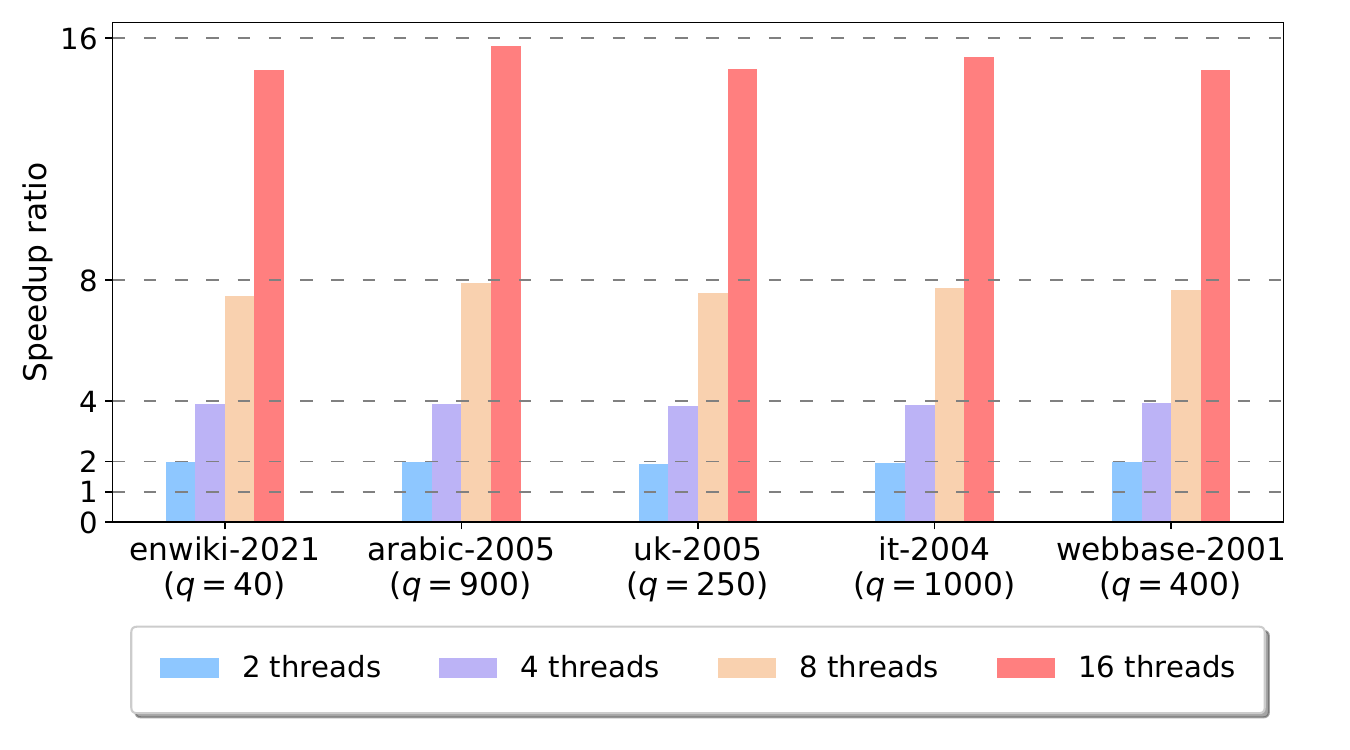}}
    \subfigure[$k=3$]
    {\includegraphics[width=0.48\textwidth]{./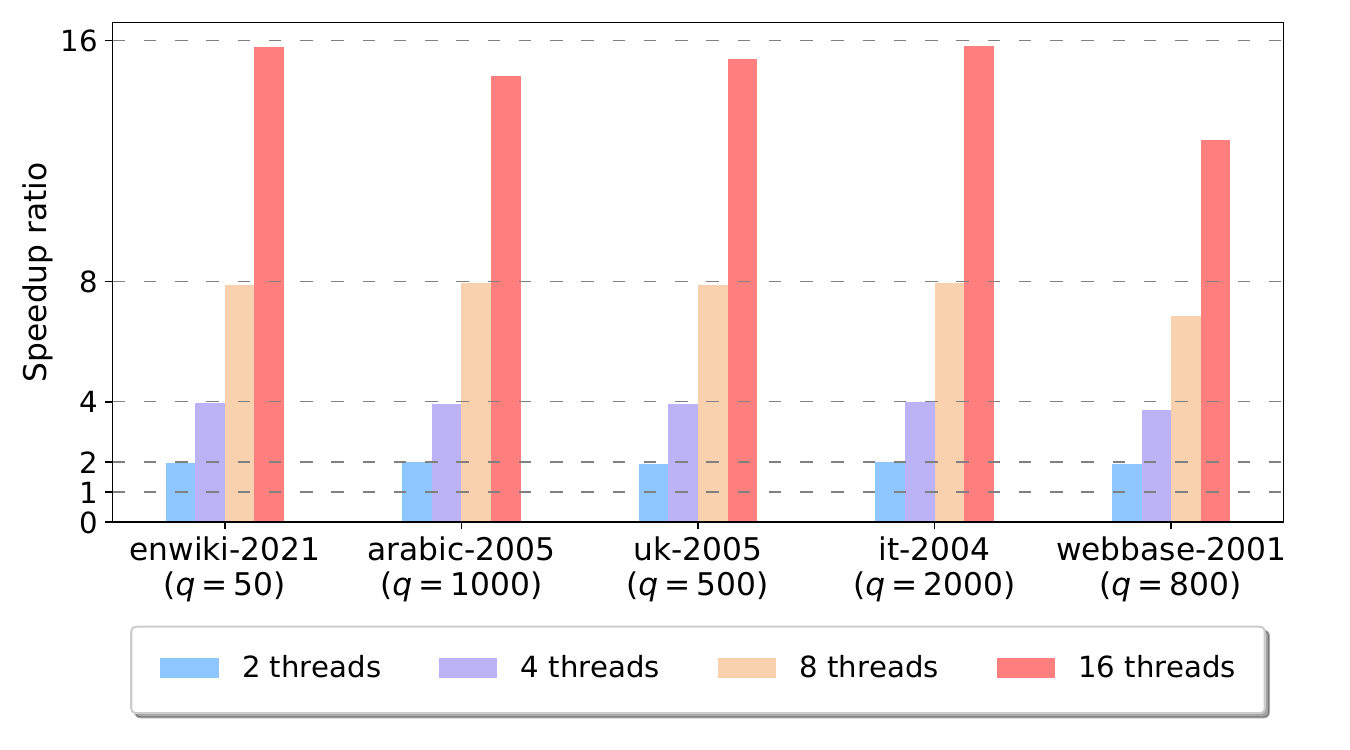}}
    % \vspace{-5pt}
    \caption{Speedup Ratio of Our Parallel Algorithm in Five Large Graphs} 
    \label{fig-parallel}
\end{figure*}

We can see that our algorithm has more advantages in efficiency when the number of sub-tasks is larger. This is because our upper bounding and pruning techniques can effectively prune unfruitful sub-tasks. For instance, on dataset email-euall when $k = 4$ and $q = 12$, there are a lot of sub-tasks and the number of $k$-plexes outputted is thus huge; there, FP takes $7085.39$ seconds while Ours takes only $3535.63$ seconds. 

We also study how the performance of sequential algorithms changes when $q$ varies, and the results are shown in Figure~\ref{fig:different-q}. Due to space limitation, we only show results for four datasets, and more results can be found in Figure~\ref{fig:different-q-total} in Appendix~\ref{app::different-q-ablation-total}~\cite{fullpaper}. In each subfigure, the horizontal axis is $q$, and the vertical axis is the total running time. As Figure~\ref{fig:different-q} shows, Ours (red line) consistently uses less time than ListPlex and FP. For example, Ours is $4 \times$ faster than ListPlex on wiki-vote when $k = 4$, $q = 20$.

As for the performance between ListPlex and FP, we can see from Figure~\ref{fig:different-q} that when $k$ is small, ListPlex (blue line) is always faster than FP (green line) with different values of $q$. As $k$ becomes larger, FP can become faster than ListPlex. Note that the time complexity of ListPlex and FP are $O\left(n^{2 k}+n(D \Delta)^{k+1} \gamma_k^D\right)$ and $O\left(n^2 \gamma_k^n\right)$, respectively where $\gamma_k < 2$ is a constant. Therefore, when $k$ is small, the time complexity of ListPlex is smaller than FP; but as $k$ becomes large, the number of branches increases quickly and the upper-bounding technique in FP becomes effective. These results are also consistent with the statement in FP's paper~\cite{cikm22maximal}: {\em the speedup of FP increases dramatically with the increase of $k$}. As far as we know, this is the first time to compare the performance of ListPlex and FP, which are proposed in parallel very recently.

We notice that our reported running time of FP is slower than that reported in their paper~\cite{cikm22maximal} albeit the same settings, though our hardware is even more powerful. However, we double checked that we have faithfully run FP following their GitHub repo's instructions. %A possible reason for the inconsistency could be the difference in memory bandwidth.}

We also compare the peak memory consumption of three algorithms. Please see Appendix~\ref{app::memory}~\cite{fullpaper} for the results. To summarize, ListPlex and Ours report similar memory usage, while FP needs more memory to store the intermediate results even just on medium datasets. For example, the memory usage of FP, ListPlex, and Ours on soc-pokec are 937.52~MB, 431.69~MB, and 431.26~MB.

\vspace{1mm}
\noindent \textbf{Performance of Parallel Execution.} We next compare the performance of the parallel versions of Ours, ListPlex, and FP, using the large graphs. Note that both ListPlex and FP provide their own parallel implementations, but they cannot eliminate straggler tasks like Ours, which adopts a timeout mechanism (c.f.\ Section~\ref{sec:parallel}). For Ours, we fix the timeout threshold $\tau_{time}=0.1$ ms by default to compare with parallel ListPlex and FP. We also include a variant ``Ours ($\tau_{best}$)'' which tunes $\tau_{time}$ to find its best value (i.e., $\tau_{best}$) that minimizes the running time for each individual dataset and each parameter pair $(k, q)$. 

Table~\ref{tbl-community1} shows the running time of parallel FP, ListPlex, Ours ($\tau_{time}=0.1$ ms) and Ours ($\tau_{time}=\tau_{best}$) running with 16 threads. Note that the tuned values of $\tau_{best}$ are also shown in Table~\ref{tbl-community1}, which vary in different test cases. 
Please refer to Appendix~\ref{app:tau}~\cite{fullpaper} for the experimental results on tuning $\tau_{time}$, where we can see that unfavorable values of $\tau_{time}$ (e.g., those too large for load balancing) may slow down the computation significantly. Overall, our default setting $\tau_{time}=0.1$ ms consistently performs very close to the best setting when $\tau_{time}=\tau_{best}$ in all test cases shown in Table~\ref{tbl-community1}, so it is a good default choice.

Compared with parallel ListPlex and FP, parallel Ours is significantly faster. For example, Ours is $18.9 \times$ and $3.8 \times$ faster than FP and ListPlex on dataset enwiki-2021 ($k=3$ and $q=50$), respectively. Note that FP is said to have {\em very high parallel performance}~\cite{cikm22maximal} and ListPlex also claims that it {\em can reach a nearly perfect speedup}~\cite{www22maximal}. Also note that FP fails on uk-2005 when $k=2$ and $q=250$, likely due to a bug in the code implementation of parallel FP. In fact, FP can be a few times slower than ListPlex, since its parallel implementation does not parallelize the subgraph construction step: all subgraphs are constructed in serial at the beginning which can become the major performance bottleneck.

We also evaluate the scale-up performance of our parallel algorithm. Figure~\ref{fig-parallel} shows the speedup results, where we can see that Ours scales nearly ideally with the number of threads on all the five large datasets for all the tested parameters $k$ and $q$ used in Table~\ref{tbl-community1}. For example, on dataset it-2004 ($k = 3$ and $q = 2000$), it achieves $7.93 \times$ and $15.82 \times$ speedup with $8$ and $16$ threads, respectively.

\vspace{1mm}
\noindent \textbf{Ablation Study.} We now conduct ablation study to verify the effectiveness of our upper-bound-based pruning technique as specified in Lines~\ref{alg2:line17}-\ref{alg2:line18} of Algorithm~\ref{alg::2}, where the upper bound is computed with Eq~(\ref{eq:our_ub}). While ListPlex does not apply any upper-bound-based pruning, FP uses one that requires a time-consuming sorting procedure in the computation of upper bound (c.f., Lemma~5 of~\cite{cikm22maximal}). 

\renewcommand{\arraystretch}{1.0}
\begin{table}[t]
    \centering
	\caption{Effect of Different Upper Bounding Techniques}
        % \vspace{-2mm}  
	\label{tbl-community4}
    \resizebox{0.85\columnwidth}{!}{
    \begin{tabular}{c|c|c|c|c|c}
	\toprule[1pt]
	\multirow{2}{*}{\tabincell{c}{Network}} & \multirow{2}{*}{$k$}    & \multirow{2}{*}{$q$} &  \multicolumn{3}{c}{ Running time (sec)}
	\\
	\cline{4-6}
	& &  & Ours\textbackslash ub & Ours\textbackslash ub+fp & Ours \\
     \hline \multirow{4}{*}{\tabincell{c}{wiki-vote}} &\multirow{2}{*}{3} &12 &1393.50 &1319.05 &\bf1239.83 \\
    & & 20 &5.20 &4.72 &\bf4.15 \\
    \cline{2-6}
    &\multirow{2}{*}{4} &20 &530.48 &280.75 &\bf252.40 \\
    & & 30 &0.14 &0.13 &\bf0.06 \\
    \hline \multirow{4}{*}{\tabincell{c}{soc-epinions}} &\multirow{2}{*}{2} & 12 &138.82 &142.06 &\bf 130.14 \\
    & & 20 &14.92 &15.48 &\bf 14.01 \\
    \cline{2-6}
    &\multirow{2}{*}{3} & 20 &1699.49 &1687.29 &\bf 1540.87 \\
    & & 30 &2.87 &2.44 &\bf 2.11 \\
    \hline \multirow{4}{*}{\tabincell{c}{email-euall}} &\multirow{2}{*}{3} &12 &62.85 &63.83 &\bf 56.22 \\
    & & 20 &0.29 &0.28 &\bf 0.19 \\
    \cline{2-6}
    &\multirow{2}{*}{4} & 12 &4367.88 &3961.40 &\bf3535.63 \\
    & & 20 &13.01 &9.31 &\bf7.70 \\
    \hline \multirow{4}{*}{\tabincell{c}{soc-pokec}} &\multirow{2}{*}{3} &12 &1039.61 &1022.14 &\bf996.43 \\
    & & 20 &27.21 &29.19 &\bf26.94 \\
    \cline{2-6}
    &\multirow{2}{*}{4} &20 &988.90 &877.95 &\bf780.34 \\
    & & 30 &6.91 &6.76 &\bf6.37 \\
    
    \bottomrule[1pt]
    \end{tabular}
    }
    % \vspace{-2mm}
\end{table}

\begin{figure*}[htbp]
    \centering
    %\subfigure[soc-epinions $(k = 2)$]{\includegraphics[width=0.22\textwidth]{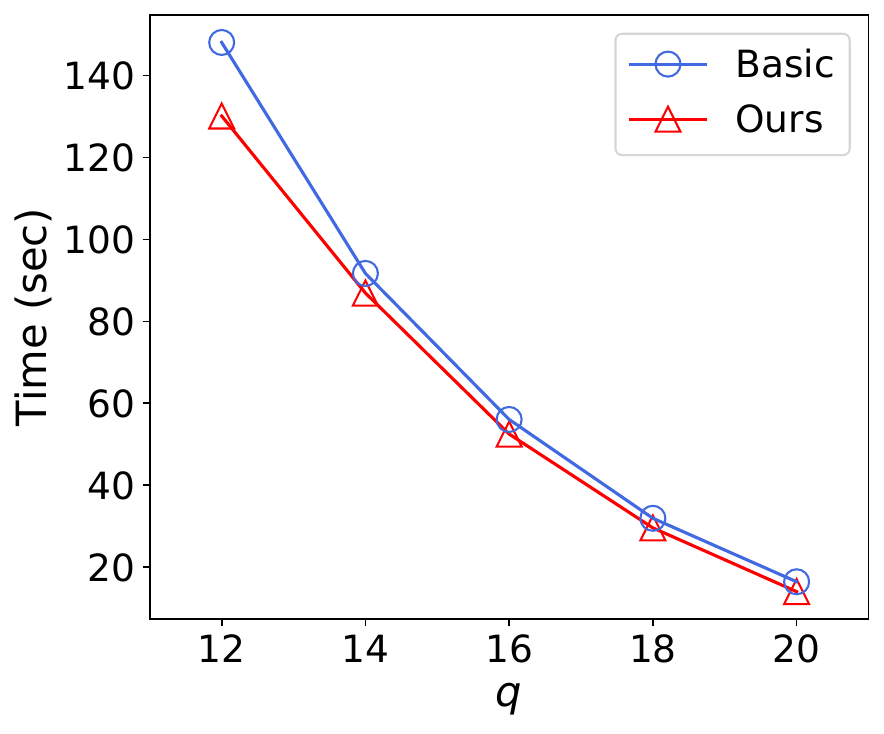}}
    \subfigure[soc-epinions $(k = 3)$]{\includegraphics[width=0.245\textwidth]{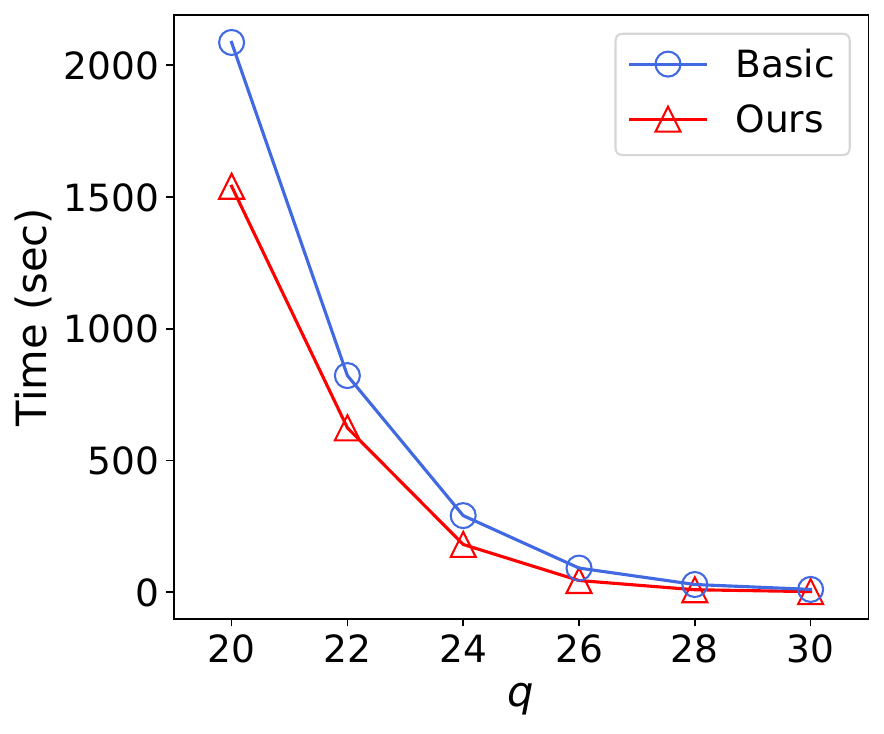}}
    %\subfigure[email-euall $(k = 3)$]{\includegraphics[width=0.22\textwidth]{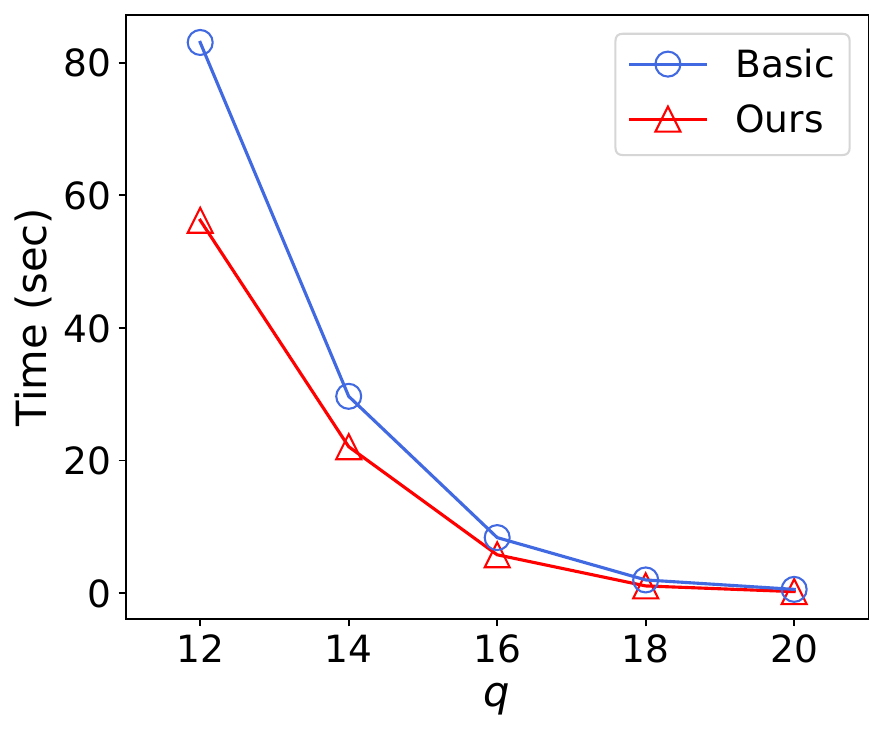}}
    \subfigure[email-euall $(k = 4)$]{\includegraphics[width=0.245\textwidth]{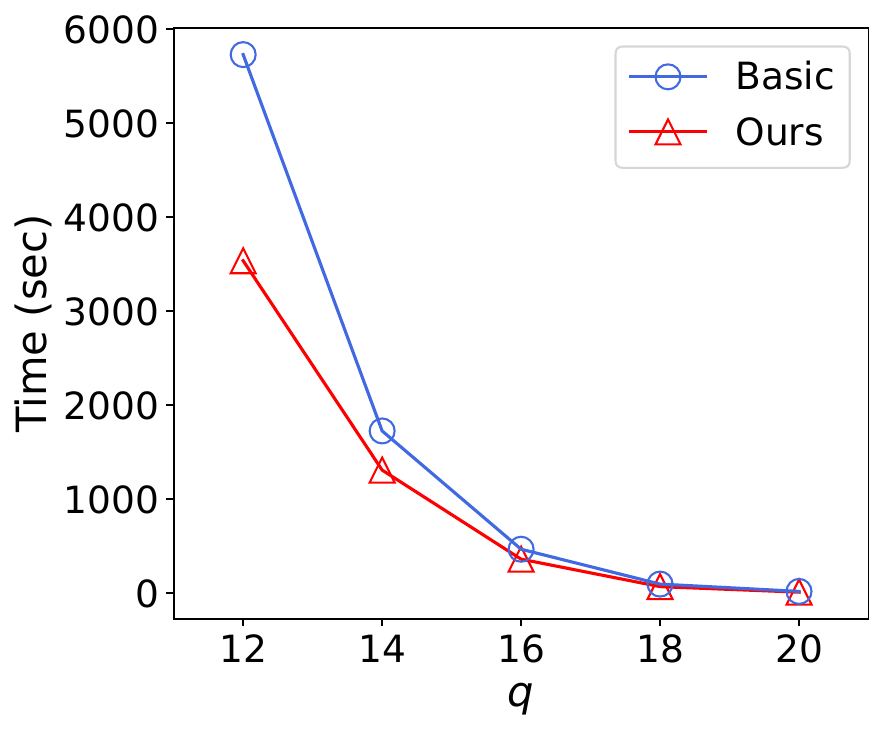}}
    %\subfigure[wiki-vote $(k = 3)$]{\includegraphics[width=0.22\textwidth]{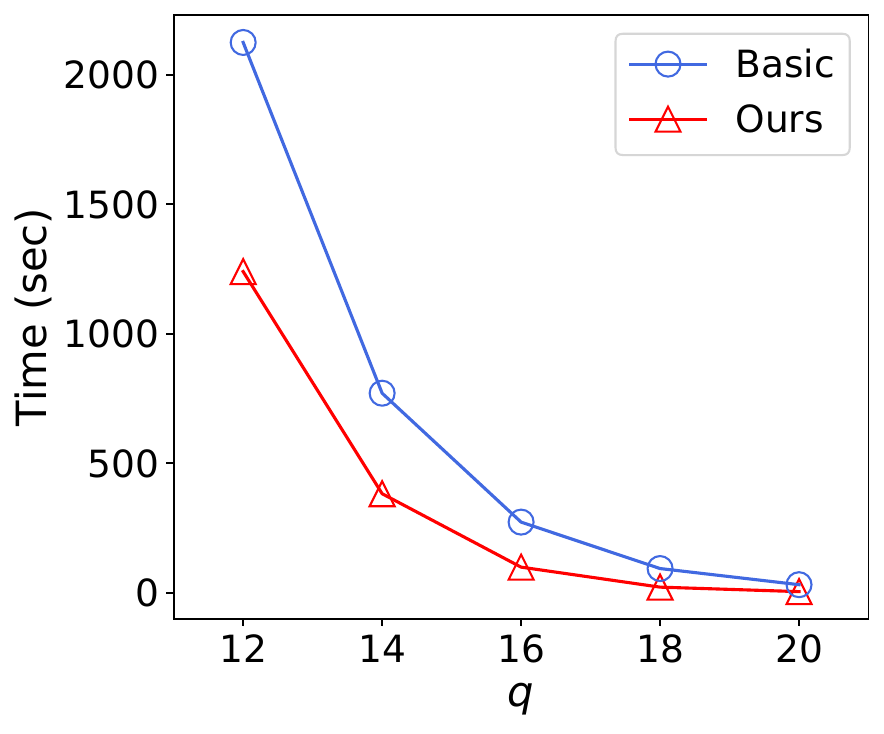}}
    \subfigure[soc-pokec $(k = 3)$]{\includegraphics[width=0.245\textwidth]{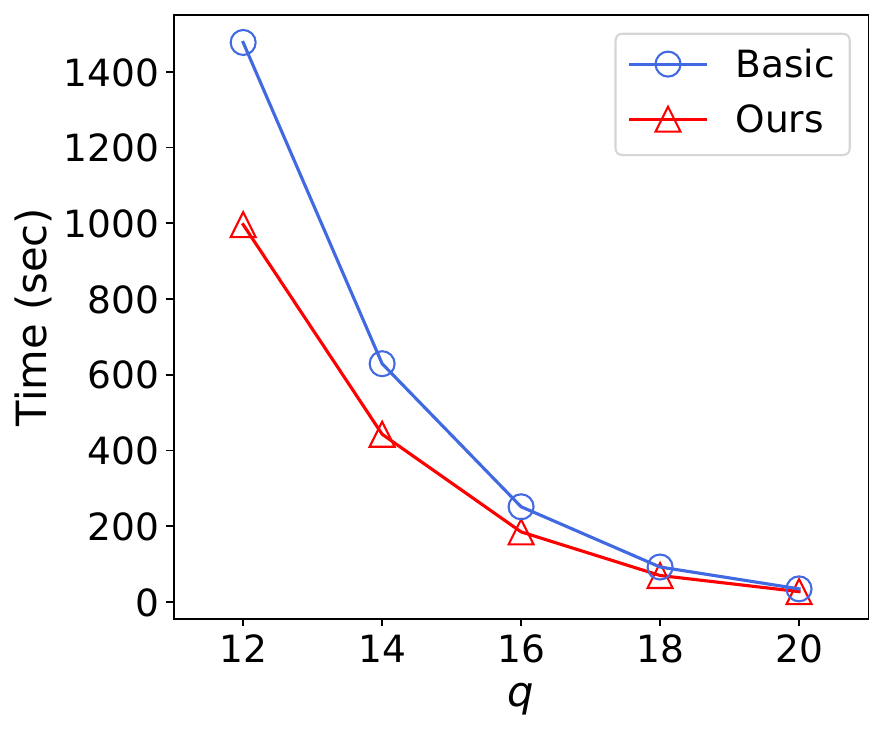}}
    \subfigure[wiki-vote $(k = 4)$]{\includegraphics[width=0.245\textwidth]{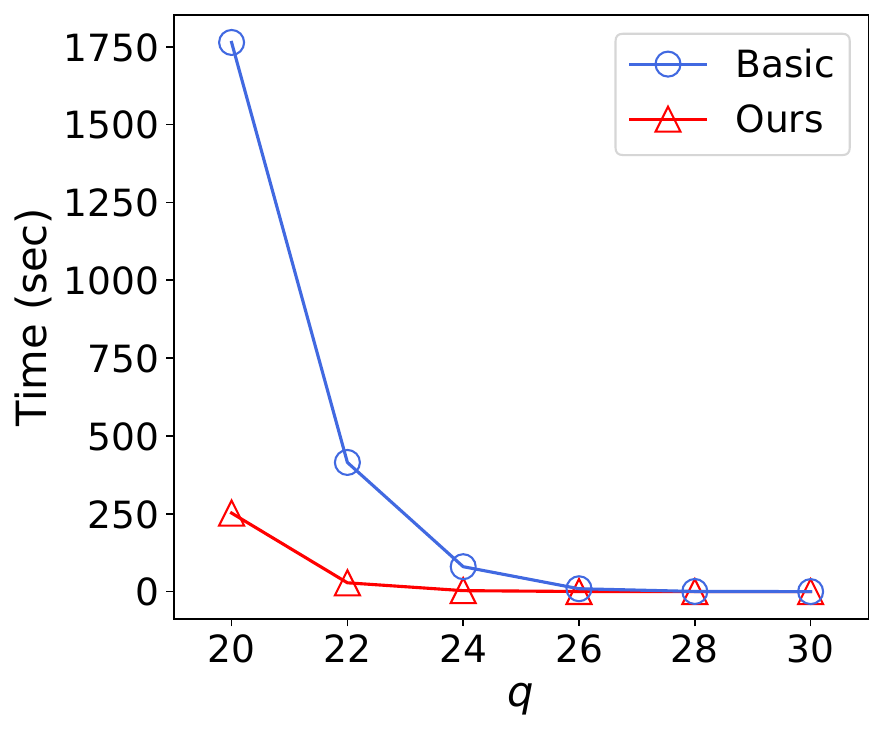}}
    %\subfigure[soc-pokec $(k = 4)$]{\includegraphics[width=0.22\textwidth]{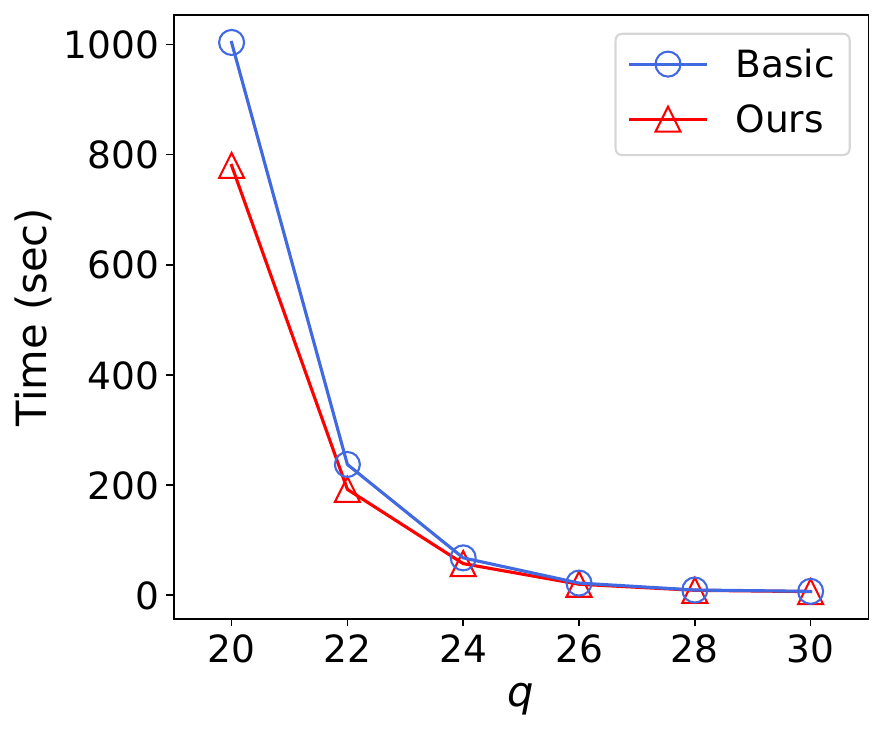}}
    \caption{Running time (sec) of Our Basic and Optimized Algorithms on Various Datasets and Parameters}
    \label{fig:different-q-ablation}
\end{figure*}

The ablation study results are shown in Table~\ref{tbl-community4}, where we use ``Ours\textbackslash ub'' to denote our algorithm variant without using upper-bound-based pruning, and use ``Ours\textbackslash ub+fp'' to denote our algorithm that directly uses the upper bounding technique of FP~\cite{cikm22maximal} instead. In Table~\ref{tbl-community4}, we show the results on four representative datasets with different $k$ and $q$ (the results on other datasets are similar and omitted due to space limit). We can see that Ours outperforms ``Ours\textbackslash ub'' and ``Ours\textbackslash ub+fp'' in all the cases. This shows that while using upper-bound-based pruning improves performance in this framework, FP's upper bounding technique is not as effective as Ours, due to the need of costly sorting when computing the upper bound in each recursion. In fact, ``Ours\textbackslash ub+fp'' can be even slower than ``Ours\textbackslash ub'' (c.f., soc-epinions with $k=2$ and $q=12$) since the expensive sorting procedure in the computation of upper bound backfires, while the upper-bound-based pruning does not reduce the branches much. Another observation is that our upper-bounding technique is more effective when $k$ and the number of sub-tasks become larger (e.g., when $q$ is smaller). For example, the running time of Ours and ``Ours\textbackslash ub'' is $252.40$ seconds and $530.48$ seconds, respectively, on dataset wiki-vote with $k = 4$ and $q = 20$.

\begin{table}[t]
    \centering
	\caption{Effect of Pruning Rules}
	\label{tbl-community5}
    \resizebox{\columnwidth}{!}{
    \begin{tabular}{c|c|c|c|c|c|c}
	\toprule[1pt]
	\multirow{2}{*}{\tabincell{c}{Network}} & \multirow{2}{*}{$k$}    & \multirow{2}{*}{$q$} &  \multicolumn{4}{c}{ Running time (sec)}
	\\
	\cline{4-7}
	& &  & Basic  & Basic+R1 & Basic+R2 & Ours  \\
     \hline \multirow{4}{*}{\tabincell{c}{wiki-vote}} &\multirow{2}{*}{3} &12 &2124.54 &1726.29 &1269.16 &\bf1239.83 \\
    & & 20 &30.93 &16.09 &4.54 &\bf4.15 \\
    \cline{2-7}
    &\multirow{2}{*}{4} &20 &1763.94 &791.73 &272.78 &\bf252.40 \\
    & & 30 &0.14 &0.15 &0.15 &\bf0.06 \\
    \hline \multirow{4}{*}{\tabincell{c}{soc-epinions}} &\multirow{2}{*}{2} & 12 &148.09 &145.56 &135.93 &\bf130.14 \\
    & & 20 &16.42 &16.31 &14.86 &\bf14.01 \\
    \cline{2-7}
    &\multirow{2}{*}{3} & 20 &2086.60 &1796.44 &1582.69 &\bf1540.87 \\
    & & 30 &10.48 &6.43 &2.30 &\bf2.11 \\
    \hline \multirow{4}{*}{\tabincell{c}{email-euall}} &\multirow{2}{*}{3} &12 &83.05 &73.29 &60.51 &\bf56.22 \\
    & & 20 &0.53 &0.44 &0.28 &\bf0.19 \\
    \cline{2-7}
    &\multirow{2}{*}{4} & 12 &5729.98 &4997.78 &3708.91 &\textbf{3535.63} \\
    & & 20 &15.39 &11.58 &8.04 &\bf7.70 \\
    \hline \multirow{4}{*}{\tabincell{c}{soc-pokec}} &\multirow{2}{*}{3} &12 &1477.74 &1272.26 &1009.03 &\bf996.43 \\
    & & 20 &34.25 &30.09 &27.07 &\bf26.94 \\
    \cline{2-7}
    &\multirow{2}{*}{4} &20 &1003.68 &886.50 &791.87 &\bf780.34 \\
    & & 30 &6.88 &6.57 &6.63 &\bf6.37 \\
    
    \bottomrule[1pt]
    \end{tabular}
    }
\end{table}

We next conduct ablation study to verify the effectiveness of our pruning rules, including (R1)~Theorem~\ref{reduction4} for pruning initial sub-tasks right before Line~\ref{algo1:line10} of Algorithm~\ref{alg::1}, and (R2)~Theorems~\ref{reduction1}, ~\ref{reduction3} and~\ref{reduction2} for second-order-based pruning to shrink the candidate and exclusive sets during recursion.

The ablation study results are shown in Table~\ref{tbl-community5}, where our algorithm variant without R1 and R2 is denoted by ``Basic''. In Table~\ref{tbl-community5}, we can see that both R1 and R2 bring performance improvements on the four tested graphs. The pruning rules are the most effective on dataset wiki-vote with $k = 4$ and $q = 20$, where Ours achieves $7 \times$ speedup compared with Basic.

Figure~\ref{fig:different-q-ablation} further compares the running time between Basic and Ours as $k$ and $q$ vary (where more values of $q$ are tested). Due to space limitation, we only show results for four datasets, and more results can be found in Figure~\ref{fig:different-q-ablation-total} in Appendix~\ref{app::basic}~\cite{fullpaper}. We can see that Ours is consistently faster than the basic version with different $k$ and $q$. This demonstrates the effectiveness of our pruning rules.

\section{Conclusion}\label{sec:conclude}
In this paper, we proposed an efficient branch-and-bound algorithm to enumerate all maximal $k$-plexes with at least $q$ vertices. Our algorithm adopts an effective search space partitioning approach that provides a good time complexity, a new pivot vertex selection method that reduces candidate size, an effective upper-bounding technique to prune useless branches, and three novel pruning techniques by vertex pairs. Our parallel algorithm version uses a timeout mechanism to eliminate straggler tasks. Extensive experiments show that our algorithms compare favorably with the state-of-the-art algorithms, and the performance of our parallel algorithm version scales nearly ideally with the number of threads.

%%
%% The acknowledgments section is defined using the "acks" environment
%% (and NOT an unnumbered section). This ensures the proper
%% identification of the section in the article metadata, and the
%% consistent spelling of the heading.
\begin{acks}
% To Robert, for the bagels and explaining CMYK and color spaces.
This work was supported by DOE ECRP Award 0000274975, NSF OIA-2229394, NSF OAC-2414474, and NSF OAC-2414185. 
Additionally, Zhongyi Huang was partially supported by the NSFC Project No. 12025104. 
\end{acks}

%%
%% The next two lines define the bibliography style to be used, and
%% the bibliography file.
\bibliographystyle{ACM-Reference-Format}
\bibliography{sample-base}

%%
%% If your work has an appendix, this is the place to put it.
%% Please note that all the content must fit within the page limits, including any appendices.
%\appendix
%
%\section{Research Methods}
% ...

\appendix

\clearpage
\begin{appendix}
\section{Proofs of Theorems and Lemmas}
\label{proofs}

\subsection{Proof of Theorem~\ref{lemma::2nd_order}}\label{app:th1}
\begin{proof} This can be seen from Figure~\ref{2nd_order}. Let us first define $\overline{N^*_P}(v) = \overline{N_P}(v) - \{v\}$, so $|\overline{N^*_P}(v)| \leq k-1$. 
In Case~(i) where $(u, v)\not\in E$, any vertex $w\in P$ can only fall in the following 3 scenarios: (1)~$w\in\overline{N^*_P}(u)$, (2)~$w\in\overline{N^*_P}(v)$, and (3)~$w\in N_P(u)\cap N_P(v)$. Note that $w$ may be in both (1) and (2). So we have:
\begin{eqnarray}
|P| & = & |N_P(u)\cap N_P(v)| + |\overline{N^*_P}(u)\cup\overline{N^*_P}(v)|\nonumber\\
& \leq & |N_P(u)\cap N_P(v)| + |\overline{N^*_P}(u)| + |\overline{N^*_P}(v)|\nonumber\\
& \leq & |N_P(u)\cap N_P(v)| + 2(k-1)\nonumber\\
& = & |N_P(u)\cap N_P(v)| + 2k-2,\nonumber
\end{eqnarray}
so $|N_P(u)\cap N_P(v)|\ge |P|-2k+2\ge q-2k+2$.

In Case~(ii) where $(u, v)\in E$, any vertex $w\in P$ can only be in one of the following 4 scenarios: (1)~$w=u$, (2)~$w=v$, (3)~$w\in N_P(u)\cap N_P(v)$, and (4)~$w\in\overline{N_P}(u)\cup\overline{N_P}(v)$, so:
\begin{eqnarray}
|P| & = & 2 + |N_P(u)\cap N_P(v)| + |\overline{N^*_P}(u)\cup\overline{N^*_P}(v)|\nonumber\\
& \leq & 2 + |N_P(u)\cap N_P(v)| + |\overline{N^*_P}(u)| + |\overline{N^*_P}(v)|\nonumber\\
& \leq & 2 + |N_P(u)\cap N_P(v)| + 2(k-1)\nonumber\\
& = & |N_P(u)\cap N_P(v)| + 2k,\nonumber
\end{eqnarray}
so $|N_P(u)\cap N_P(v)|\ge |P|-2k\ge q-2k$. \end{proof}

\begin{figure}[ht]
\centering
\includegraphics[width=0.9\columnwidth]{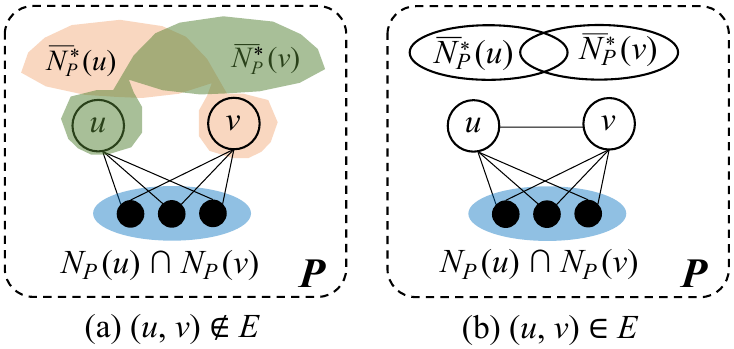}
\caption{Second-Order Pruning}\label{2nd_order}
\end{figure}
\setlength{\textfloatsep}{5pt}

%\subsection{Proof of Theorem~\ref{lemma::bound2}}\label{app:th2}

%\begin{proof}
%Let $P_{m}\subseteq P\cup C$ be a maximum $k$-plex containing $P$. Given any $u\in P$, we can partition $P_m$ into two sets: (1)~$N_{G_i}(u)\cap P_m$, and (2)~$\overline{N_{G_i}}(u)\cap P_m$. The first set has size at most $|N_{G_i}(u)|=d_{G_i}(u)$, while we can add at most $k$ vertices of the second set into $P$ \textcolor{blue}{(or the non-neighbors of $u\in P$ will violate the definition of $k$-plex). Thus, $P_m$ can include at most all neighbors and $k$ non-neighbors of $u$, which results in 
%$|P_m|\leq d_{G_i}(u) + k$. As $u$ is an arbitrary vertex in $P$, $|P_m|\leq \min_{u\in P} \{d_{G_i}(u)\} + k $. The theorem is proved.}
%\end{proof}

\subsection{Proof of Theorem~\ref{lemma::bound4}}\label{app:th3}

\begin{proof}
We prove this by contradiction. Assume that there exists $K'=P_{m}\cap N_C(v_p)$ with $|K'|>|K|$. Also, let us denote by $\psi=\{w_1, w_2, \ldots, w_\ell\}$ the vertex ordering of $w\in N_C(v_p)$ in Line~\ref{alg3:line5} to create $K$, as illustrated in Figure~\ref{supp}.

Specifically, Figure~\ref{supp} top illustrates the execution flow of Lines~\ref{alg3:line4}--\ref{alg3:line8} in Algorithm~\ref{alg::2}, where $w_1$, $w_3$ and $w_4$ select their non-neighbor $u_1$ as $u_m$ in Line~\ref{alg3:line5}, $w_2$ and $w_5$ select $u_2$ as $u_m$, and $w_6$ selects $u_3$ as $u_m$. We define $\{w_1, w_3, w_4\}$ as $u_1$-group, $\{w_2, w_5\}$ as $u_2$-group, and $\{w_6\}$ as $u_3$-group.

\begin{figure}[h]
\centering
\includegraphics[width=0.86\columnwidth]{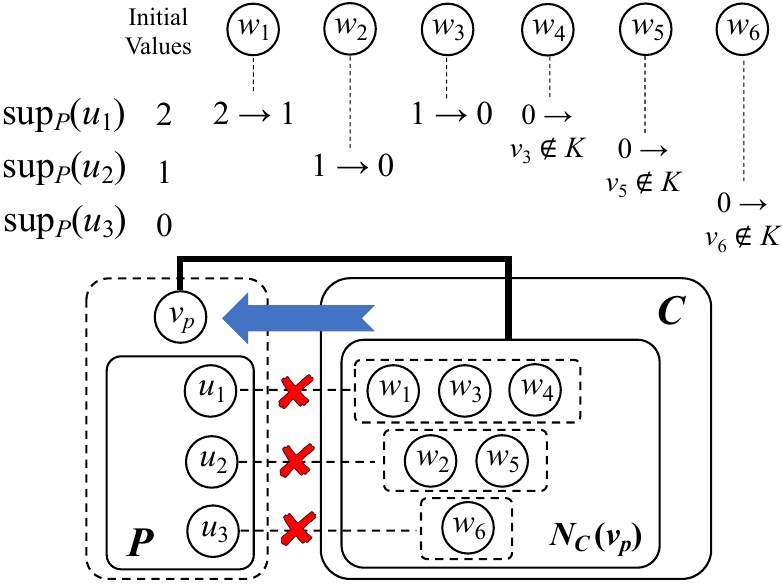}
\caption{Illustration of the Proof of Theorem~\ref{lemma::bound4}}
\label{supp}
\end{figure}
\setlength{\textfloatsep}{5pt}

Let us consider the update of $\text{sup}_P(u_1)$, whose initial value computed by Line~\ref{alg3:line2} is assumed to be $2$. When processing $w_1$, Line~\ref{alg3:line7} decrements $\text{sup}_P(u_1)$ as 1. Then, $w_3$ decrements it as 0. When processing $w_4$, since Line~\ref{alg3:line4} already finds that $\text{sup}_P(u_1)=0$, $w_4$ is excluded from $K$ (i.e., Line~\ref{alg3:line8} does not add it to $ub$). In a similar spirit, $w_5\notin K$ since $\text{sup}_P(u_2)=0$, and $w_6\notin K$ since $\text{sup}_P(u_3)=0$.

Note that since $w_1$, $w_3$ and $w_4$ cannot co-exist in a $k$-plex containing $P\cup \{v_p\}$, they cannot all belong to $K'$. In other words, if $w_4\notin K$ belongs to $K'$, then at least one of $w_1$ and $w_3$ is not in $K'$. In a similar spirit, if $w_5\notin K$ belongs to $K'$, then $w_2\not\in K'$. As for those $u_i$ whose initial value of $\text{sup}_P(u_i)$ is 0, we can show that any element in $u_i$-group can belong to neither $K$ nor $K'$. See $u_3$-group $=\{w_6\}$ in Figure~\ref{supp} for example. This is because if $w$ is added to $P$, then $\overline{d_P}(u_i)>k$ so $P$ cannot be a $k$-plex.

In general, in each $u_i$-group where $\text{sup}_P(u_i)\neq 0$, if a vertex $w\in K'-K$ exists (e.g., $w_4$ in Figure~\ref{supp}), then there must exist a different vertex $w'\in K-K'$ in the $u_i$-group (e.g., $w_1$ or $w_3$). This implies that $|K-K'|\geq |K'-K|$.

Therefore, we have
\begin{eqnarray*}
|K| & = & |K-K'| + |K\cap K'|\\
& \ge & |K'-K| + |K\cap K'|\ =\ |K'|,
\end{eqnarray*}
which contradicts with our assumption $|K'|>|K|$.
\end{proof}
\subsection{Proof of Theorem~\ref{reduction4}}\label{app:th4}
\begin{proof}
%This is a special case of Theorem~\ref{lemma::bound4} with $P=S$ and $v_p=v_i$.
%
Let $P_{m}\supseteq P_S=S\cup\{v_i\}$ be a maximum $k$-plex containing $P_S$, then following the proof of Theorem~\ref{lemma::bound4}, we divide $P_m$ into three disjoint sets: (1)~$P_S$, (2)~$P_{m}\cap\overline{N_{C_S}}(v_i)$, (3)~$P_{m}\cap N_{C_S}(v_i)$. Note that $P_{m}\cap\overline{N_{C_S}}(v_i)=\emptyset$ since $\overline{N_{C_S}}(v_i)=\emptyset$ (as $C_S=N_{G_i}(v_i)$). Also, Theorem~\ref{lemma::bound4} has proved that $|P_{m}\cap N_{C_S}(v_i)|\leq |K|$. Therefore, we have $|P_m|\leq |P_S| + 0 + |K|=|P_S| + |K|$.
\end{proof}
\subsection{Proof of Lemma~\ref{th:complexity_alg3}}\label{app:th5}

\begin{proof}
To implement Algorithm~\ref{alg::3}, given a seed graph $G_i=(V_i, E_i)$, we maintain $d_P(v)$ for every $v\in V_i$. These degrees $d_P(.)$ are incrementally updated; for example, when a vertex $v_p$ is moved into $P$, we will increment $d_P(v)$ for every $v\in N_{G_i}(v_p)$. As a result, we can compute $\overline{d_P}(v)=|P|-d_P(v)$ and $\text{sup}_P(v)=k-\overline{d_P}(u)$ in $O(1)$ time. 

Moreover, we materialize $\text{sup}_P(u)$ for each vertex $u\in P$ in Line~\ref{alg3:line2}, so that in Line~\ref{alg3:line5} we can access them directly to compute $u_m$, and Line~\ref{alg3:line7} can be updated in $O(1)$ time.

Since $P$ is a $k$-plex of $G_i$, $|P|$ is bounded by $O(D+k)$ by Theorem~\ref{lemma::bound2}. Thus, Line~\ref{alg3:line2} takes $O(D+k)$ time. 

Also, $|N_{C}(v_p)|$ is bounded by $O(D)$ since $N_{C}(v_p) \subseteq C_S = N_{G_i}(v_i)$, so the for-loop in Line~\ref{alg3:line4} is executed for $O(D)$ iterations. In each iteration, Line~\ref{alg3:line5} takes $O(k)$ time since $|\overline{N_P}(w)|\leq k$, so the entire for-loop in Line~\ref{alg3:line4}--\ref{alg3:line8} takes $O(kD)$.

Putting them together, the time complexity of Algorithm~\ref{alg::3} is $O(D+k) + O(kD) = O(k+(k+1)D)\approx O(D)$ as $k$ is usually very small constant.
\end{proof}

\subsection{Proof of Lemma~\ref{th:subtasks}}\label{app:th6}

\begin{proof}
Since $\eta=\{v_1,\dots, v_{n}\}$ is the degeneracy ordering of $V$ and $V_i\subseteq\{v_i,v_{i+1},\dots,v_{n}\}$, we have $d_{G_i}(v_i)=|N_{G_i}(v_i)|\leq D$.

To show that $|N_{G_i}^2(v_i)|=O\left(\frac{D\Delta}{q-2k+2}\right)$, consider $E^*=\{(v, u)\,|\,v\in N_{G_i}(v_i)\ \wedge\ u\in N^2_{G_i}(v_i)\}$, which are those edges between $N_{G_i}(v_i)$ and $N^2_{G_i}(v_i)$ in Figure~\ref{N2}. Since $|N_{G_i}(v_i)|\leq D$, and each $v\in N_{G_i}(v_i)$ has at most $\Delta$ neighbors in $N^2_{G_i}(v_i)$, we have $|E^*|\leq D\Delta$. Also, let us denote by $E'$ all those edges $(v, u)\in E^*$ that are valid (i.e., $v$ and $u$ can appear in a $k$-plex $P$ in $G_i$ with $|P|\geq q$), then $|E'|\leq|E^*|\leq D\Delta$.

Recall from Corollary~\ref{corollary:2nd_order} that if $u\in N^2_{G_i}(v_i)$ belongs to a valid $k$-plex in $G_i$, then $|N_{G_i}(u)\cap N_{G_i}(v_i)|\geq q-2k+2$. This means that each valid $u\in N^2_{G_i}(v_i)$ share with $v_i$ at least $(q-2k+2)$ common neighbors that are in $N_{G_i}(v_i)$ (c.f., Figure~\ref{N2}), or equivalently, $u$ is adjacent to (or, uses) at least $(q-2k+2)$ edges $(v, u)$ in $E'$. Therefore, the number of valid $u\in N^2_{G_i}(v_i)$ is bounded by $\frac{|E'|}{q-2k+2}\leq\frac{D\Delta}{q-2k+2}$.

It may occur that $\frac{D\Delta}{q-2k+2}>n$, in which case we use $|N_{G_i}^2(v_i)|=O(n)$ instead. Combining the above two cases, we have $|N_{G_i}^2(v_i)|=O(r_1)$ where $r_1=\min\left\{\frac{D\Delta}{q-2k+2},n\right\}$.

Finally, the number of subsets $S\subseteq N_{G_i}^2(v_i)$ ($|S|\leq k-1$) (c.f., Line~\ref{algo1:line7} of Algorithm~\ref{alg::1}) is bounded by
$C_{r_1}^0+C_{r_1}^1 + \cdots + C_{r_1}^{k-1}\approx O\left(r_1^k\right)$,
since $k$ is a small constant.
\end{proof}

\subsection{Proof of Theorem~\ref{th:complexity}}\label{app:th7}

\begin{proof}
We just showed that the recursive body of Branch(.) takes time $O(|P|(|C|+|X|))$. Let us first bound $|X|$. Recall Algorithm~\ref{alg::1}, where by Line~\ref{algo1:line9}, vertices of $X\subseteq X_S$ are from either $V'_i$ or $(N^2_{G_i}(v_i)-S)$. Moreover, by Line~\ref{algo1:line5}, vertices of $V'_i$ are from either $N_G(v_i)$ or $N^2_G(v_i)$. Since $(N^2_{G_i}(v_i)-S)\subseteq N^2_G(v_i)$, vertices of $X$ are from either $N_G(v_i)$ or $ N^2_G(v_i)$. Let us denote $X_1=X\cap N_G(v_i)$ and $X_2=X\cap N^2_G(v_i)$.

We first bound $X_2$.  Consider $E^*=\{(u, v)\,|\,u\in N_{G}(v_i)\ \wedge\ v\in X_2\}$. Since $|N_{G}(v_i)|\leq \Delta$, and each $v\in N_{G}(v_i)$ has at most $\Delta$ neighbors in $N^2_{G}(v_i)$, we have $|E^*|\leq\Delta^2$. Recall from Theorem~\ref{lemma::2nd_order} that if $v\in X_2$ belongs to $k$-plex $P$ with $|P|\geq q$, then $|N_{G}(v)\cap N_{G}(v_i)|\geq q-2k+2$. This means that each $v\in X_2$ share with $v_i$ at least $(q-2k+2)$ common neighbors that are in $N_{G}(v_i)$, or equivalently, $v$ is adjacent to (or, uses) at least $(q-2k+2)$ edges $(u, v)$ in $E^*$. Therefore, the number of $v\in X_2$ is bounded by $\frac{|E^*|}{q-2k+2}\leq\frac{\Delta^2}{q-2k+2}$.

As for $X_1\subseteq  N_G(v_i)$, we have $|X_1|\leq \Delta$. 
In general, we do not set $q$ to be too large in reality, or there would be no results, so $(q-2k+2)$ is often much smaller than $\Delta$. 
Therefore, $|X|=|X_1|+|X_2|=O(\frac{\Delta^2}{q-2k+2} + \Delta)\approx O(\frac{\Delta^2}{q-2k+2})$.

Since $|P|$ is bounded by $O(D+k)\approx O(D)$ by Theorem~\ref{lemma::bound2}, and $C\subseteq C_S$ so $|C|\leq D$, the recursive body of Branch(.) takes $O(|P|(|C|+|X|))\approx O\left(D(D+\frac{\Delta^2}{q-2k+2})\right)\approx O\left(\frac{D\Delta^2}{q-2k+2}\right)$ time. This is because $\frac{\Delta^2}{q-2k+2} > \Delta \geq D$.

It may occur that $\frac{\Delta^2}{q-2k+2} > n$, in which case we use $|X|=O(n)$ instead, so the recursive body of Branch(.) takes $O(|P|(|C|+|X|))\approx O\left(D(D+n)\right)=O(nD)$ time.

Combining the above two cases, the recursive body of Branch(.) takes $O(r_2)$ time where $r_2 = \min\left\{\frac{D\Delta^2}{q-2k+2},nD\right\}$.

By Lemma~\ref{th:times}, Branch$(G_i,k,q,P_S,C_S,X_S)$ in Line~\ref{algo1:line10} of Algorithm~\ref{alg::1} recursively calls the body of Algorithm~\ref{alg::3} for $O(\gamma_k^{D})$ times, so the total time is $O(r_2\gamma_k^{D})$.

Finally, we have at most $O(n)$ initial task groups (c.f., Line~\ref{algo1:line3} of Algorithm~\ref{alg::1}), and by Lemma~\ref{th:subtasks}, each initial task group with seed vertex $v_i$ generates $O\left(r_1^k\right)$ sub-tasks that call Branch$(G_i,k,q,P_S,C_S,X_S)$. So, the total time cost of Algorithm~\ref{alg::1} is $O\left(nr_1^kr_2\gamma_k^{D}\right)$.
\end{proof}

\begin{figure}[t]
\centering
\includegraphics[width=\columnwidth]{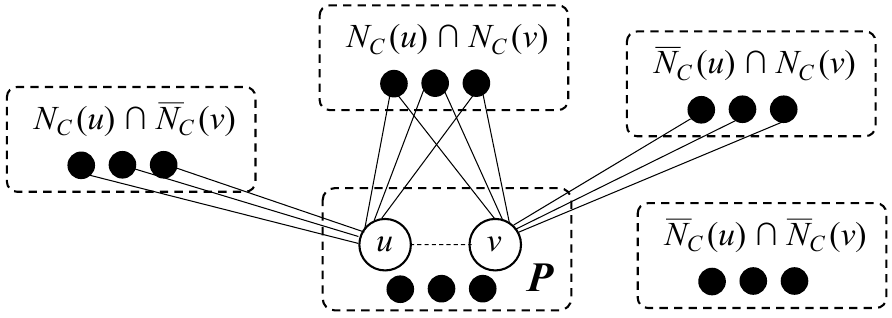}
\caption{Upper Bound Illustration for Lemma~\ref{lemma::bound1}}\label{ub1}
\end{figure}
\setlength{\textfloatsep}{5pt}

\subsection{Proof of Lemma~\ref{lemma::bound1}}\label{app:th8}
\begin{proof} To prove this, let $P_{m}\subseteq P\cup C$ be a maximum $k$-plex containing $P$. For two arbitrary vertices $u$, $v\in P$, the candidate set $C$ can be divided into four subsets as illustrated in Figure~\ref{ub1}: (1)~$N_{C}(u)\cap N_{C}(v)$, (2)~$N_{C}(u)\cap \overline{N_{C}}(v)$, (3)~$\overline{N_{C}}(u)\cap N_{C}(v)$, and (4)~$\overline{N_{C}}(u)\cap \overline{N_{C}}(v)$. Therefore,
\begin{equation*}
    \begin{split}
    |P_m|\le |P|+|N_{C}(u)\cap N_{C}(v)|+|N_{C}(u)\cap \overline{N_{C}}(v)|\\
    + |\overline{N_{C}}(u)\cap N_{C}(v)|+|\overline{N_{C}}(u)\cap \overline{N_{C}}(v)|
    \end{split}. 
\end{equation*}
Note that
\begin{gather*}
    |N_{C}(u)\cap \overline{N_{C}}(v)|+|\overline{N_{C}}(u)\cap \overline{N_{C}}(v)|=|\overline{N_{C}}(v)|\le \text{sup}_P(v),\\
    |\overline{N_{C}}(u)\cap N_{C}(v)|+|\overline{N_{C}}(u)\cap \overline{N_{C}}(v)|=|\overline{N_{C}}(u)|\le \text{sup}_P(u). 
\end{gather*}
Therefore, we have 
\begin{equation}\label{eq:ub_2sup}
|P_m|\le |P|+\text{sup}_P(u)+\text{sup}_P(v)+|N_{C}(u)\cap N_{C}(v)|,
\end{equation}
which completes the proof since $u, v\in P$ are arbitrary. \end{proof}

\subsection{Proof of Theorem~\ref{reduction1}}\label{app:th9}
\begin{proof} We prove it using Lemma~\ref{lemma::bound1}. Specifically, assume that two vertices $u_1$, $u_2\in N_{G_i}^2(v_i)$ co-occur in a $k$-plex $P$ with $|P|\geq q$, then $P$ is expanded from $P_S=\{v_i\}\cup S$, where $\{u_1, u_2\}\subseteq S\subseteq N^2_{G_i}(v_i)$ and $|S|\le k-1$ (hence $|P_S|\leq k$).

According to Eq~(\ref{eq:ub_2sup}) in Lemma~\ref{lemma::bound1}, we require
$$|P_S|+\text{sup}_{P_S}(u_1)+\text{sup}_{P_S}(u_2)+|N_{C_S}(u_1)\cap N_{C_S}(u_2)|\ge q,$$
or equivalently (recall that $|P_S|\leq k$),
$$|N_{C_S}(u_1)\cap N_{C_S}(u_2)|\ge q - k - \text{sup}_{P_S}(u_1) - \text{sup}_{P_S}(u_2).$$

If $(u_1,u_2)\in E_i$, then $\text{sup}_{P_S}(u_1)\leq k-2$ (resp.\ $\text{sup}_{P_S}(u_2)\leq k-2$), since $v_i\in P_S$ is a non-neighbor of $u_1$ (resp.\ $u_2$) besides $u_1$ (resp.\ $u_2$) itself in $P_S$. Thus, 
 \begin{equation*}
|N_{C_S}(u_1)\cap N_{C_S}(u_2)|\ge q-k-2\cdot\max\{k-2,0\}.
 \end{equation*}

While if $(u_1,u_2)\notin E_i$, then $\text{sup}_{P_S}(u_1)\leq k-3$ (resp.\ $\text{sup}_{P_S}(u_2)\leq k-3$), since $v_i\in P_S$ is a non-neighbor of $u_1$ (resp.\ $u_2$) besides $u_1, u_2\in P_S$. Thus,
\begin{equation*}
|N_{C_S}(u_1)\cap N_{C_S}(u_2)|\ge q-k-2\cdot\max\{k-3,0\}.
 \end{equation*}
 
This completes our proof of Theorem~\ref{reduction1}.\end{proof}

\subsection{Proof of Theorem~\ref{reduction3}}\label{app:th10}

\begin{proof}
Assume that $u_1\in N^2_{G_i}(v_i)$ and $u_2\in N_{G_i}(v_i)$ co-occur in a $k$-plex $P$ with $|P|\geq q$. Let us assume $u_1\in S$ and $P^+=P_S\cup\{u_2\}$, then $P^+\subseteq P$. As the proof of Theorem~\ref{reduction1} has shown, we have $|P_S|\leq k$, so $|P^+|\leq k+1$.

Applying Eq~(\ref{eq:ub_2sup}) in Lemma~\ref{lemma::bound1} with $P=P^+$, $u=u_1$ and $v=u_2$, we require
$$|P^+|+\text{sup}_{P^+}(u_1)+\text{sup}_{P^+}(u_2)+|N_{C_S^-}(u_1)\cap N_{C_S^-}(u_2)|\ge q,$$
or equivalently (recall that $|P^+|\leq k+1$),
$$|N_{C_S^-}(u_1)\cap N_{C_S^-}(u_2)|\ge q - (k+1) - \text{sup}_{P^+}(u_1) - \text{sup}_{P^+}(u_2).$$

If $(u_1,u_2)\in E_i$, then $\text{sup}_{P^+}(u_1)\leq k-2$ since $v_i\in P_S$ is a non-neighbor of $u_1$ besides $u_1$ itself in $P_S$, and $\text{sup}_{P^+}(u_2)\leq k-1$ since $u_2$ is a non-neighbor of itself in $P_S$. Thus,
\begin{eqnarray*}
|N_{C_S^-}(u_1)\cap N_{C_S^-}(u_2)| & \ge & q-(k+1)-\max\{k-2,0\}\\
& & -\ (k-1)\\
& = & q-2k-\max\{k-2, 0\}
\end{eqnarray*}

While if $(u_1,u_2)\notin E_i$, then $\text{sup}_{P^+}(u_1)\leq k-3$ since $v_i\in P_S$ is a non-neighbor of $u_1$ besides $u_1, u_2\in P^+$, and $\text{sup}_{P^+}(u_2)\leq k-2$ since $u_1,u_2\subseteq P^+$ are the non-neighbors of $u_2$. Thus, we have 
\begin{eqnarray*}
|N_{C_S^-}(u_1)\cap N_{C_S^-}(u_2)| & \ge & q-(k+1)-\max\{k-2,0\}\\
& & -\ \max\{k-3,0\}\\
& = & q-k-\max\{k-2, 0\}\\
& & -\ \max\{k-2, 1\}
\end{eqnarray*}

This completes our proof of Theorem~\ref{reduction3}.\end{proof}

\begin{figure*}[htbp]
    \centering
    \subfigure[enwiki-2021 $(k = 2,~q=40)$]{\includegraphics[width=0.24\textwidth]{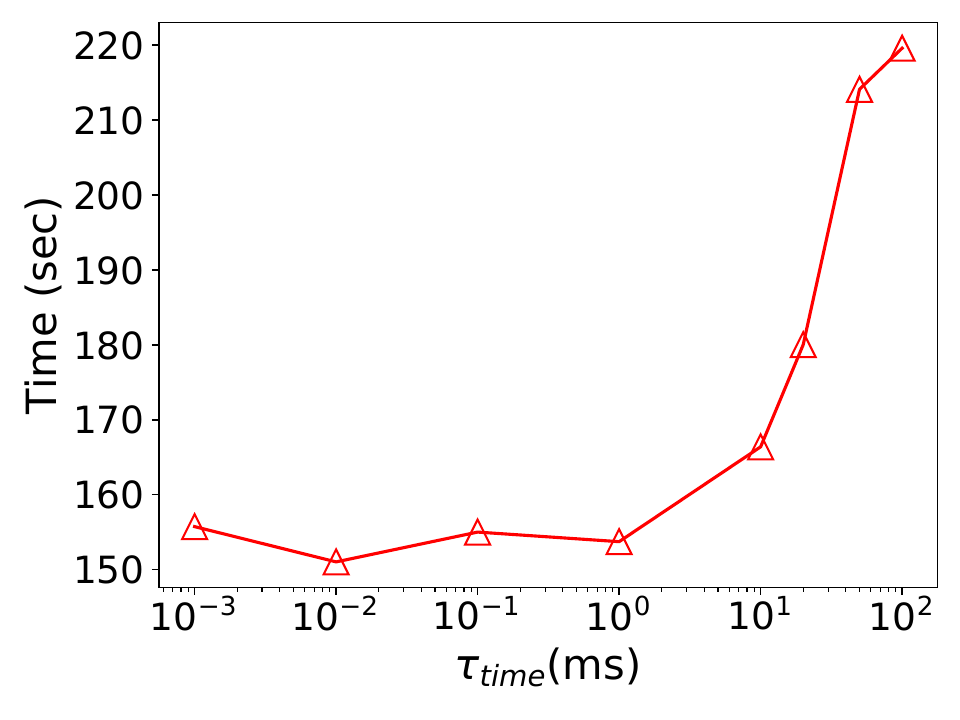}}
    \subfigure[enwiki-2021 $(k = 3,~q=50)$]{\includegraphics[width=0.24\textwidth]{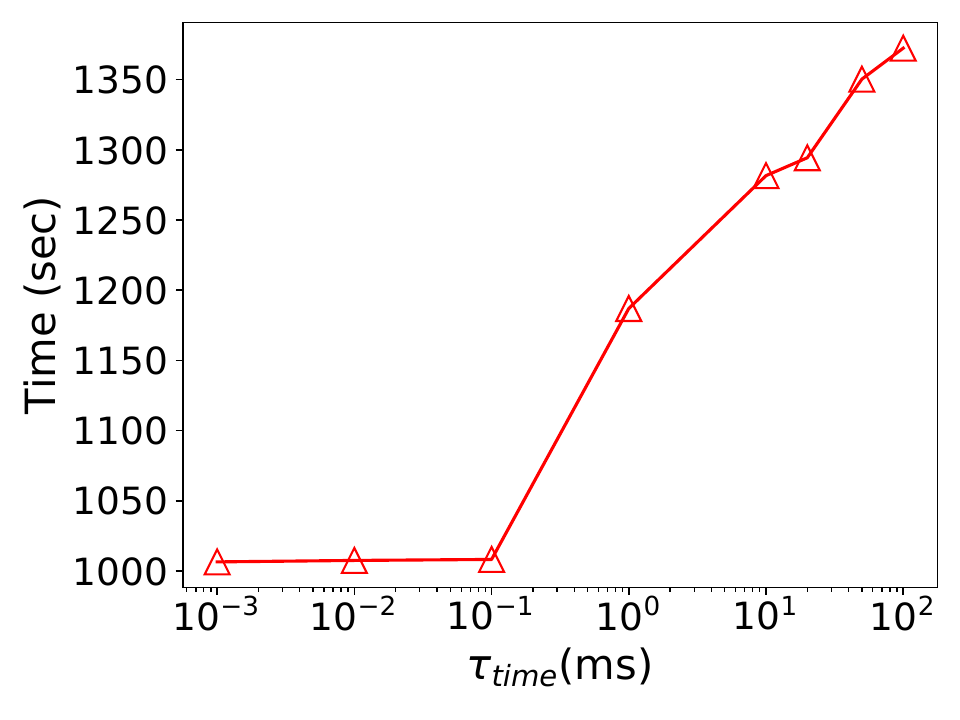}}
    \subfigure[arabic-2005 $(k = 2,~q=900)$]{\includegraphics[width=0.24\textwidth]{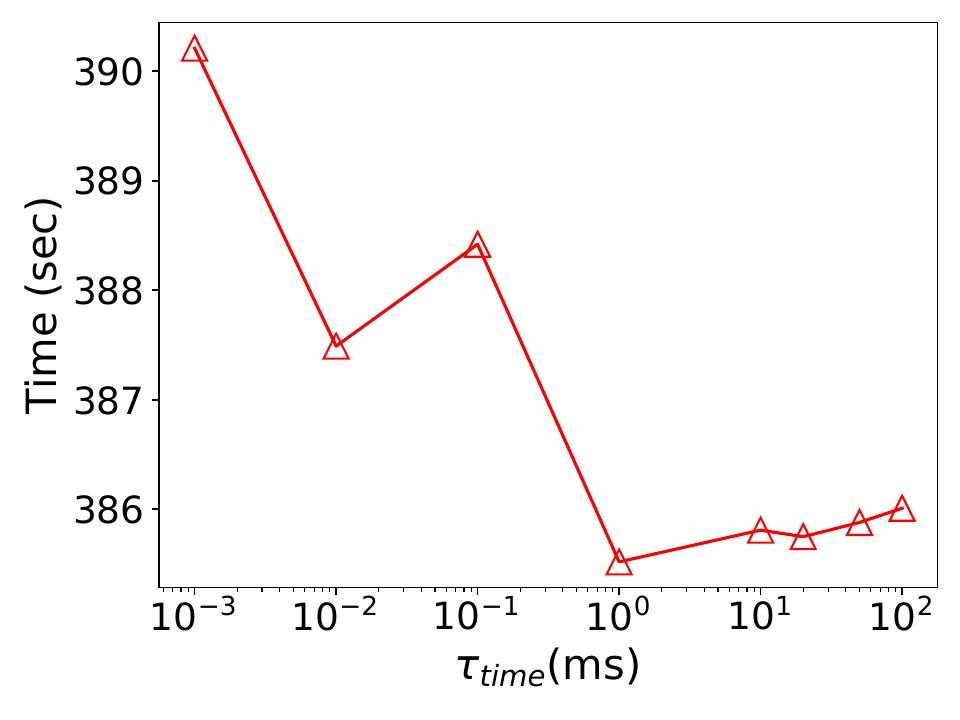}}
    \subfigure[arabic-2005 $(k = 3,~q=1000)$]{\includegraphics[width=0.24\textwidth]{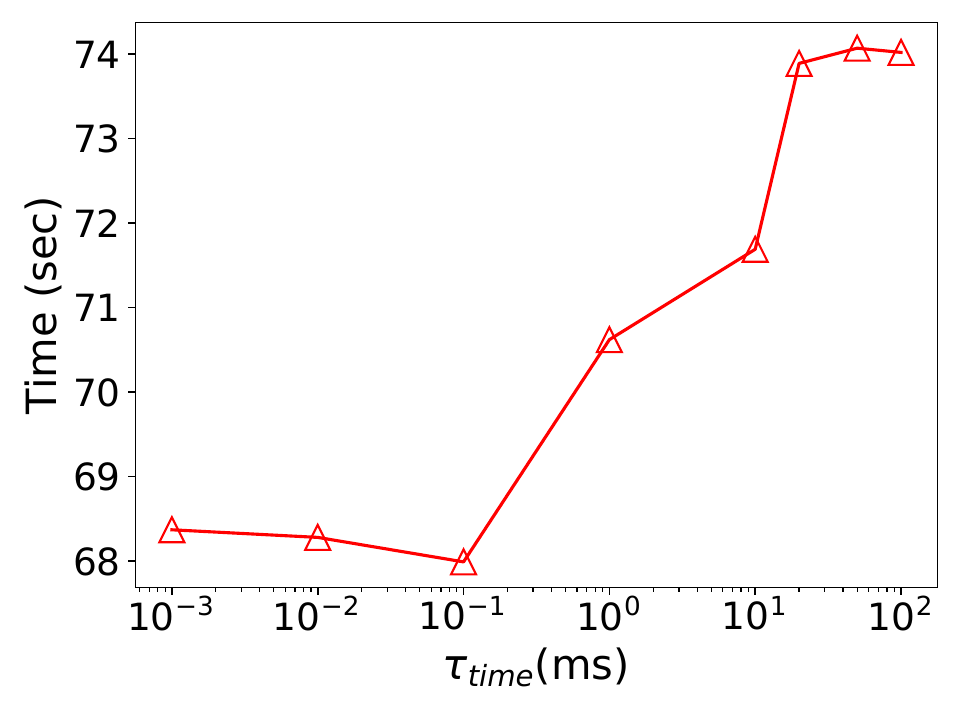}}

    \subfigure[uk-2005 $(k = 2,~q=250)$]{\includegraphics[width=0.24\textwidth]{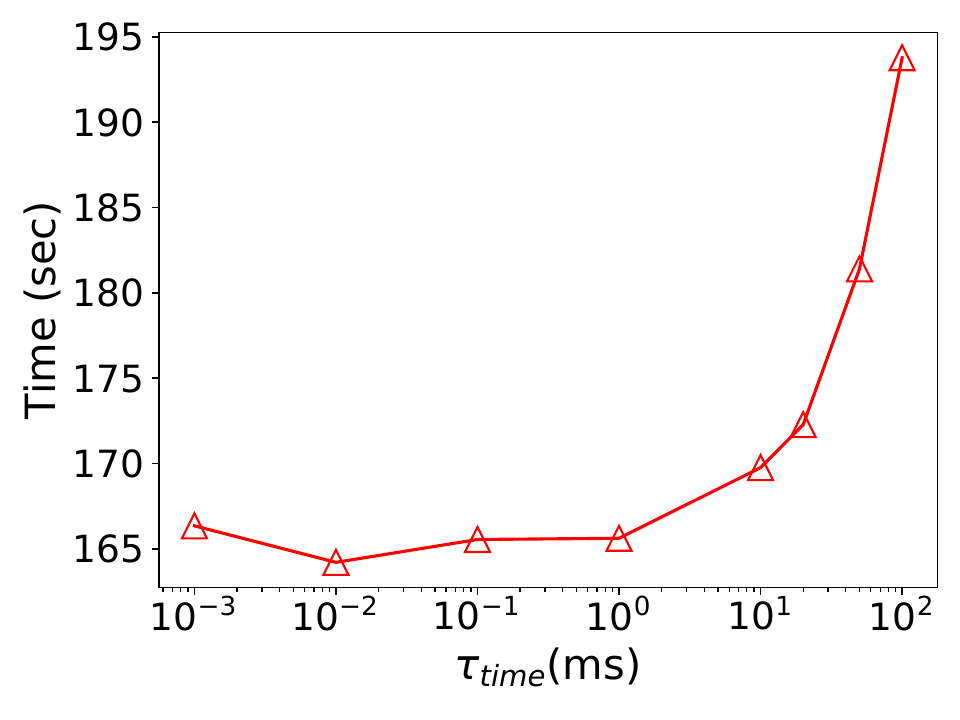}}
    \subfigure[uk-2005 $(k = 3,~q=500)$]{\includegraphics[width=0.24\textwidth]{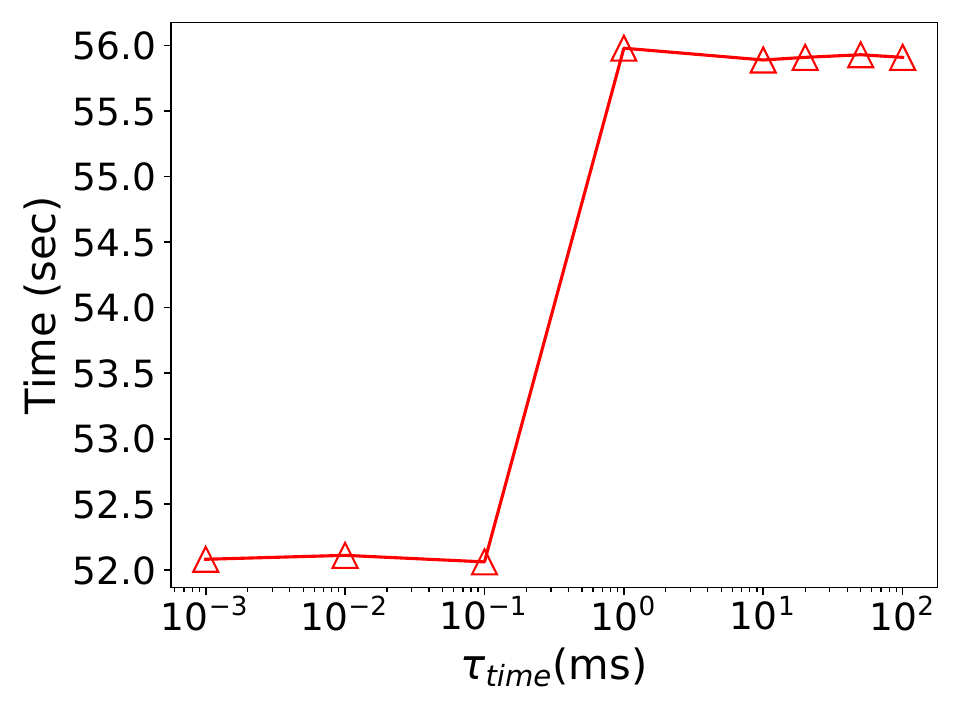}}
    \subfigure[it-2004 $(k = 2,~q=1000)$]{\includegraphics[width=0.24\textwidth]{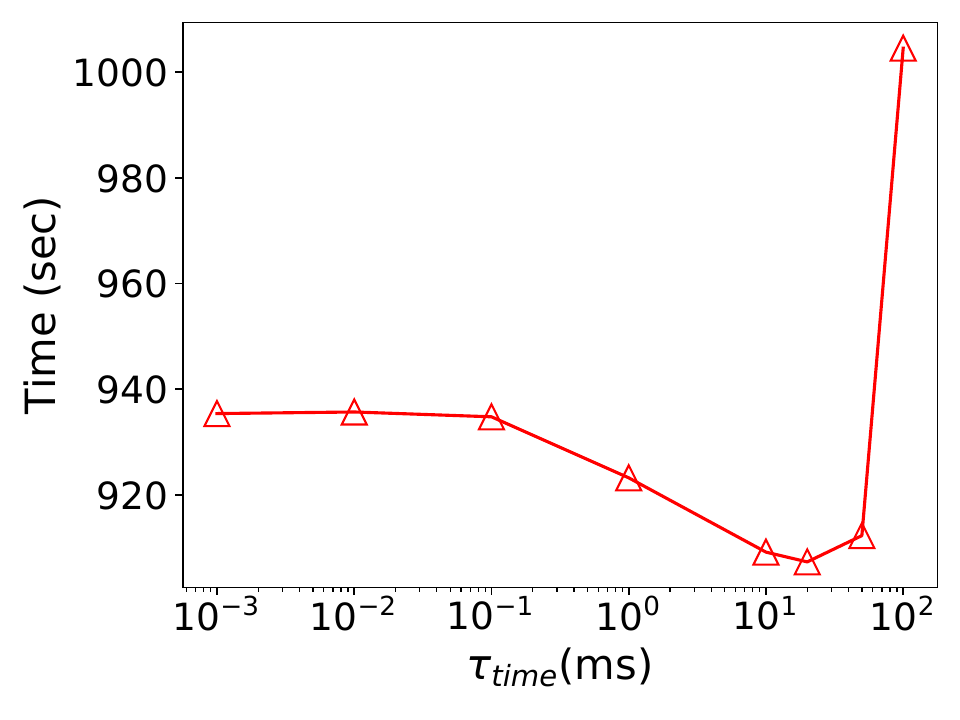}}
    \subfigure[it-2004 $(k = 3,~q=2000)$]{\includegraphics[width=0.24\textwidth]{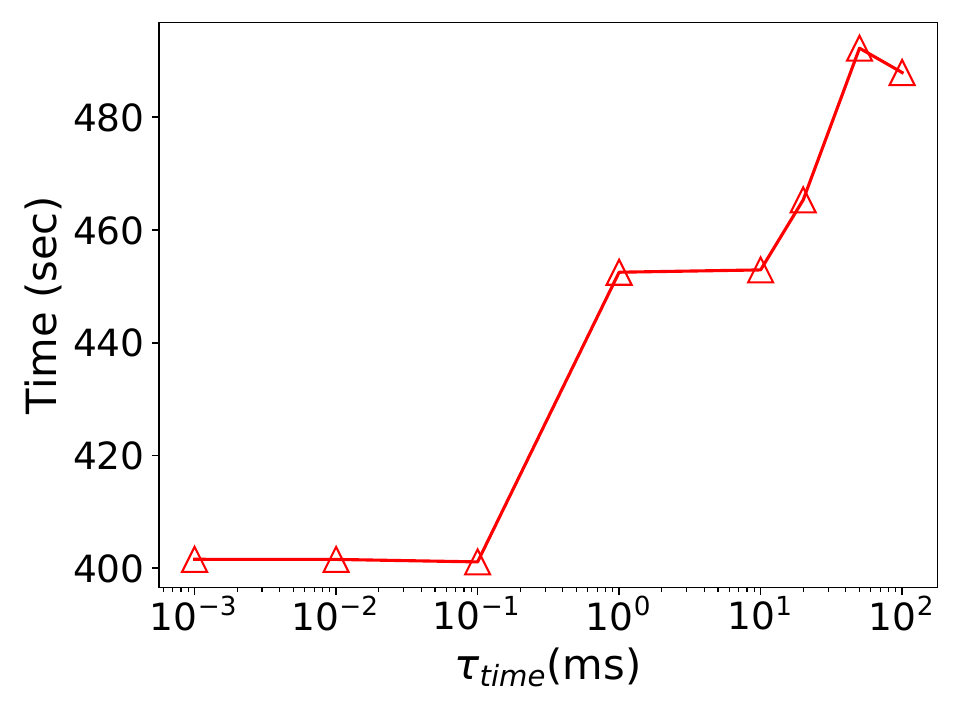}}

    \subfigure[webbase-2001 $(k = 2,~q=400)$]{\includegraphics[width=0.24\textwidth]{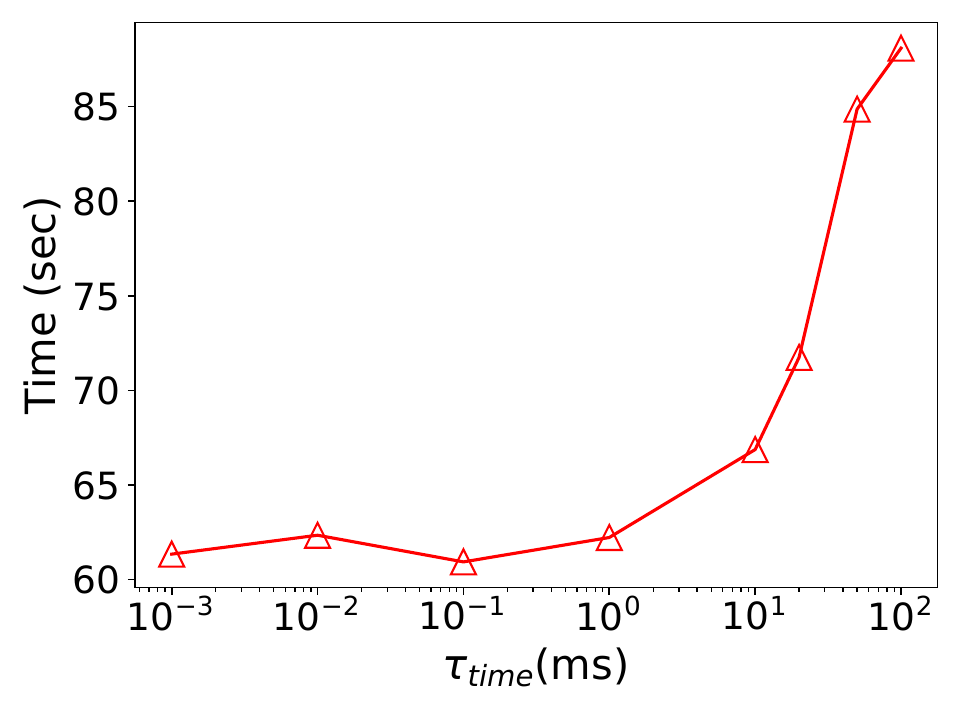}}
    \subfigure[webbase-2001 $(k = 3,~q=800)$]{\includegraphics[width=0.24\textwidth]{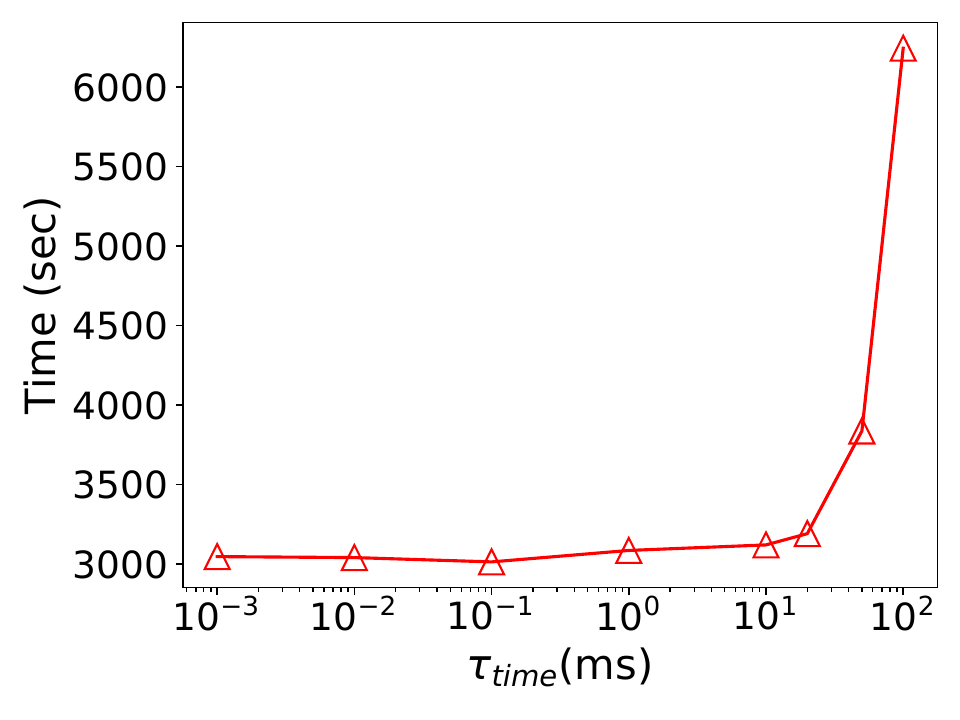}}
    \caption{The Running Time (sec) of Parallel Ours with Different $\tau_{time}$ on Five Large Datasets}
    \label{fig:different-tau}
\end{figure*}

\subsection{Proof of Theorem~\ref{reduction2}}\label{app:th11}

\begin{proof}
Assume that $u_1$, $u_2\in N_{G_i}(v_i)$ co-occur in a $k$-plex $P$ with $|P|\geq q$. Let us define $P^+=P_S\cup\{u_1, u_2\}$, then $P^+\subseteq P$. As the proof of Theorem~\ref{reduction1} has shown, we have $|P_S|\leq k$, so $|P^+|\leq k+2$.

Applying Eq~(\ref{eq:ub_2sup}) in Lemma~\ref{lemma::bound1} with $P=P^+$, $u=u_1$ and $v=u_2$, we require
$$|P^+|+\text{sup}_{P^+}(u_1)+\text{sup}_{P^+}(u_2)+|N_{C_S^-}(u_1)\cap N_{C_S^-}(u_2)|\ge q,$$
or equivalently (recall that $|P^+|\leq k+2$),
$$|N_{C_S^-}(u_1)\cap N_{C_S^-}(u_2)|\ge q - (k+2) - \text{sup}_{P^+}(u_1) - \text{sup}_{P^+}(u_2).$$

If $(u_1,u_2)\in E_i$, then $\text{sup}_{P^+}(u_1)\leq k-1$ (resp.\ $\text{sup}_{P^+}(u_2)\leq k-1$), since $u_1$ (resp.\ $u_2$) is a non-neighbor of itself in $P^+$. Thus,
\begin{equation*}
|N_{C_S^-}(u_1)\cap N_{C_S^-}(u_2)|\ge q-(k+2)-2\cdot (k-1)=q-3k.
\end{equation*}

While if $(v_1,v_2)\notin E_i$, then $\text{sup}_{P^+}(u_1)\leq k-2$ (resp.\ $\text{sup}_{P^+}(u_2)\leq k-2$), since $u_1$ (resp.\ $u_2$) is a non-neighbor of $u_1,u_2\in P^+$. Thus,
\begin{eqnarray*}
|N_{C_S^-}(u_1)\cap N_{C_S^-}(u_2)| & \ge & q-(k+2)-2\cdot\max\{k-2,0\}\\
& = & q-k-2\cdot\max\{k-1,1\}.
\end{eqnarray*}

This completes our proof of Theorem~\ref{reduction2}.\end{proof}

\section{Additional Experimental Results}
\subsection{Effect of $\tau_{time}$}\label{app:tau}
We vary $\tau_{time}$ from $10^{-3}$ to $100$ and evaluate the running time of our parallel algorithm on five large datasets with the same parameters as Table~\ref{tbl-community1}. The results are shown in Figure~\ref{fig:different-tau}, where we can see that an inappropriate parameter $\tau_{time}$ (e.g., one that is very long) can lead to a very slow performance. Note that without the timeout mechanism (as is the case in ListPlex and Ours), we are basically setting $\tau_{time}=\infty$ so the running time is expected to be longer (e.g., than when $\tau_{time}=100$) due to poor load balancing.

\subsection{Comparison of Memory Consumption}\label{app::memory}
Some illustrative results of the peak memory consumption of all three algorithms are shown in the table below. We can see that FP uses more memory on the medium-sized graph datasets, while the memory usage of ListPlex and Ours is very close. 

\renewcommand{\arraystretch}{1.3}
\begin{table}[htbp]
    \centering
	\caption{Memory Consumption of Algorithms}
        \vspace{-2mm}  
	\label{tbl-mem}
    \resizebox{0.7\columnwidth}{!}{
    \begin{tabular}{c|c|c|c|c|c}
	\toprule[1pt]
	\multirow{2}{*}{\tabincell{c}{Network}} & \multirow{2}{*}{$k$}    & \multirow{2}{*}{$q$} &  \multicolumn{3}{c}{ Memory consumption (MiB)}
	\\
	\cline{4-6}
	& &  & FP & ListPlex & Ours \\
    \hline wiki-vote & 4 & 20 & 7.04  & 19.93 & 19.72 \\
    \hline soc-epinions & 4 & 30 & 28.02 & 26.14 & 25.91 \\
    \hline email-euall & 4 & 12 & 64.27  & 34.05 & 32.76 \\
    \hline soc-pokec & 4 & 20 & 937.52  & 431.69  & 431.26 \\
    \bottomrule[1pt]
    \end{tabular}
    }
\end{table}

\begin{figure*}[t]
% \vspace{-3mm}
    \centering
    \subfigure[soc-epinions $(k = 2)$]{\includegraphics[width=0.24\textwidth]{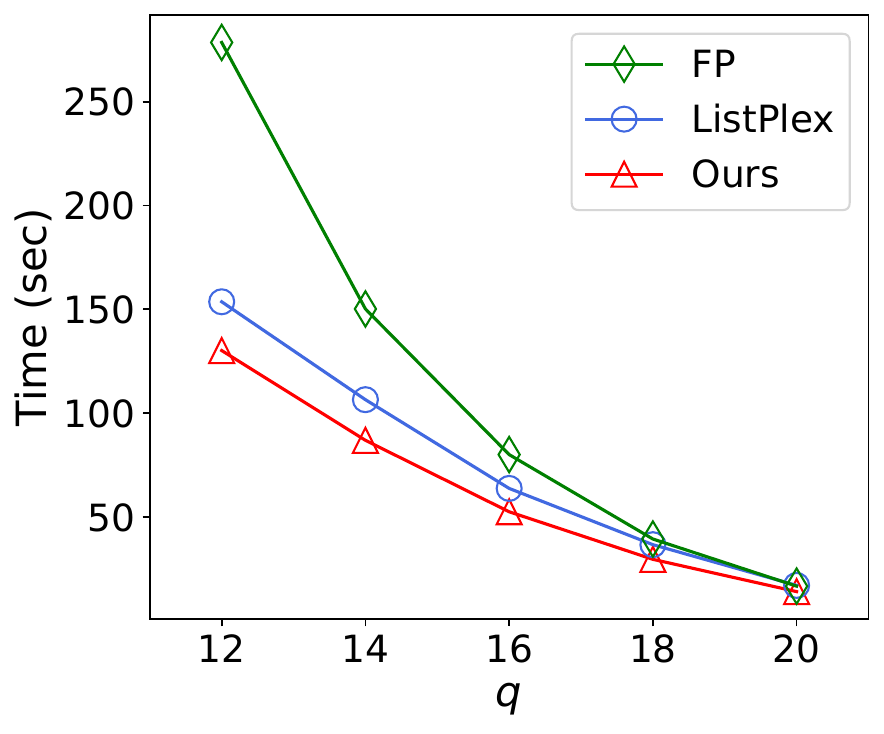}}
    \subfigure[soc-epinions $(k = 3)$]{\includegraphics[width=0.24\textwidth]{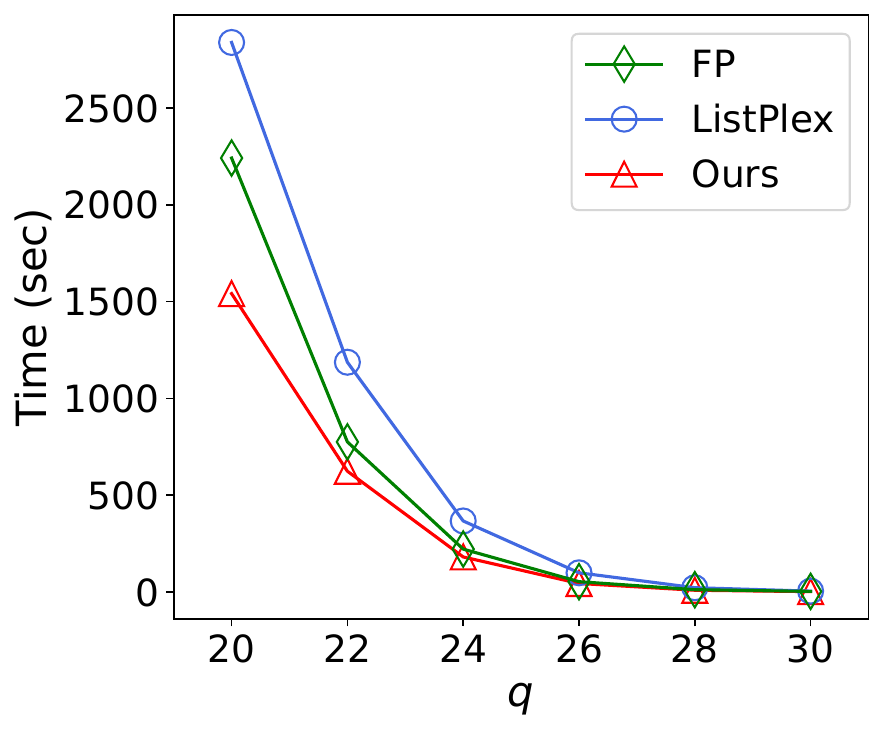}}
    \subfigure[email-euall $(k = 3)$]{\includegraphics[width=0.24\textwidth]{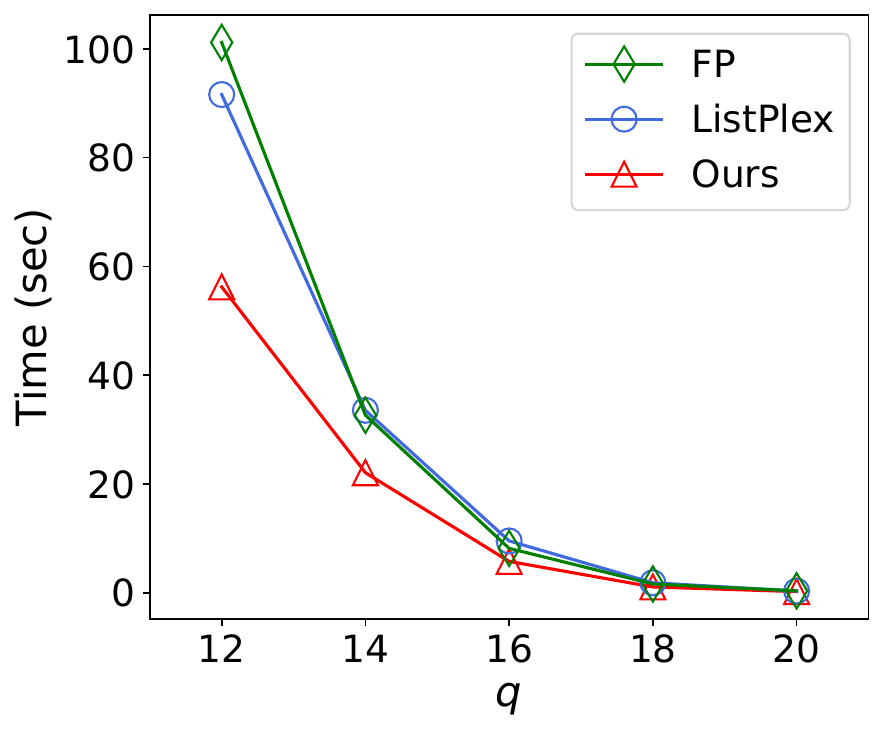}}
    \subfigure[email-euall $(k = 4)$]{\includegraphics[width=0.24\textwidth]{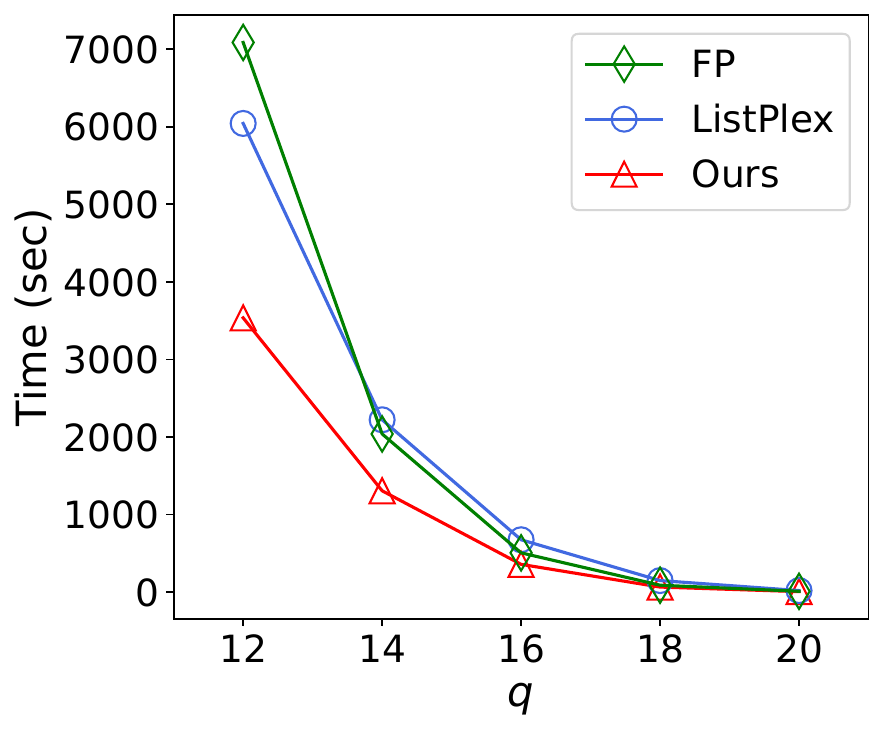}}

    \subfigure[wiki-vote $(k = 3)$]{\includegraphics[width=0.24\textwidth]{./figure/var_q_4}}
    \subfigure[wiki-vote $(k = 4)$]{\includegraphics[width=0.24\textwidth]{./figure/var_q_5}}
    \subfigure[soc-pokec $(k = 3)$]{\includegraphics[width=0.24\textwidth]{./figure/var_q_6}}
    \subfigure[soc-pokec $(k = 4)$]{\includegraphics[width=0.24\textwidth]{./figure/var_q_7}}
    \vspace{-5pt}
    \caption{Running Time (sec) of the Three Algorithms on Various Datasets and Parameters}
    \label{fig:different-q-total}
\end{figure*}

\begin{figure*}[t]
%\vspace{-2mm}
    \centering
    \subfigure[soc-epinions $(k = 2)$]{\includegraphics[width=0.22\textwidth]{./figure/var_q_ablation_0}}
    \subfigure[soc-epinions $(k = 3)$]{\includegraphics[width=0.22\textwidth]{./figure/var_q_ablation_1}}
    \subfigure[email-euall $(k = 3)$]{\includegraphics[width=0.22\textwidth]{./figure/var_q_ablation_3}}
    \subfigure[email-euall $(k = 4)$]{\includegraphics[width=0.22\textwidth]{./figure/var_q_ablation_2}}

    \subfigure[wiki-vote $(k = 3)$]{\includegraphics[width=0.22\textwidth]{./figure/var_q_ablation_4}}
    \subfigure[wiki-vote $(k = 4)$]{\includegraphics[width=0.22\textwidth]{./figure/var_q_ablation_5}}
    \subfigure[soc-pokec $(k = 3)$]{\includegraphics[width=0.22\textwidth]{./figure/var_q_ablation_6}}
    \subfigure[soc-pokec $(k = 4)$]{\includegraphics[width=0.22\textwidth]{./figure/var_q_ablation_7}}
    \vspace{-5pt}
    \caption{Running time (sec) of Our Basic and Optimized Algorithms on Various Datasets and Parameters}
    \label{fig:different-q-ablation-total}
\end{figure*}

\subsection{Effect of $q$}\label{app::different-q-ablation-total}
Figure~\ref{fig:different-q-total} shows how the performance of sequential algorithms changes when $q$ varies. In each subfigure, the horizontal axis is $q$, and the vertical axis is the total running time. As Figure~\ref{fig:different-q-total} shows, Ours (red line) consistently uses less time than ListPlex and FP. For example, Ours is $4 \times$ faster than ListPlex on wiki-vote when $k = 4$, $q = 20$.

\subsection{Ablation Study: Basic v.s.\ Ours}\label{app::basic}
Figure~\ref{fig:different-q-ablation-total} compares the running time between Basic and Ours as $k$ and $q$ vary. We can see that Ours is consistently faster than the basic version with different $k$ and $q$. This demonstrates the effectiveness of our pruning rules. 

\begin{comment}
\section{Quasi-clique and $k$-Plex}\label{app:relate}
In a $k$-plex, every vertex can miss $k-1$ edges, while in a $\gamma$-quasi-clique of size $n$, a vertex needs to connect to $\left\lceil \gamma(n-1) \right\rceil$ other vertices. One is on absolute value $k$, the other is on ratio gamma relative to size $n$. 

\textcolor{blue}{If $P$ is a $k$-plex with size larger than $q$, then every vertex $u \in P$ has 
$$d(u) \geq n - k = \frac{n-k}{n-1} \cdot (n-1) \geq \left(1 - \frac{k-1}{q-1}\right) (n - 1), $$
which means that $P$ must be a $\left(1 - \frac{k-1}{q-1}\right)$-quasi-clique. 
However, $k$-plex satisfies hereditary property while quasi-clique does not. 
Compared with $k$-plex enumeration, it is more expensive to enumerate quasi-cliques. }

\textcolor{blue}{We also found a previous work~\cite{pattillo2013maximum}, which provides an upper bound for the size of $\gamma$-quasi-cliques $\omega_\gamma(G)$: If a graph $G$ is connected then
\begin{equation}
\label{eq:ub}
\omega_\gamma(G) \leq \frac{\gamma+2+\sqrt{(\gamma+2)^2+8(m-n) \gamma}}{2 \gamma}, 
\end{equation}
where $m$ and $n$ are the numbers of edges and vertices. }

\textcolor{blue}{We can utilize this to prune the seed graph $G_i$ (a connected graph whose diameter is at most 2) unsatisfying~\eqref{eq:ub}. Nevertheless, our preliminary experiments show that compared with our well-designed pruning techniques, this pruning technique is too loose for exact size-constrained $k$-plex enumeration.}
\end{comment}

\end{appendix}

\end{document}